\documentclass{amsart}
\pdfoutput=1

\usepackage{amssymb}
\usepackage{amsthm}
\usepackage{amsfonts}
\usepackage{amsmath}
\usepackage{pmboxdraw}
\usepackage{verbatim} % for comment
\usepackage{graphicx}
\usepackage{color}
\usepackage[colorlinks=true, citecolor=blue, filecolor=black, linkcolor=black, urlcolor=black]{hyperref}
\usepackage{cite}
\usepackage[normalem]{ulem}
\usepackage{subcaption}
\usepackage{bbm}
\usepackage{bm}
\usepackage{mathtools}
\usepackage{todonotes}
\usepackage{kantlipsum}
\allowdisplaybreaks
\usepackage{dsfont}
\usepackage[percent]{overpic}

\usepackage{tikz}
\usetikzlibrary{patterns, positioning, arrows.meta,calc}
\usetikzlibrary{patterns,decorations.pathreplacing,decorations}

\newcommand{\RN}[1]{%
	\textup{\uppercase\expandafter{\romannumeral#1}}%
}

\def\C{\mathbb{C}}

\def\P{\mathbf{P}}
\def\R{\mathbb{R}}

\newcommand{\Ai}{\operatorname{Ai}}

\newcommand{\re}{\operatorname{Re}}

\theoremstyle{plain}
\newtheorem{thm}{Theorem}[section]

\newtheorem{cor}[thm]{Corollary}
\newtheorem{lem}[thm]{Lemma}

\newtheorem{prop}[thm]{Proposition}

\theoremstyle{remark}

\newtheorem{rem}{Remark}

\numberwithin{equation}{section}

\begin{document}

\title[Dissipative Spectral Form Factor of the Complex Elliptic Ginibre Ensemble]{Dissipative Spectral Form Factor of the Complex Elliptic Ginibre Ensemble across Various Non-Hermiticity Regimes}
%%%%%%%%%%%%%%%%%%%%%%%%%%%%% author %%%%%%%%%%%%%%%%%%%%%%%%%%%%
\author{Gernot Akemann}
 \address{Faculty of Physics, Bielefeld University, P.O. Box 100131, 33501 Bielefeld, Germany}
 \email{akemann@physik.uni-bielefeld.de}

\author{Sung-Soo Byun}
 \address{Department of Mathematical Sciences and Research Institute of Mathematics, Seoul National University, Seoul 151-747, Republic of Korea}
 \email{sungsoobyun@snu.ac.kr}

\author{Seungjoon Oh}
 \address{Department of Mathematical Sciences, Seoul National University, Seoul 151-747, Republic of Korea}
 \email{seungjoonoh@snu.ac.kr}

%%%%%%%%%%%%%%%%%%%%%%%%%%%%% author %%%%%%%%%%%%%%%%%%%%%%%%%%%%

\begin{abstract} 
We study the dissipative spectral form factor (DSFF) at complex time $T e^{i\theta}$ for the complex elliptic Ginibre ensemble with non-Hermiticity parameter $\tau \in [0,1)$. As the matrix dimension $N \to \infty$, we consider the natural scalings in both the time variable and the non-Hermiticity parameter, namely $T = O(N^\gamma)$ and $1 - \tau = O(N^{-\alpha})$.  For all regimes $\gamma \ge 0$ and $\alpha \ge 0$, we derive the precise asymptotic behaviour of both the disconnected and connected components of the DSFF. In particular, we explicitly characterise the dip--ramp--plateau structure, including the dip time and the Heisenberg time. In addition, we identify the mesoscopic regime $\alpha \in (0,1)$, which interpolates between the behaviour of the DSFF of non-Hermitian random matrices and the spectral form factor (SFF) of Hermitian ensembles. We further provide an explicit description of the phase diagram, in which the ramp exhibits quadratic, linear, or intermediate behaviour depending on the scaling parameters.  
\end{abstract}

\maketitle

%\tableofcontents
 
\section{Introduction}

The spectral statistics of quantum systems have long been shaped by two fundamental conjectures. 
The Berry--Tabor conjecture asserts that generic quantum integrable systems exhibit locally Poissonian level statistics, reflecting the absence of significant correlations between neighbouring energy levels. 
By contrast, the Bohigas--Giannoni--Schmit conjecture predicts that quantum systems whose classical counterparts are chaotic display local level statistics governed by Hermitian random matrix theory (RMT), in accordance with the relevant symmetry class.  
This paradigm of quantum chaos has been extended to open and dissipative quantum systems, where the spectrum becomes complex. In this setting, generic integrable systems are expected to exhibit two-dimensional (2D) Poisson statistics, whereas 
open, dissipative 
quantum chaotic systems display correlated spectra described by non-Hermitian RMT~\cite{GHS88}. 
In this work, we focus on the latter regime.

There are different tools available from RMT to quantify such statistics, on different scales. On a local or microscopic scale, the nearest-neighbour spacing distribution (in radial distance in 2D) is amongst the most popular, due to the Wigner surmise for RMT in 1D. Much more recently, the consecutive spacing ratio has been introduced \cite{OH2007,ABGR2013} in 1D and in \cite{SRP2020} in 2D, which is particularly convenient in comparison to data. Here, the dependence on the local scale, given by the inverse mean spectral density, drops out without so-called unfolding \cite{Ha10}. Nevertheless, little rigorous knowledge is available for this quantity so far, see however \cite{Nis2024,Bui2026} in 1D, even if a surmise is also effective in the Hermitian case in 1D \cite{ABGR2013}. 
This is despite the fact that, in principle, all local quantities should be accessible from knowledge of the local $k$-point functions. At least in the classical Hermitian ensembles of RMT the latter follow from standard asymptotics of classical orthogonal polynomials and their corresponding kernel, being given by determinantal or Pfaffian point processes \cite{Fo10}.

An entirely different tool that probes spectral correlations on all scales—microscopic, mesoscopic, and macroscopic—is the spectral form factor (SFF)~\cite{LLJP86}, defined as the Fourier transform of the two-point function. 
While spectra are typically not universal on macroscopic scales, quantum chaotic systems nonetheless exhibit the characteristic \textit{dip--ramp--plateau} profile predicted by Hermitian RMT; see~\cite{Ha10} and references therein. 
The literature on the SFF is extensive, and we refrain from attempting a comprehensive survey. 
Perhaps surprisingly, the mathematically rigorous literature on the SFF remains limited; see, for instance, \cite{Fo21,Fo21a,FKLZ24,CES23} and 
%the 
references therein, with early contributions going back to \cite{BH97}. 
This is largely due to the difficulty of obtaining rigorous asymptotics, as multiple scales arise and must be matched. 
To complete the picture, one may also consider other tools probing multiple scales, such as the number variance and related quantities~\cite[Chapter~16]{Mehta2004}, as well as linear statistics more generally.

%An entirely different tool that probes spectral correlations on all scales, micro-, meso- and macroscopic,  is the spectral form factor (SFF) \cite{LLJP86}, being given by the Fourier transform of the two-point function.  Even if on a macroscopic scale spectra are typically not universal, quantum chaotic systems typically show the characteristic dip–ramp–plateau profile following from Hermitian RMT, see \cite{Ha10} and references therein. There is a huge literature on the SFF, and it is very difficult to give credit and to list all applications. Perhaps surprisingly, there is little mathematically rigorous literature on the SFF, see \cite{Fo21,Fo21a} and references therein, going back to \cite{BH97}. The reason being that it is much more difficult to provide rigorous asymptotics in this case, as different scales occur, that have to be matched. To complete the picture, also other tools for multiple scales are available, such as the number variance and others \cite[Chap. 16]{Mehta2004}, or linear statistics in general. 

In this work, we focus on the recently introduced dissipative spectral form factor (DSFF)~\cite{LPC21}, which has been found to be applicable to a large variety of open, dissipative quantum chaotic systems~\cite{LYPC24}. 
The DSFF is defined as a two-variable Fourier transform of the two-point function of complex eigenvalues in the appropriate non-Hermitian symmetry class. 
It also displays a dip--ramp--plateau behaviour; however, the ramp is quadratic instead.  
The universality of the results in~\cite{LPC21} has been established for real and complex non-Hermitian random matrices with i.i.d.\ entries~\cite{CG24}. 
Note that different proposals have also been made, e.g.\ the dissipative form factor~\cite{Can2019}, a one-parameter Fourier transform of the two-point function of the Liouvillian (rather than the Hamiltonian), and a fidelity-type deformation~\cite{Aurelia23}.

Given that the study of open dissipative systems in comparison with non-Hermitian RMT has become very popular, see~\cite[Section~6.2]{BF25} for a recent review of quantum chaos from that perspective, our goal is to derive mathematically rigorous asymptotics of the DSFF in 2D, and its interpolation to the SFF in 1D proposed in~\cite{SKSK24}. 
For that purpose, we consider the simplest ensemble without further symmetries, introduced by Ginibre~\cite{Ginibre}, with complex Gaussian matrix entries (GinUE). 
Its one-parameter-dependent extension by Girko~\cite{Gir86} and Sommers et al.~\cite{SCSS88}, namely the elliptic (or Ginibre--Girko) ensemble, denoted by eGinUE, furthermore allows one to interpolate between the strongly non-Hermitian Ginibre case and the Hermitian limit given by the Gaussian Unitary Ensemble (GUE), thereby recovering the SFF in 1D. 
Of particular importance here is the weak non-Hermiticity regime introduced by Fyodorov, Khoruzhenko, and Sommers~\cite{FKS97a,FKS97}.  
Previous heuristic results on this interpolation between DSFF and SFF exist in~\cite{SKSK24}. 
Apart from making these results mathematically rigorous, we also uncover several different scaling regimes for the dip, ramp, and plateau regions, in analogy with recently discovered weak non-Hermiticity regimes for the number variance in the eGinUE~\cite{ADM24}.

%\\

%The remainder of this work is organised as follows. In the remained of this section we will first introduce the eGinUE and define the DSFF. We will then summarise the different scaling regimes that we find (see Fig. 1) and propose several theorems for the according limits of the DSFF for the dip-ramp regime, including its connected and disconnected part. Section 2 is presenting the finite-$N$ calculation, based on the determinantal point process of the eGInUE in terms of planary orthogonal Hermite polynomials. The then allows us to do the asymptotic analysis, where one of the difficulty is the connection of the various scaling regimes.  
%\\ 

%\todo[inline]{TODO: Higher point SFF (cf. 2$\alpha$-point SFF in \cite[eq (1), (2)]{LPC21})}
%\todo[inline]{TODO: Introduce SFF and DSFF before the model}

\subsection{Elliptic Ginibre ensemble and DSFF}

Let us now describe more precisely the model under consideration. 
We consider the eGinUE (see \cite[Section~2.3]{BF25} and references therein) defined by
\begin{equation} \label{def of eGinibre} 
X:= \frac{\sqrt{1+\tau}}{2}(G+G^*)+\frac{\sqrt{1-\tau}}{2}(G-G^*), \qquad \tau \in [0,1), 
\end{equation} 
where $G$ is a complex Ginibre matrix, i.e. an \(N \times N\) matrix with i.i.d.\ complex Gaussian entries of mean zero and variance \(1/N\). 
Let $\{z_j\}_{j=1}^N$ denote the (complex) eigenvalues of $X$. 
It is well known that, as $N \to \infty$, the empirical spectral measure $\frac{1}{N}\sum_{j=1}^N \delta_{z_j}$ converges in probability to the elliptic law
\begin{equation} \label{def of ellipitc law}
\frac{1}{1-\tau^2} \mathbbm{1}_{ S }(z)\,dA(z), \qquad  S :=  \Big\{ (x,y) \in \R^2: \Big( \frac{x}{1+\tau}\Big)^2 + \Big( \frac{y}{1-\tau} \Big)^2 \le 1 \Big\}.  
\end{equation} 
Here, $dA(z)=d^2z/\pi$ denotes the normalised area measure.

By definition, the $k$-point correlation function $R_{N,k}$ is characterised as follows: for a suitable test function $f:\C^k \to \C$,
\begin{equation} \label{def of RNk test}
\Big\langle \sum_{\substack{1\le i_1,\dots,i_k\le N\\ \text{distinct}}}  
f(z_{i_1},\dots,z_{i_k}) \Big\rangle
=
\int_{\C^k} f(z_1,\dots,z_k)\, R_{N,k}(z_1,\dots,z_k)\, dA(z_1)\cdots dA(z_k),
\end{equation}
where $\langle \cdot \rangle$ denotes the ensemble average.  
For the eGinUE, the eigenvalues form a determinantal point process; see again \cite[Section~2.3]{BF25}. In other words, the $k$-point correlation functions $R_{N,k}$ admit the representation 
\begin{equation} \label{def of RNk det KN}
    R_{N,k}(z_1,\dots,z_k) = \det\big[ K_N(z_j,z_l) \big]_{j,l=1}^k,
\end{equation}
where $K_N$ is a correlation kernel 
%satisfying the 
which is Hermitian\footnote{Such a choice can always be made due to the invariance under multiplying with cocycles.}, 
%symmetry 
$K_N(z,w)=\overline{K_N(w,z)}$; see \eqref{def of KN gen OP} for its representation in terms of the associated orthogonal polynomials. 
In particular, the one- and two-point correlation functions, which will play a central role in our analysis, are given by
\begin{equation} \label{def of RN1 2 KN}
R_{N,1}(z)= K_N(z,z), 
\qquad 
R_{N,2}(z,w)= R_{N,1}(z) R_{N,1}(w) - | K_N(z,w) |^2.
\end{equation}

To describe the DSFF for non-Hermitian random matrices, we introduce a \textit{complex time variable}
\begin{equation} \label{def of complex time variables}
    t+is = Te^{i\theta}, 
\end{equation}  
where $t,s,T \ge 0$ and $\theta \in [0,2\pi)$. 
It will be convenient for our purposes to use both 
%the 
Cartesian coordinates $(t,s)$ and 
%the 
polar coordinates $(T,\theta)$.

Let $z_j = x_j + i y_j$ ($x_j,y_j \in \R$) denote the complex eigenvalues of a non-Hermitian random matrix. 
We define the associated complex Fourier transform by
\begin{equation}
Z_N \equiv Z_N(t,s) := \sum_{j=1}^N e^{ i t x_j + i s y_j }.
\end{equation}
The DSFF $\mathcal{F}_N$ is then defined as
\begin{equation} \label{def of DSFF}
 \mathcal{F}_N  := \big\langle |Z_N|^2 \big\rangle  = \bigg\langle \sum_{j,k=1}^N   e^{ it(x_j-x_k) +is (y_j-y_k) } \bigg\rangle .
\end{equation} 
In principle, general higher powers 
$\big\langle |Z_N|^\nu \big\rangle $, $\nu>0$, 
than quadratic can also be defined; see \cite[Eqs. (1), (2)]{LPC21}. When it is necessary to emphasise the dependence on parameters, we write 
\begin{equation}
 \mathcal{F}_N \equiv   \mathcal{F}_N (t,s) \equiv \mathcal{F}_N(T,\theta) 
\end{equation}
Using \eqref{def of RNk test}, the DSFF can be expressed in terms of two-point statistics. Separating the diagonal and off-diagonal contributions, we have 
\begin{equation} \label{def of DSFF v2}
 \mathcal{F}_N     = N+ \bigg\langle \sum_{1 \le j \not = k \le N}   e^{ it(x_j-x_k) +is (y_j-y_k) } \bigg\rangle = N+ \int_{\C^2} e^{ i t (x - x') + i s (y - y') } R_{N,2}(z,z')\, dA(z)\, dA(z'). 
\end{equation} 

The characteristic dip--ramp--plateau behaviour of the DSFF arises from the interplay between two contributions operating on distinct scales. 
This naturally leads to a decomposition of the DSFF into disconnected and connected parts. 
More precisely, we write 
\begin{equation} \label{def of DSFF sum of disconn conn}
    \mathcal{F}_N  = \mathcal{F}_N^{ \rm (d)} + \mathcal{F}_N^{ \rm (c)} , 
\end{equation}
where
\begin{equation} \label{def of DSFF disc, conn}
 \mathcal{F}_N^{ \rm (d) }  := |\big\langle Z_N \big\rangle|^2, \qquad  \mathcal{F}_N^{ \rm (c) }:= \big\langle |Z_N-\big\langle Z_N \big\rangle|^2 \big\rangle. 
\end{equation}
These are referred to as the \textit{disconnected} and \textit{connected} parts of the DSFF, respectively. 
Using the one- and two-point correlation functions \eqref{def of RN1 2 KN}, these terms can be written more explicitly as
\begin{align} \label{def of FN disconn in terms of RN1 int}
 \mathcal{F}_N^{ \rm (d) } & = \bigg| \sum_{j=1}^N \int_{ \C } e^{it x_j + is y_j} R_{N,1}(z)\,dA(z) \bigg|^2, 
 \\
 \mathcal{F}_N^{ \rm (c) } & =   N + \int_{\C^2} e^{i t (x - x') + i s (y - y')}  \Big( R_{N,2}(z,z') - R_{N,1}(z) R_{N,1}(z') \Big) \,dA(z_1)\,dA(z_2). \label{def of FN conn in terms of RN2 int}
\end{align}
In particular, for the determinantal point process, it follows from \eqref{def of RN1 2 KN} that 
\begin{equation}
 \mathcal{F}_N^{ \rm (c) }=  N - \int_{ \C^2 }  e^{it(x_1-x_2)+is(y_1-y_2)} |K_N(z_1,z_2)|^2 \,dA(z_1)\,dA(z_2) .  
\end{equation}
This representation clarifies the terminology: the disconnected part arises from the product of one-point statistics, while the connected part captures genuine correlations beyond independence. 
We remark that in some parts of the literature, the term $N$ in the above expression for $\mathcal{F}_N^{\rm (c)}$ is separated and referred to as the  \textit{contact} term.
%\todo[inline]{TODO: Eq \eqref{def of DSFF disc, conn} remark on the contact part (relate to the double sum formula), two-point function with delta function}

\subsection{Various non-Hermiticity regimes}

\begin{figure}[t]

\begin{comment} 
\centering
\begin{tikzpicture}[scale=5]

% Axes
\draw[->] (0,0) -- (1.2,0) node[right] {$\alpha$};
\draw[->] (0,0) -- (0,1.1) node[above] {$\gamma$};

% Grid labels
\draw (0,-0.02) node[below] {$0$};
\draw (0.5,-0.02) node[below] {$\tfrac{1}{2}$};
\draw (1,-0.02) node[below] {$1$};

\draw (-0.03,0.4) node[left] {$\tfrac{2}{5}$};
\draw (-0.03,0.5) node[left] {$\tfrac{1}{2}$};
\draw (-0.03,1) node[left] {$1$};

% Lines
\draw[thick] (0,0.5) -- (1,1); % Heisenberg time
\draw[thick] (0,0.4) -- (0.5,0.5); % Thouless time
\draw[thick] (0.5,0.5) -- (1,0.5); % Thouless time
\draw[dashed] (0,1) -- (1,1);    % dashed top line
\draw[dashed] (1,0) -- (1,1);    % dashed right line
\draw[dashed] (0,0.5) -- (0.5,0.5); % dashed midline
\draw[dashed] (0.5,0) -- (0.5,0.5); % dashed midline
\draw[dashed] (0,0) -- (1,1); % dashed diagonal line

% Shaded regions
\fill[pattern=north east lines, pattern color=orange!80] (0,0.5) -- (1,1) -- (0,0) -- cycle;
\fill[pattern=north west lines, pattern color=blue!70] (0,0) -- (1,0) -- (1,1) -- cycle;

% Labels
\node[right] at (1,1) {$T_{\textup{H}}$};
\node[right] at (1,0.5) {$T_{\textup{Th}}$};

\end{tikzpicture}

\end{comment}

\begin{subfigure}{0.48\textwidth}

\begin{tikzpicture}[scale=4.5]
\usetikzlibrary{patterns}

% parameters
\def\xmax{1.4}
\def\ymax{1.1}

% Axes
\draw[->] (0,0) -- (\xmax+0.15,0) node[right] {$\alpha$};
\draw[->] (0,0) -- (0,\ymax+0.05) node[above] {$\gamma$};

% Tick labels (no tick at 1.5)
\node[below] at (0,0) {$0$};
\node[below] at (0.5,0) {$1/2$};
\node[below] at (1,0) {$1$};

\node[left] at (0,0.4) {$2/5$};
\node[left] at (0,0.5) {$1/2$};
\node[left] at (0,1) {$1$};

\fill[fill=purple!12]
  (0,0) -- (0.5,0) -- (0.5,0.5) -- (0,0.4) -- cycle;
% part 2: alpha in [0.5,xmax], under gamma=1/2
\fill[fill=purple!12]
  (0.5,0) rectangle (\xmax,0.5);

% --- Ramp (solid light orange) ---
% part 1: alpha in [0,0.5], between T_Th and T_H
\fill[fill=green!14]
  (0,0.4) -- (0.5,0.5) -- (0.5,0.75) -- (0,0.5) -- cycle;
% part 2: alpha in [0.5,1], between gamma=1/2 and slanted T_H
\fill[fill=green!14]
  (0.5,0.5) -- (1,0.5) -- (1,1) -- (0.5,0.75) -- cycle;
% part 3: alpha in [1,xmax], between gamma=1/2 and gamma=1
\fill[fill=green!14]
  (1,0.5) rectangle (\xmax,1);

% --- Plateau (very light grey) ---
% part 1: alpha in [0,1], above slanted T_H up to ymax
\fill[black!8]
  (0,0.5) -- (1,1) -- (1,\ymax) -- (0,\ymax) -- cycle;
% part 2: alpha in [1,xmax], above gamma=1 up to ymax
\fill[black!8]
  (1,1) rectangle (\xmax,\ymax);

% ------------------------------------------------------------
% Boundary curves
% ------------------------------------------------------------
% T_H
\draw[thick] (0,0.5) -- (1,1);
\draw[thick] (1,1) -- (\xmax,1);

% T_Th
\draw[thick] (0,0.4) -- (0.5,0.5);
\draw[thick] (0.5,0.5) -- (1,0.5);
\draw[thick] (1,0.5) -- (\xmax,0.5);

% Optional dashed guides (similar to your original style)
\draw[dashed] (0,1) -- (\xmax,1);
\draw[dashed] (\xmax,0) -- (\xmax,\ymax);
\draw[dashed] (0,0.5) -- (0.5,0.5);
\draw[dashed] (0.5,0) -- (0.5,0.5);
\draw[dashed] (0,0) -- (1,1);
\draw[dashed] (1,0) -- (1,1);

% Make the boundary at alpha=1 visible (helps the eye)
%\draw[thin] (1,0) -- (1,\ymax);

% ------------------------------------------------------------
% Labels for the curves
% ------------------------------------------------------------
\node[right] at (\xmax,1) {$T_{\textup{H}}$};
\node[right] at (\xmax,0.5) {$T_{\textup{Th}}$};

\node[black!70] at (0.75,0.25) {\small\textsf{\textbf{Dip}}};
\node[black!70] at (0.85,0.65) {\small\textsf{\textbf{Ramp}}};
\node[black!70] at (0.35,0.85) {\small\textsf{\textbf{Plateau}}};
\end{tikzpicture}
\subcaption{Dip--ramp--plateau structure}
	\end{subfigure}	
		\begin{subfigure}{0.48\textwidth}
\begin{tikzpicture}[scale=4.5]
\usetikzlibrary{patterns}

\def\xmax{1.4}
\def\ymax{1.1}

\draw[->] (0,0) -- (\xmax+0.15,0) node[right] {$\alpha$};
\draw[->] (0,0) -- (0,\ymax+0.05) node[above] {$\gamma$};

\node[below] at (0,0) {$0$};
\node[below] at (0.5,0) {$1/2$};
\node[below] at (1,0) {$1$};

\node[left] at (0,0.4) {$2/5$};
\node[left] at (0,0.5) {$1/2$};
\node[left] at (0,1) {$1$};

\fill[fill=yellow!14!]
  (0,0.5) -- (1,1) -- (0,0) -- cycle;

% GUE
\fill[fill=blue!10!]
  (0,0) -- (1,0) -- (1,1) -- cycle;

\fill[fill=blue!10!]
  (1,0) rectangle (\xmax,1);

% T_H
\draw[thick] (0,0.5) -- (1,1);
\draw[thick] (1,1) -- (\xmax,1);

% T_Th
\draw[dashed] (0,0.4) -- (0.5,0.5);
\draw[dashed] (0.5,0.5) -- (1,0.5);
\draw[dashed] (1,0.5) -- (\xmax,0.5);
 
\draw[dashed] (0,1) -- (\xmax,1);
\draw[dashed] (\xmax,0) -- (\xmax,\ymax);
\draw[dashed] (0,0.5) -- (0.5,0.5);
\draw[dashed] (0.5,0) -- (0.5,0.5);
\draw[thick] (0,0) -- (1,1);
\draw[dashed] (1,0) -- (1,1);

\node[right] at (\xmax,1) {$T_{\textup{H}}$};
\node[right] at (\xmax,0.5) {$T_{\textup{Th}}$};

\node[black!70] at (0.75,0.4) {\small\textsf{\textbf{GUE}}};
\node[black!70] at (0.35,0.58) {\small\textsf{\textbf{GinUE}}}; 
\end{tikzpicture}
\subcaption{Universality crossover}
	\end{subfigure}

\caption{Phase diagrams of the DSFF in the $(\alpha,\gamma)$--plane, where the non-Hermiticity and time scalings are given by $1-\tau=N^{-\alpha}$ and $T=N^{\gamma}\mathsf{T}$, with $\mathsf{T}>0$ fixed. 
The Thouless time $T_{\mathrm{Th}}$ and the 
generalised 
Heisenberg time $T_{\mathrm{H}}$ are indicated on the figure. 
\textup{(A)} Dip--ramp--plateau structure of the DSFF across different time scales. 
\textup{(B)} Universality crossover of the DSFF, showing GUE-type behaviour for $\gamma\le\alpha$ and GinUE-type behaviour for $\alpha<\gamma<\frac{1+\alpha}{2}$. This agrees with the regimes identified in \cite[Figure 1]{ADM24} for the number variance.}
\label{fig:alpha_gamma}
\end{figure}

In analysing the asymptotic behaviour of the DSFF for the eGinUE, it is necessary to consider several scaling regimes with respect to the system size $N$. 
In particular, two parameters must be scaled appropriately: the complex time parameter introduced in \eqref{def of complex time variables}, and the non-Hermiticity parameter $\tau$. 

The statistical behaviour of the DSFF under the various scalings constitutes one of the main 
%observables 
results of this work. 
A summary of the resulting regimes is presented in Figure~\ref{fig:alpha_gamma}, and will be discussed in detail in the next section.

From the elliptic law \eqref{def of ellipitc law}, one observes that the typical spacing between eigenvalues scales as
\begin{equation} \label{def of typical ev scaling}
\begin{cases}
\sqrt{ \frac{1-\tau^2}{N} } &\textup{if } 1-\tau \gg \frac{1}{N},
\smallskip 
\\
\frac{1}{N} &\textup{if } 1-\tau= O(\frac{1}{N}).
\end{cases}
\end{equation}
This reflects a crossover from two-dimensional bulk behaviour to an effectively one-dimensional regime as $\tau \to 1$. 

In the study of the eGinUE, it is essential to distinguish different regimes of non-Hermiticity according to the scaling of the parameter $\tau$ with the matrix size $N$. Equivalently, these regimes are characterised by the typical fluctuation scale of the imaginary parts of the eigenvalues. We consider the general scaling
\begin{equation} \label{def of tau scaling general}
\tau=  1- \frac{\kappa}{N^{\alpha}}, \qquad \kappa>0, \, \alpha \ge 0. 
\end{equation}
Depending on the value of $\alpha$, the following regimes arise.
 
\begin{itemize}
    \item \textbf{Strong non-Hermiticity.} 
    The parameter $\tau \in [0,1)$ is kept fixed as $N \to \infty$, corresponding to $\alpha=0$ in \eqref{def of tau scaling general}. In this regime, the eigenvalues form a genuinely two-dimensional cloud in the complex plane, and non-Hermitian effects persist 
    %at 
    on the macroscopic scale.
    \smallskip
    \item \textbf{Mesoscopic non-Hermiticity.} This intermediate regime corresponds to $\alpha \in (0,1)$ in \eqref{def of tau scaling general}. The 
    %imaginary 
    fluctuations of the imaginary parts occur on a mesoscopic scale, interpolating between the strong and weak non-Hermiticity regimes. We refer to~\cite{ADM24} for a recent analysis in this setting.
    \smallskip
    \item \textbf{Weak non-Hermiticity.}  
    The parameter $\tau$ approaches the Hermitian limit on the scale
    $\alpha=1$ in \eqref{def of tau scaling general}. In this regime, the eigenvalues concentrate near the real
    axis, while non-Hermitian effects persist 
    %at 
    on the microscopic scale.
    Consequently, the local statistics interpolate between those of
    %Hermitian ensembles and the Ginibre ensemble
    the GUE and the GinUE; 
    see~\cite{FKS97,FKS98}. More generally, one may consider $\alpha \ge 1$. The case $\alpha=1$ is critical, in the sense that the typical fluctuations of the
    imaginary parts of the eigenvalues are of the same order as the mean
    eigenvalue spacing. When $\alpha>1$, the model lies closer to the
    Hermitian regime, as the 
    %imaginary 
    fluctuations of the imaginary parts become negligible
    compared to the eigenvalue spacing.
\end{itemize}

We note that the regimes of non-Hermiticity discussed above are characterised in terms of bulk statistics. 
In contrast, for critical edge statistics, the relevant scaling is governed by the critical exponent $\alpha = 1/3$ in the microscopic limit; see \cite{Be10,GNV02,AB23}. 

Notice that the reciprocal of the typical eigenvalue spacing \eqref{def of typical ev scaling} scales as
\begin{equation}
\begin{cases} \label{def of reciprocal spacing}
N^{1/2} &\textup{for strong non-Hermiticity},
\smallskip 
\\
N^{ (1+\alpha)/2 }  &\textup{for mesoscopic non-Hermiticity},
\smallskip 
\\
N &\textup{for weak non-Hermiticity}.
\end{cases}
\end{equation}
As we will see below, this quantity naturally identifies the corresponding 
generalised \emph{Heisenberg time} $T_\textup{H}$ in each regime
as the scale, on which universal microscopic fluctuations set in.

\section{Main results}

In this section, we present our main results. 
We study the large-$N$ asymptotics of the DSFF across different regimes of non-Hermiticity and over a range of time scales. 
In all cases, we observe a transition in the dominant contribution between the disconnected and connected parts of the DSFF.

More precisely, we analyse the DSFF on general time scales of the form
\begin{equation} \label{def of time scaling sf T t}
T = N^{\gamma} \, \mathsf{T}, \qquad t= N^{\gamma}\, \mathsf{t}, \qquad \gamma \ge 0,
\end{equation}
where $\mathsf{T}, \mathsf{t}>0$ are independent of $N$. 
Within this framework, we identify the \emph{Thouless time} $T_{\mathrm{Th}}$, at which the DSFF exhibits the dip behaviour, and the 
%\emph
generalised 
{Heisenberg time} $T_{\mathrm{H}}$, at which the DSFF reaches its plateau.
Here, $T_{\mathrm{Th}}$ is the time scale needed to explore the full system size, when the effects of quantum localisation will cutoff classical diffusion \cite{Ha10}.

We recall that the Bessel function of the first kind is defined by 
\begin{equation} \label{def of Bessel J_nu}
J_\nu(z)= \sum_{k=0}^\infty \frac{ (-1)^k (z/2)^{2k+\nu} }{ k!\,\Gamma(\nu+k+1) };
\end{equation}
see e.g. \cite[Chapter 10]{NIST}. 
Our first result concerns the regime of strong non-Hermiticity, in which the parameter $\tau \in [0,1)$ is held fixed, 
corresponding to $\alpha=0$.
Recall that the disconnected component $\mathcal{F}_{N}^{(\textup{d})}$ and the connected component $\mathcal{F}_{N}^{(\textup{c})}$ of the DSFF are defined in \eqref{def of DSFF disc, conn}.

In the following theorem, we introduce the superscripts $(\mathrm{d})$ and $(\mathrm{c})$ to denote the disconnected and connected parts, respectively, and the subscript ``$\mathrm{s}$'' to indicate the strong non-Hermiticity regime. 

\begin{thm}[\textbf{Asymptotics of the disconnected/connected DSFF at strong non-Hermitcity}] \label{thm strong nH}
Let $\tau \in [0,1)$ be fixed 
corresponding to $\alpha=0$, 
and write 
\begin{equation} \label{def of eta}
\eta \equiv \eta(\tau,\theta) := \frac{\cos\theta(1+\tau)}{2}+\frac{i\sin\theta(1-\tau)}{2}. 
\end{equation}
Then we have the following. 
\begin{itemize}
    \item[(i)] \textup{\textbf{(Dip-ramp regime)}} Let $\gamma\in[0,1/2]$. Then as $N \to \infty$, we have 
    \begin{align} \label{asymp of FN d/c snH} \begin{split} 
    \mathcal{F}_{N}^{(\textup{d})}(N^{\gamma}\mathsf{T},\theta) &= N^{2-3\gamma} \Big(\mathcal{F}_{\textup{s}}^{(\textup{d})}(\mathsf{T},\theta) +O(N^{-\epsilon_1})\Big),
    \\
    \mathcal{F}_{N}^{(\textup{c})}(N^{\gamma}\mathsf{T},\theta) & = N^{2\gamma} \Big(\mathcal{F}_{\textup{s}}^{(\textup{c})}(\mathsf{T},\theta) +O(N^{-\epsilon_2})\Big),
    \end{split}
    \end{align} 
    where  
    \begin{equation} \label{strong nH dis}
    \mathcal{F}_{\textup{s}}^{(\textup{d})}(\mathsf{T},\theta) := \begin{cases} \displaystyle
        \frac{J_{1} (2\vert\eta \mathsf{T}\vert)^2}{\vert\eta \mathsf{T}\vert^2} & \textup{for }  \gamma=0, 
        \smallskip 
        \\
        \displaystyle
        \frac{1}{2\pi |\eta \mathsf{T}|^3} \Big(1-\sin(4N^{\gamma}|\eta \mathsf{T}|)\Big) & \textup{for } \gamma\in(0,\frac12), 
        \smallskip 
        \\ 
        \displaystyle
        \exp\Big(-\frac{1-\tau^2}{4}\mathsf{T}^2\Big) \frac{1}{2\pi |\eta \mathsf{T}|^3} \Big(1-\sin(4N^{\gamma}|\eta \mathsf{T}|)\Big) & \textup{for } \gamma=\frac12  
    \end{cases}
\end{equation}
and
\begin{equation} \label{strong nH conn}
    \mathcal{F}_{\textup{s}}^{(\textup{c})}(\mathsf{T},\theta) := \begin{cases} \displaystyle
        \frac{1-\tau^2}{4}\mathsf{T}^2 + |\eta\mathsf{T}|^2\Big( 2J_0(2\vert\eta \mathsf{T}\vert)^2 + 2J_1(2\vert\eta \mathsf{T}\vert)^2 - \frac{J_0(2\vert\eta \mathsf{T}\vert)J_1(2\vert\eta \mathsf{T}\vert)}{|\eta\mathsf{T}|} \Big) & \textup{for } \gamma=0, 
        \smallskip 
        \\
        \displaystyle
        \frac{1-\tau^2}{4} \mathsf{T}^2 & \textup{for } \gamma\in(0,\frac12), 
        \smallskip 
        \\ \displaystyle
        1-\exp\Big(-\frac{1-\tau^2}{4}\mathsf{T}^2\Big) & \textup{for } \gamma=\frac12.
    \end{cases}
\end{equation}
In \eqref{asymp of FN d/c snH}, the constants $\epsilon_1 , \epsilon_2$ are given by
\begin{equation} \label{def of epsilon 1,2}
    \epsilon_1 = \begin{cases} \displaystyle
        2-3\gamma & \textup{for } \gamma\in[0,\frac25), 
        \smallskip 
        \\ \displaystyle
        2\gamma & \textup{for } \gamma\in[\frac25,\frac12], 
    \end{cases} \qquad
    \epsilon_2 = \begin{cases} \displaystyle
        1 & \textup{for } \gamma=0,
        \smallskip 
        \\ \displaystyle
        \gamma & \textup{for } \gamma\in(0,\frac13), 
        \smallskip 
        \\ \displaystyle
        1-2\gamma & \textup{for } \gamma\in[\frac13,\frac12), 
        \smallskip 
        \\ \displaystyle
        1/2 & \textup{for } \gamma=\frac12. 
    \end{cases}
\end{equation} %\eqref{def of epsilon 1,2}. 
%\smallskip 
\item[(ii)] \textup{\textbf{(Plateau regime)}} Let $\gamma > 1/2$. Then as $N \to \infty$, we have 
\begin{equation} \label{asymp of FN p snH}
\begin{split}
    & \mathcal{F}_{N}^{(\textup{d})}(N^{\gamma}\mathsf{T},\theta) = \exp\Big(-N^{2\gamma-1} \Phi_{\textup{s}}(\mathsf{T},\theta)+O(\mathsf{e}_1(N))\Big), \\
    & \mathcal{F}_{N}^{(\textup{c})}(N^{\gamma}\mathsf{T},\theta) = N-\exp\Big(-N^{2\gamma-1} \Phi_{\textup{s}}(\mathsf{T},\theta)+O(\mathsf{e}_1(N))\Big),
\end{split}
\end{equation}
where
\begin{align}
\Phi_{\textup{s}}(\mathsf{T},\theta) & := \begin{cases} \displaystyle
    \frac{1-\tau^2}{4}\mathsf{T}^2 &\textup{for } \gamma\in (\frac12,1) \textup{ or } \gamma=1, \, |\eta\mathsf{T}|<2, 
    \smallskip 
    \\  
    \displaystyle
    \frac{1-\tau^2}{4}\mathsf{T}^2+|\eta\mathsf{T}|\sqrt{|\eta\mathsf{T}|^2-4}-4\,\textup{arccosh}\frac{|\eta\mathsf{T}|}{2} &\textup{for } \gamma=1, \, |\eta\mathsf{T}|>2, 
    \smallskip 
    \\  
    \displaystyle
    \frac{1-\tau^2}{4}\mathsf{T}^2+|\eta \mathsf{T}|^2 &\textup{for } \gamma>1.
    \end{cases} 
\end{align}
Here, the error term $\mathsf{e}_1(N)$ is given by \begin{equation} \label{def of e1}
    \mathsf{e}_1(N) = \begin{cases} \displaystyle
        \log N & \textup{for } \gamma\in(\frac12,1], \smallskip \\ \displaystyle
        N \log N & \textup{for } \gamma>1.
    \end{cases}
\end{equation} 
\end{itemize}  
\end{thm}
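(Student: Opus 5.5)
We first reduce the DSFF to exact finite-$N$ formulas, which then serve as the basis for all asymptotics. Write the eGinUE kernel through the planar Hermite polynomials,
\[
K_N(z,w)=\frac{1}{1-\tau^2}\,e^{-\frac{N}{2(1-\tau^2)}(|z|^2+|w|^2-\tau\re(z^2+w^2))}\sum_{k=0}^{N-1}\frac{(\tau/2)^k}{k!}H_k\Big(\sqrt{\tfrac{N}{2\tau}}z\Big)H_k\Big(\sqrt{\tfrac{N}{2\tau}}\bar w\Big)N .
\]

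\textbf{Exact formulas.} The plan has three steps.
\begin{itemize}
\item[(1)] Compute the Fourier transform of a single product $p_j(z)\overline{p_k(z)}\,e^{-\dots}$ against $e^{itx+isy}$. Completing the square in the Gaussian integral and using the generating function of Hermite polynomials, we expect
\[
\int e^{itx+isy}|p_k|^2\,dA \propto e^{-\frac{1-\tau^2}{4N}T^2}\,L_k\Big(\frac{|\eta T|^2}{N}\Big),
\]
where $\eta$ arises because $tx+sy=\re(\overline{(t+is)}z)$ and the anisotropy of the ellipse turns $t+is$ into $2\eta T$ after rescaling. This gives
\[
\langle Z_N\rangle=e^{-\frac{1-\tau^2}{4N}T^2}\,L^{(1)}_{N-1}\big(|\eta T|^2/N\big).
\]
\item[(2)] Treat the connected part the same way. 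Expanding $|K_N|^2$ as a double sum and applying the same transform produces
\[
\mathcal F_N^{(c)}=N-e^{-\frac{1-\tau^2}{2N}T^2}\sum_{j,k=0}^{N-1}\frac{j!}{k!}\,x^{k-j}\,L_j^{(k-j)}(x)^2,\qquad x=|\eta T|^2/N .
\]
Equivalently, this is $N-e^{-\frac{1-\tau^2}{2N}T^2}\times$(the GUE-type connected SFF at $x$), which is the known exact Laguerre-sum expression for the GUE SFF with Gaussian damping.
\item[(3)] For comparison, $\mathcal F_N(T)$ at $\tau=0$ has the closed form $N-e^{-T^2/(4N)}\sum\cdots$.
\end{itemize}

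\textbf{Dip--ramp regime.} With these formulas, part (i) becomes an asymptotic problem for Laguerre polynomials at argument $x=N^{2\gamma-1}|\eta\mathsf T|^2$.
\begin{itemize}
\item[(a)] For $\gamma=0$, the Mehler--Heine asymptotics
\[
L^{(1)}_{N-1}(y^2/N)\to N\,J_1(2y)/y
\]
give the $J_1^2$ term at order $N^2$.
\item[(b)] For $\gamma\in(0,1/2]$, use the Plancherel--Rotach oscillatory (Bessel-to-cosine) regime. The $J_1$ asymptotics at large argument $2N^\gamma|\eta\mathsf T|$ produce the amplitude $(\pi|\eta\mathsf T|^3N^{3\gamma})^{-1}\cos^2(\cdot)$, i.e.\ $\tfrac{1}{2\pi}(1-\sin 4N^\gamma|\eta\mathsf T|)$. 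The Gaussian prefactor only matters at $\gamma=1/2$.
\item[(c)] For the connected part, write the sum via the Christoffel--Darboux-type reduction
\[
\sum_{j,k}=\sum_{k}\big(\text{something}\big)=N-\sum(\dots),
\]
so that
\[
\mathcal F_N^{(c)}=N\big(1-e^{-\frac{1-\tau^2}{2N}T^2}\big)+e^{-\frac{1-\tau^2}{2N}T^2}\,\mathcal S_N(x),
\]
where $\mathcal S_N$ is the GUE connected SFF. The known asymptotics $\mathcal S_N\approx N^{\gamma}$ (linear ramp) are subleading to the first term, except at $\gamma=0$. There the exact Bessel combination comes from the Mehler--Heine limit of the double sum, equivalently from the sine-free Bessel identity for $\int_0^1$ of $J$ products. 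Expanding $1-e^{-cT^2/N}$ gives $\frac{1-\tau^2}{4}\mathsf T^2N^{2\gamma}$.
\item[(d)] The error exponents $\epsilon_{1},\epsilon_2$ are tracked by comparing the next-order terms: the Gaussian expansion $N^{2\gamma-1}$, the ramp $N^{-\gamma}$, and the Bessel corrections.
\end{itemize}

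\textbf{Plateau regime ($\gamma>1/2$).} Here the Gaussian factor $e^{-cN^{2\gamma-1}}$ dominates. It remains to estimate $\log|L^{(1)}_{N-1}(x)|$ at the scale $x=N^{2\gamma-1}|\eta\mathsf T|^2$.
\begin{itemize}
\item For $\gamma<1$, $x\ll N$ lies inside the oscillatory region, and only polynomial factors appear, giving an $O(\log N)$ error.
\item For $\gamma=1$, $x=|\eta\mathsf T|^2$ is compared with the Marchenko--Pastur edge $4N$ after normalising. Inside the edge we get an $O(\log N)$ error. Outside, the exponential growth is governed by the Laguerre equilibrium-measure potential. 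The exponent is the standard
\[
\tfrac{y}{2}-\int\log|y-s|\,d\mu_{MP},
\]
which evaluates to the stated $|\eta\mathsf T|\sqrt{|\eta\mathsf T|^2-4}-4\,\textup{arccosh}(|\eta\mathsf T|/2)$ combined with the $e^{-x}$-type weight.
\item For $\gamma>1$, the leading monomial $x^{N-1}/(N-1)!$ dominates. This gives $e^{-|\eta T|^2/N}$ against the natural weight, up to $N\log N$ errors.
\end{itemize}
The connected part follows from $\mathcal F_N^{(c)}=\mathcal F_N-\mathcal F_N^{(d)}$, together with the observation that $\mathcal F_N-N$ is exponentially small in this regime.

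\textbf{Main obstacle.} The hardest step is the uniform Laguerre asymptotics. This requires matching the Mehler--Heine, oscillatory and exponential regimes of the double sum, including the transition $\gamma=1$. The sharp error exponents also have to be controlled through the crossover at $\gamma=2/5$ and $\gamma=1/3$. I would first try Riemann--Hilbert/Plancherel--Rotach uniform expansions for $L_n^{(\alpha)}(x)$, and fall back on the integral representation and steepest descent if those prove too cumbersome.
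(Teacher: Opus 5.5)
The route you propose (exact Laguerre formulas, then uniform Laguerre asymptotics) is the paper's. Two things go wrong, however.

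\textbf{First, your exact formulas are incorrect for $\tau\neq0$.} Completing the square gives $F_{jj}=e^{-\frac{1-\tau^2}{8N}T^2-\frac{x}{2}}L_j^{(0)}(x)$ with $x=|\eta T|^2/N$. So, with $G:=e^{-\frac{1-\tau^2}{4N}T^2}$, one has $\mathcal{F}_N^{(\mathrm d)}=G\,e^{-x}L^{(1)}_{N-1}(x)^2$ and $\mathcal{F}_N^{(\mathrm c)}=N-G\,e^{-x}\sum_{j,k}\frac{j!}{k!}x^{k-j}L_j^{(k-j)}(x)^2$.

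You dropped $e^{-x}$ and replaced $G$ by $G^2$. These two slips cancel only when $|\eta|^2=\frac{1-\tau^2}{4}$ (e.g.\ $\tau=0$). As written, your formulas therefore do not produce the constant $\frac{1-\tau^2}{4}$ in the ramp, at $\gamma=\frac12$, or in $\Phi_{\mathrm s}$. Without $e^{-x}$ the exponent $\Phi_{\mathrm s}$ for $\gamma\ge1$ also comes out wrong; you tacitly reinstate a ``natural weight'' there.

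Once corrected, your decomposition $\mathcal{F}_N^{(\mathrm c)}=N(1-G)+G\,\mathcal{S}_N(x)$, with $\mathcal{S}_N=N-\Psi_N$ the GUE connected SFF, is just the unexpanded form of the paper's \eqref{eq_proof of main 2}. Your list of error sources then correctly gives $\epsilon_2$.

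\textbf{Second, and this is the real gap, you have no usable handle on $\Psi_N(x)=e^{-x}\sum_{j,k}\frac{j!}{k!}x^{k-j}L_j^{(k-j)}(x)^2$ beyond fixed scales.}

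In the plateau your argument is circular. One has $\mathcal{F}_N-N=G\,(f_N(x)-\Psi_N(x))$ while $N-\mathcal{F}_N^{(\mathrm c)}=G\,\Psi_N(x)$. For $\gamma\in(\frac12,1)$, $\Psi_N\approx N\gg f_N$, so $N-\mathcal{F}_N^{(\mathrm c)}$ essentially \emph{is} $N-\mathcal{F}_N$. Knowing that this is ``exponentially small'' says nothing about its rate. For $\gamma=1$ with $|\eta\mathsf T|>2$, and for $\gamma>1$, you need two-sided exterior bounds on $\Psi_N$. Plancherel--Rotach asymptotics of the single polynomial $L^{(1)}_{N-1}$ do not supply these by themselves.

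The same issue affects step (c) of the dip--ramp regime. There you need $\mathcal{S}_N(x)=O(\sqrt{Nx})$ uniformly for $1/N\ll x\lesssim1$, and the Bessel limit of $\mathcal{S}_N$ at $\gamma=0$ with an $O(N^{-1})$ remainder.
\begin{itemize}
\item The GUE results you allude to treat only the fixed scalings $\gamma=0,1$; the linear ramp appears there only as a formal $t\to\infty$ limit.
\item A term-by-term Mehler--Heine limit of the nonnegative $N^2$-term double sum, whose total is $\approx N$, does not resolve the $O(1)$ quantity $\mathcal{S}_N=N-\Psi_N$ to precision $N^{-1}$.
\end{itemize}

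The missing idea is the Br\'ezin--Hikami structure of Proposition~\ref{Prop_BH relation}: $\rho_N=\int_x^\infty f_N$ and $\Psi_N=\int_x^\infty\rho_N$. This reduces everything to uniform asymptotics of the single function $f_N$ across the Bessel, oscillatory, Airy and exponential regions, integrated twice with matched boundary terms (Lemma~\ref{integral lemma}). That is why the paper carries the expansions of $f_N$ to third order.

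\textbf{A remark on $\epsilon_1$.} Do not expect your bookkeeping to reach the stated $\epsilon_1$, because the stated exponent itself looks too optimistic. In $\mathcal{F}_N^{(\mathrm d)}=G\,f_N(x)$ the factor $G$ alone contributes a term of exact order $N^{2\gamma-1}$ inside the bracket, i.e.\ $N^{-1}$ at $\gamma=0$. The second term of \eqref{asymp f cos}, $\approx-\frac{3}{16\pi x^2}\cos(4\sqrt{Nx})$, contributes one of order $N^{-\gamma}$. That discrepancy lies in the statement, not in your plan.
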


\begin{rem}[Dependence on the angular parameter $\theta$]
Note that in Theorem~\ref{thm strong nH}, the resulting asymptotic formulas may depend solely on $\mathsf{T}$, or may additionally involve the angular parameter $\theta$. In the latter case, the dependence on $\theta$ enters through the parametrisation $\eta$ defined in \eqref{def of eta}; that is, $\theta$ always appears together with $\tau$ in the combination specified in \eqref{def of eta}. In particular, it enters only through the quantity $|\eta \mathsf{T}|$.  
Moreover, when $\tau = 0$, one has $|\eta| = \tfrac{1}{2}$, and hence the resulting formulas are independent of the angular parameter $\theta$.
\end{rem}

By \eqref{def of DSFF sum of disconn conn}, the following result is an immediate consequence of Theorem~\ref{thm strong nH}. 

\begin{cor}[\textbf{Leading order asymptotics of the DSFF at strong non-Hermitcity}] \label{Cor_DSFF strong}  Let $\tau \in [0,1)$ be fixed,
corresponding to $\alpha=0$. Then as $N \to \infty$, we have 
\begin{equation}
\mathcal{F}_N(N^\gamma \mathsf{T},\theta) \sim 
\begin{cases}
N^{2-3\gamma} \mathcal{F}_{ \rm s }^{\rm (d)} (\mathsf{T},\theta) &\textup{for } \gamma \in [0,\frac25), 
\smallskip 
\\
N^{ \frac45 } \big( \mathcal{F}_{ \rm s }^{\rm (d)} (\mathsf{T},\theta)+  \mathcal{F}_{ \rm s }^{\rm (c)} (\mathsf{T},\theta) \big)  &\textup{for } \gamma = \frac25, 
\smallskip 
\\
N^{2\gamma}  \mathcal{F}_{ \rm s }^{\rm (c)} (\mathsf{T},\theta)  &\textup{for } \gamma \in (\frac25,\frac12], 
\smallskip 
\\
N &\textup{for } \gamma >\frac12. 
\end{cases}
\end{equation}
\end{cor}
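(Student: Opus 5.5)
The corollary follows by adding the two expansions of Theorem~\ref{thm strong nH} through the decomposition \eqref{def of DSFF sum of disconn conn}, $\mathcal{F}_N=\mathcal{F}_N^{(\rm d)}+\mathcal{F}_N^{(\rm c)}$. All that remains is to compare the orders of magnitude $N^{2-3\gamma}$ and $N^{2\gamma}$ in the dip--ramp regime. In the plateau regime we compare $N$ with the exponentially small terms.

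For $\gamma\in[0,1/2]$ we use \eqref{asymp of FN d/c snH}. We have $2-3\gamma>2\gamma$ exactly when $\gamma<2/5$, so the two exponents cross at $\gamma=2/5$, where both equal $N^{4/5}$.

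For $\gamma<2/5$ we write
\[
\mathcal{F}_N=N^{2-3\gamma}\Big(\mathcal{F}^{(\rm d)}_{\rm s}+O(N^{-\epsilon_1})+N^{5\gamma-2}\big(\mathcal{F}^{(\rm c)}_{\rm s}+O(N^{-\epsilon_2})\big)\Big).
\]
Since $5\gamma-2<0$, the bracket tends to $\mathcal{F}^{(\rm d)}_{\rm s}$. The case $\gamma>2/5$ (up to $1/2$) is symmetric: the disconnected part is $O(N^{2-3\gamma})=o(N^{2\gamma})$. At $\gamma=2/5$ both terms are of order $N^{4/5}$ and simply add.

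The one subtlety is that "$\sim$" requires the leading coefficient to be nonzero. For $\gamma\in(0,1/2)$ the disconnected limit $\frac{1}{2\pi|\eta\mathsf T|^3}(1-\sin(4N^\gamma|\eta\mathsf T|))$ oscillates and can vanish along subsequences. The connected limits are strictly positive for $\mathsf T>0$: $\frac{1-\tau^2}{4}\mathsf T^2>0$, and $1-e^{-(1-\tau^2)\mathsf T^2/4}>0$. So for $\gamma\in(2/5,1/2]$ and $\gamma=2/5$ the asymptotic relation is genuine. For $\gamma<2/5$ we read "$\sim$" in the sense of \eqref{asymp of FN d/c snH}, with the $N$-dependent oscillatory factor understood as part of the leading term. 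The ratio then tends to 1 wherever that factor is bounded away from zero, and otherwise the statement is $\mathcal{F}_N=N^{2-3\gamma}(\mathcal{F}^{(\rm d)}_{\rm s}+o(1))$. For $\gamma=0$ the leading term is $J_1(2|\eta\mathsf T|)^2/|\eta\mathsf T|^2$, which is generically nonzero, and the same convention applies at its zeros. I expect this interpretive point to be the only real obstacle. It is handled by stating the result with additive $o(1)$ errors inside the brackets.

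For $\gamma>1/2$ we use \eqref{asymp of FN p snH}. The sum of the two parts is exactly
\[
\mathcal{F}_N=N+e^{-N^{2\gamma-1}\Phi_{\rm s}+O(\mathsf e_1)}-e^{-N^{2\gamma-1}\Phi_{\rm s}+O(\mathsf e_1)}.
\]
Each of these exponentials is $o(1)$, hence $o(N)$. This holds because $\Phi_{\rm s}>0$ and $N^{2\gamma-1}$ dominates the error $\mathsf e_1(N)$ in every case: for $\gamma\in(1/2,1]$ the error is $\log N$, and for $\gamma>1$ it is $N\log N\ll N^{2\gamma-1}$. Therefore $\mathcal{F}_N\sim N$.
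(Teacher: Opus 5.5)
Your proposal is correct and follows the paper's own (one-line) argument: add the two expansions of Theorem~\ref{thm strong nH} via $\mathcal{F}_N=\mathcal{F}_N^{(\rm d)}+\mathcal{F}_N^{(\rm c)}$, compare $N^{2-3\gamma}$ with $N^{2\gamma}$ (crossing at $\gamma=2/5$), and use the exponential smallness of both correction terms when $\gamma>1/2$. Your caveat is a legitimate refinement that the paper leaves implicit: for $\gamma<2/5$ the relation must be read as $\mathcal{F}_N=N^{2-3\gamma}(\mathcal{F}^{(\rm d)}_{\rm s}+o(1))$, because the leading factor can (nearly) vanish, while at $\gamma=2/5$ and beyond the leading coefficient is bounded away from zero (using $\mathcal{F}^{(\rm d)}_{\rm s}\ge 0$).
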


%See Figure~\ref{Fig_DSFF strong} for the numerics. 

In Corollary~\ref{Cor_DSFF strong}, we present only the leading-order asymptotic behaviour of the DSFF. In particular, one observes that when $\gamma < 2/5$, the disconnected part is dominant, whereas for $\gamma > 2/5$, the connected part provides the leading contribution. Furthermore, when $\gamma > 1/2$, the DSFF exhibits a flat (plateau) behaviour
$\sim N$, which is constant for fixed $N$; 
indeed, in Theorem~\ref{thm strong nH}, the error term in this regime is exponentially small.

See Figure~\ref{Fig_DSFF strong} for 
illustration, presenting our numerics.

We also note that for $\gamma \in (2/5, 1/2]$, where the connected part dominates, its behaviour is quadratic by \eqref{strong nH conn}. This quadratic ramp is in contrast to the linear ramp behaviour that arises in the SFF of the GUE.

In the regime of strong non-Hermiticity, part of our result in Theorem~\ref{thm strong nH} was previously obtained in~\cite[Eqs.~(10), (11)]{SKSK24}. Building on earlier work~\cite{Fo21}, the authors analysed the cases $\gamma = 0$ for the disconnected part and $\gamma = 1$ for the connected part. In~\cite{SKSK24}, this was further extended to the intermediate regime $0 < \gamma < 1$, where the Thouless time $T_{\textup{Th}} = O(N^{2/5})$ was proposed. Moreover, in the weak non-Hermiticity regime $\tau = 1 - O(N^{-1})$, they obtained the corresponding Thouless time $T_{\textup{Th}} = O(N^{1/2})$. Our results provide a rigorous confirmation of this scaling and give a complete description of the DSFF across all values of $\gamma$ in the strong non-Hermiticity regime.

\begin{figure}[t]
\centering
\begin{tikzpicture}[
    fig/.style={
        draw=none,
        inner sep=0pt,
        outer sep=0pt
    }
]

% --- top figure ---
\node[fig] (fig1) at (0,0)
    {\includegraphics[width=0.52\textwidth]{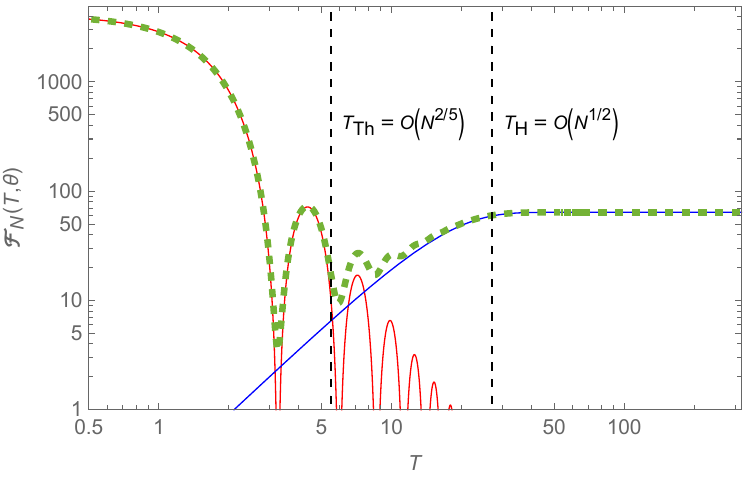}};
% \node[above=2mm of fig1] {\large (A)};

% --- bottom left figure ---
\node[fig] (fig2) at (-4.6,-6.5)
    {\includegraphics[width=0.42\textwidth]{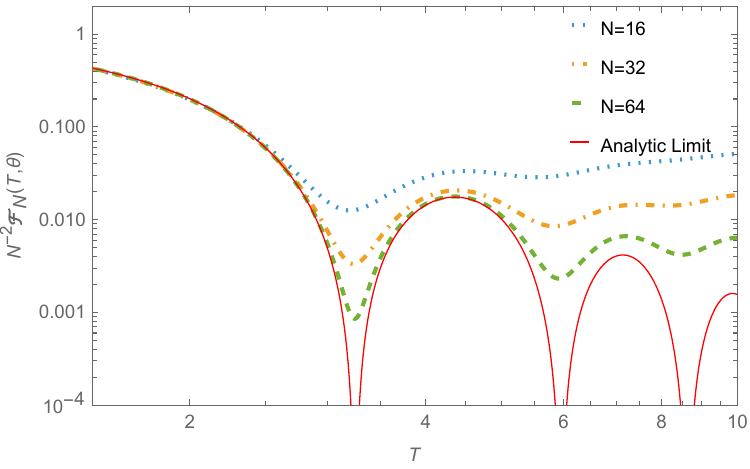}};
% \node[above=2mm of fig2] {\large (B)};

% --- bottom right figure ---
\node[fig] (fig3) at (4.6,-6.5)
    {\includegraphics[width=0.42\textwidth]{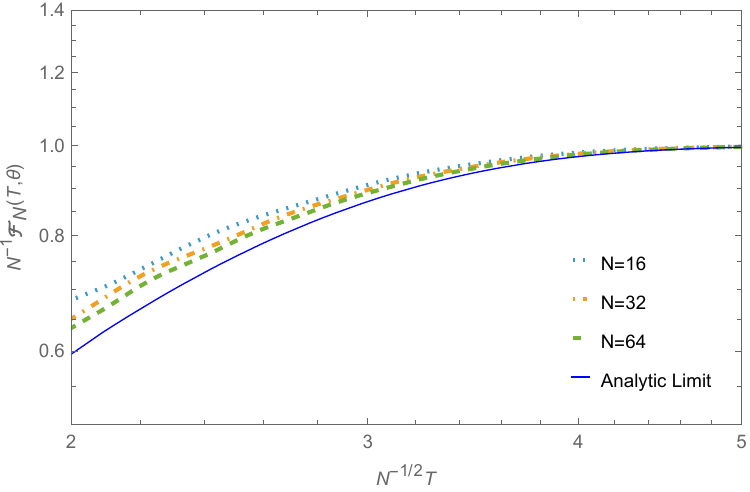}};
% \node[above=2mm of fig3] {\large (C)};

% --- connecting lines ---
\draw[thick, gray, dashed] ($(fig1.south west)+(1.4,0.6)$) -- ($(fig2.north west)+(0.9,0.1)$);
\draw[thick, gray, dashed] ($(fig1.south west)+(4.2,0.6)$) -- ($(fig2.north east)+(-0.1,0.1)$);
\draw[thick, gray, dashed] ($(fig1.south)+(0.7,0.6)$) -- ($(fig3.north west)+(0.6,0.1)$);
\draw[thick, gray, dashed] ($(fig1.south)+(2.3,0.6)$) -- ($(fig3.north east)+(-0.1,0.1)$); 
\end{tikzpicture}
\caption{DSFF in the regime of strong non-Hermiticity 
$(\alpha=0)$ 
with $\tau = 0.3$ and $\theta = \pi/6$. Here, all axes are on a logarithmic scale. 
Top: DSFF for $N = 64$ (green, dashed), together with the analytic limits for $\gamma = 0$ (red, solid) and $\gamma = \tfrac{1}{2}$ (blue, solid). 
Bottom left: DSFF for $\gamma = 0$ for varying $N$, compared with the analytic limit. 
Bottom right: DSFF for $\gamma = \tfrac{1}{2}$ for varying $N$, compared with the analytic limit.} \label{Fig_DSFF strong}
\end{figure}

\medskip 

Our next result concerns the mesoscopic scaling regime, in which $\tau \to 1$. In this setting, rather than using the parametrisation $\eta$ in \eqref{def of eta} that captures the angular contribution at strong non-Hermiticity, it is more convenient to work with both polar and Cartesian coordinates as introduced in \eqref{def of time scaling sf T t}. In particular, by \eqref{def of complex time variables}, we have
\begin{equation}
\mathsf{t} = \mathsf{T} \cos\theta.
\end{equation}
Indeed, recalling that in Theorem~\ref{thm strong nH}, the $\theta$-dependence of the resulting asymptotic formulas enters through the combination $|\eta \mathsf{T}|$, one may interpret the parameter $\mathsf{t}$ as $|\eta \mathsf{T}|$. In the case $\tau = 1$, this reduces to $\eta = \cos \theta$.

Accordingly, in the statement of the following theorem, we describe the asymptotic behaviour in terms of the variables $\mathsf{T}$ and $\mathsf{t}$. 
We then obtain the following result. As before, we introduce the subscript ``$\mathrm{m}$'' to indicate the mesoscopic non-Hermiticity regime. 

\begin{thm}[\textbf{Asymptotics of the disconnected/connected DSFF at mesoscopic non-Hermitcity}] \label{thm mesoscopic nH}
Let $\tau$ be scaled as \eqref{def of tau scaling general} with $\alpha \in (0,1).$

\begin{itemize}
    \item[(i)] \textup{\textbf{(Dip-ramp regime)}} Let $\gamma\in[0,(1+\alpha)/2]$. Then as $N \to \infty$, we have 
    \begin{align} \label{asymp of FN d/c mnH}
    \begin{split}
\mathcal{F}_{N}^{(\textup{d})}(N^{\gamma}\mathsf{T},\theta) &= N^{2-3\gamma} \Big(\mathcal{F}_{\textup{m}}^{(\textup{d})}(\mathsf{T},\theta) + O(N^{-\epsilon_3})\Big),
    \\
 \mathcal{F}_{N}^{(\textup{c})}(N^{\gamma}\mathsf{T},\theta) &= N^{\max\{2\gamma-\alpha,\gamma\}} \Big(\mathcal{F}_{\rm m}^{(\textup{c})}(\mathsf{T},\theta) + O(N^{-\epsilon_4})\Big), 
    \end{split}
    \end{align} 
where
\begin{equation} \label{mesoscopic nH dis}
    \mathcal{F}_{ \textup{m} }^{(\textup{d})}(\mathsf{T},\theta) := \begin{cases} \displaystyle
        \frac{J_{1} (2\mathsf{t})^2}{\mathsf{t}^2} & \textup{for } \gamma=0,  
        \smallskip 
        \\
        \displaystyle
        \frac{1}{2\pi \mathsf{t}^3} \Big(1-\sin(4N^{\gamma}\mathsf{t})\Big) & \textup{for } \gamma\in (0,\alpha),  
        \smallskip 
        \\
        \displaystyle
        \frac{1}{2\pi \mathsf{t}^3} \Big(1-\sin\big(4N^{\gamma}\mathsf{t}+O(N^{\gamma-\alpha})\big)\Big) & \textup{for } \gamma\in [\alpha,\frac{1+\alpha}{2}),  
        \smallskip 
        \\
        \displaystyle
        \exp\Big(-\frac{\kappa}{2}\mathsf{T}^2\Big) \frac{1}{2\pi \mathsf{t}^3} \Big(1-\sin\big(4N^{(1+\alpha)/2}\mathsf{t}+O(N^{(1-\alpha)/2}\big)\Big) & \textup{for } \gamma=\frac{1+\alpha}{2}
    \end{cases}
\end{equation}
and 
\begin{equation} \label{mesoscopic nH conn}
    \mathcal{F}_{ \textup{m} }^{(\textup{c})}(\mathsf{T},\theta) := \begin{cases} \displaystyle
        \mathsf{t}^2\Big(2J_0(2\mathsf{t})^2 + 2J_1(2\mathsf{t})^2 - \frac{J_0(2\mathsf{t})J_1(2\mathsf{t})}{\mathsf{t}}\Big) & \textup{for } \gamma=0 ,
        \smallskip 
        \\
      \displaystyle   \frac{2}{\pi}\mathsf{t}  & \textup{for } \gamma\in (0,\alpha) ,
        \smallskip 
        \\
        \displaystyle  \frac{2}{\pi}\mathsf{t}+\frac{\kappa}{2}\mathsf{T}^2  & \textup{for } \gamma=\alpha ,
        \smallskip 
        \\
        \displaystyle  \frac{\kappa}{2}\mathsf{T}^2  & \textup{for } \gamma\in (\alpha,\frac{1+\alpha}{2}) ,
        \smallskip 
        \\
        \displaystyle
        1-\exp\Big(-\frac{\kappa}{2}\mathsf{T}^2\Big) & \textup{for } \gamma=\frac{1+\alpha}{2}. 
    \end{cases}
\end{equation}
In \eqref{asymp of FN d/c mnH}, the constants $\epsilon_3 , \epsilon_4 >0$ are given by 
\begin{equation} \label{def of epsilon 3,4}
    \epsilon_3 = \begin{cases} \displaystyle
        \alpha-\gamma & \textup{for } \gamma\in[0,\alpha), 
        \smallskip 
        \\ \displaystyle
        \alpha & \textup{for } \gamma\in[\alpha,\frac{1+\alpha}{2}], 
    \end{cases} \qquad
    \epsilon_4 = \begin{cases} \displaystyle
        \alpha & \textup{for } \gamma=0,
        \smallskip 
        \\ \displaystyle
        \min\{2\gamma,\alpha-\gamma\} & \textup{for } \gamma\in(0,\alpha),
        \smallskip 
        \\ \displaystyle
        \min\{\alpha,1-\alpha\} & \textup{for } \gamma=\alpha,
        \smallskip 
        \\ \displaystyle
        \min\{\alpha,\gamma-\alpha,1+\alpha-2\gamma\} & \textup{for } \gamma\in(\alpha,\frac{1+\alpha}{2}),
        \smallskip 
        \\ \displaystyle
        \min\Big\{\alpha,\frac{1-\alpha}{2}\Big\} & \textup{for } \gamma=\frac{1+\alpha}{2} 
    \end{cases}
\end{equation} %\smallskip 
\item[(ii)] \textup{\textbf{(Plateau regime)}} Let $\gamma>(1+\alpha)/2$. Then as $N \to \infty$, we have 
\begin{equation} \label{asymp of FN p mnH}
\begin{split}
    & \mathcal{F}_{N}^{(\textup{d})}(N^{\gamma}\mathsf{T},\theta) = \begin{cases} \displaystyle
        \exp\Big(-N^{2\gamma-1-\alpha} \Phi_{\textup{m}}(\mathsf{T},\theta)+O(\mathsf{e}_2(N))\Big) &\textup{for } \gamma\in (\frac12,1) \textup{ or } \gamma=1, \, \mathsf{t}<2, \smallskip \\  \displaystyle
        \exp\Big(-N^{2\gamma-1} \Phi_{\textup{m}}(\mathsf{T},\theta)+O(\mathsf{e}_2(N))\Big) &\textup{for } \gamma=1, \, \mathsf{t}>2 \textup{ or } \gamma>1, 
    \end{cases} \\
    & \mathcal{F}_{N}^{(\textup{c})}(N^{\gamma}\mathsf{T},\theta) = N-\begin{cases} \displaystyle
        \exp\Big(-N^{2\gamma-1-\alpha} \Phi_{\textup{m}}(\mathsf{T},\theta)+O(\mathsf{e}_2(N))\Big) &\textup{for } \gamma\in (\frac12,1) \textup{ or } \gamma=1, \, \mathsf{t}<2, \smallskip \\  \displaystyle
        \exp\Big(-N^{2\gamma-1} \Phi_{\textup{m}}(\mathsf{T},\theta)+O(\mathsf{e}_2(N))\Big) &\textup{for } \gamma=1, \, \mathsf{t}>2 \textup{ or } \gamma>1, 
    \end{cases}
\end{split}
\end{equation}
where
\begin{equation}
\Phi_{\textup{m}}(\mathsf{T},\theta) := \begin{cases} \displaystyle
    \frac{\kappa}{2}\mathsf{T}^2 & \textup{for } \gamma\in(\frac{1+\alpha}{2},1) \textup{ or } \gamma=1, \, \mathsf{t}<2,
    \smallskip
    \\
    \displaystyle
    \mathsf{t}\sqrt{\mathsf{t}^2-4}-4\,\textup{arccosh}\frac{\mathsf{t}}{2} & \textup{for } \gamma=1, \, \mathsf{t}>2,
    \smallskip
    \\
    \displaystyle
    \mathsf{t}^2 & \textup{for } \gamma>1.
    \end{cases}
\end{equation}
Here, the error term $\mathsf{e}_2(N)$ is given by \begin{equation} \label{def of e2}
    \mathsf{e}_2(N) = \begin{cases} \displaystyle
        \log N &\textup{for } \gamma\in(\frac{1+\alpha}{2},1) \textup{ or } \gamma=1, \, \mathsf{t}<2, \smallskip \\ \displaystyle
        N^{1-\alpha} &\textup{for } \gamma=1, \, \mathsf{t}>2, \smallskip \\ \displaystyle
        N^{2\gamma-1-\alpha} &\textup{for } \gamma>1.
    \end{cases}
\end{equation} 
\end{itemize}
\end{thm}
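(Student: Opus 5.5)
The plan is to reduce everything to an exact finite-$N$ formula and then carry out a one-variable asymptotic analysis of Laguerre-type expressions. The reduction is the same one that underlies Theorem~\ref{thm strong nH}.

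\textbf{Step 1: exact formulas.} Using the planar Hermite representation of $K_N$ (see \eqref{def of KN gen OP}) and the Mehler-type generating function, I will compute the Fourier transforms of $R_{N,1}$ and of $|K_N|^2$ in closed form. I expect $\langle Z_N\rangle$ to factorise as a Gaussian damping factor times a GUE-type form factor. The Gaussian factor should be $\exp(-c(1-\tau^2)T^2/N)$ for a suitable constant $c$. The GUE-type factor should be $e^{-x/2}L^{(1)}_{N-1}(x)$ with argument $x=|\eta T|^2/N$. In the same way, $\int e^{i(\cdot)}|K_N|^2$ should equal the square of a Gaussian factor times a finite double sum of Laguerre polynomials (the GUE connected SFF) in the same variable $x$. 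This is the structure already behind Theorem~\ref{thm strong nH}, and I take it as the starting point.

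\textbf{Step 2: insert the mesoscopic scaling.} With $\tau=1-\kappa N^{-\alpha}$ we have $1-\tau^2=2\kappa N^{-\alpha}+O(N^{-2\alpha})$. The Gaussian exponent then becomes of order $\kappa\mathsf T^2N^{2\gamma-1-\alpha}$, and $|\eta|\mathsf T=\mathsf t+O(N^{-\alpha})$. Consequently the Laguerre argument is $x=N^{2\gamma-1}\mathsf t^2\,(1+O(N^{-\alpha}))$. Two effects are visible immediately.
\begin{itemize}
\item For $\gamma<(1+\alpha)/2$ the Gaussian factor is $1+O(N^{2\gamma-1-\alpha})$.
\item Replacing $|\eta T|$ by $N^\gamma\mathsf t$ inside an oscillation $\sin(4N^\gamma|\eta\mathsf T|)$ costs a phase error $O(N^{\gamma-\alpha})$.
\end{itemize}
This explains the splitting of \eqref{mesoscopic nH dis} at $\gamma=\alpha$.

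\textbf{Step 3: asymptotics of the Laguerre factor.} I will use the standard regimes.
\begin{itemize}
\item For $\gamma=0$, the Hilb/Mehler--Heine asymptotics give the Bessel formula $J_1(2\mathsf t)/\mathsf t$ and the analogous Bessel combination for the connected part.
\item For $0<\gamma<1$, the oscillatory Plancherel--Rotach asymptotics give the $N^{2-3\gamma}(1-\sin)$ law.
\item For $\gamma=1$, the transition at $\mathsf t=2$ gives the arccosh rate function.
\item For $\gamma>1$, exponential decay gives the $\mathsf t^2$ rate.
\end{itemize}
For the connected part, the GUE sum supplies the linear ramp $\frac{2}{\pi}\mathsf t\,N^\gamma$. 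The Gaussian prefactor contributes $N\bigl(1-e^{-\kappa\mathsf T^2N^{2\gamma-1-\alpha}/2}\bigr)$, which is $\approx\frac\kappa2\mathsf T^2N^{2\gamma-\alpha}$ for $\gamma<(1+\alpha)/2$ and becomes $N(1-e^{-\kappa\mathsf T^2/2})$ at $\gamma=(1+\alpha)/2$. Comparing $N^\gamma$ with $N^{2\gamma-\alpha}$ gives the exponent $\max\{2\gamma-\alpha,\gamma\}$ and the case split at $\gamma=\alpha$. The plateau statements follow by taking logarithms. For $\gamma\in((1+\alpha)/2,1)$ the Gaussian decay $N^{2\gamma-1-\alpha}$ dominates the polynomial Laguerre size, which is $O(\log N)$ in the exponent. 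For $\gamma\ge1$ with $\mathsf t>2$, the Laguerre rate of order $N^{2\gamma-1}$ dominates, and the Gaussian term is absorbed into $\mathsf e_2(N)$.

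\textbf{Main obstacle.} The hardest part is obtaining Plancherel--Rotach asymptotics for $L^{(1)}_{N-1}(x)$ and for the Laguerre double sum that are \emph{uniform} in $x$ and have explicit relative errors. These are needed to produce the stated exponents $\epsilon_3,\epsilon_4$, in particular the competing errors $2\gamma$, $\alpha-\gamma$, $\gamma-\alpha$ and $1+\alpha-2\gamma$ near the matching points $\gamma=\alpha$ and $\gamma=(1+\alpha)/2$. The perturbation $\eta\to\cos\theta$ must also be propagated through these asymptotics. My first attempt will be to write the connected double sum via the Christoffel--Darboux formula as a single integral of products of Laguerre functions. I would then apply a saddle-point analysis to the contour-integral representation $L_n^{(1)}(x)=\oint\cdots$, which handles all $\gamma$ at once. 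The final step is to track separately the $O(N^{-\alpha})$ corrections coming from $1-\tau^2$ and from $|\eta|$.
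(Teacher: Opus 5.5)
Your reduction and scale bookkeeping match the paper's proof. Step~1 is Proposition~\ref{prop finite N}; the paper obtains it by completing the square and using the Hermite addition formula, rather than Mehler's formula. Step~2 is \eqref{eq_proof of main 1}--\eqref{def of A B}. Your splitting $\mathcal F_N^{(\mathrm c)}=N\bigl(1-e^{-N^{2\gamma-1-\alpha}A}\bigr)+e^{-N^{2\gamma-1-\alpha}A}\bigl(N-\Psi_N\bigr)$, with $N-\Psi_N\approx\frac{2}{\pi}\sqrt{Nx}$, is how \eqref{eq_proof of main 2} yields the exponent $\max\{2\gamma-\alpha,\gamma\}$, the split at $\gamma=\alpha$, and the plateau rates.

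Where you genuinely differ is in how the Laguerre asymptotics are obtained. The paper never attacks the double sum or a contour integral directly. It uses the Br\'ezin--Hikami identities of Proposition~\ref{Prop_BH relation}, $\Psi_N=\int_x^\infty\rho_N$ and $\rho_N=\int_x^\infty f_N$, to reduce everything to the single function $f_N=e^{-x}L^{(1)}_{N-1}(x)^2$. Its uniform expansions in four regimes come from the Frenzen--Wong Bessel- and Airy-type expansions pushed to third order (Lemma~\ref{Lem_Laguerre asymp}, Proposition~\ref{prop asymp f}). These are integrated twice via Lemma~\ref{integral lemma}, with the constants of integration fixed by matching at $x=c$ and $x=4N\pm d\sqrt N$.

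Your Christoffel--Darboux step, once carried out, amounts to the first half of that identity: $\Psi_N=\int_x^\infty\rho_N$ with $\rho_N=Ne^{-x}\bigl(L^{(0)}_{N-1}L^{(1)}_{N-1}-L^{(0)}_NL^{(1)}_{N-2}\bigr)$. Stopping there saves one integration, but you then need uniform asymptotics for four Laguerre polynomials of shifted degree and parameter. The paper's second integration reduces the input to one polynomial. The price is three-term expansions, so that the errors survive two integrations over ranges of length $O(N)$.

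You rightly flag uniformity as the main obstacle, but the proposal does not yet supply it, and ``a saddle-point analysis handles all $\gamma$ at once'' underestimates what is needed. Since $\Psi_N(x)=N-\int_0^x\rho_N$ (or $\int_x^\infty\rho_N$), you need the integrand uniformly along a range crossing a transition zone, even when $x=N^{2\gamma-1}B$ lies in the bulk. Two cases arise:
\begin{itemize}
\item If you integrate from $0$, the range crosses the hard edge $u=O(1)$. This is where the paper fixes the constant, via $\rho_N(c)-\rho_N^{\mathrm{II}}(c/4N)=O(N^{-1})$.
\item If you integrate from $\infty$, the range crosses the turning point $u\approx 4N$ and the exponential tail.
\end{itemize}
Regime-by-regime formulas at the single point $x$, as listed in your Step~3, do not give this. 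A uniform steepest-descent analysis with Bessel- and Airy-type comparison functions does, together with error terms that survive integration. That is precisely the content of Propositions~\ref{prop asymp f} and~\ref{prop asymp Psi}, which your plan leaves open.

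One caution for the bookkeeping: for small $\gamma$, do not expect to reach the stated $\epsilon_3$. The first Hankel correction of $J_1$ gives, for $1/N\ll x\ll 1$,
\[
f_N(x)=\frac{\sqrt N}{2\pi x^{3/2}}\bigl(1-\sin(4\sqrt{Nx})\bigr)-\frac{3}{16\pi x^2}\cos(4\sqrt{Nx})+\dots
\]
The second term is the small-$x$ form of the second term in \eqref{asymp f cos}, but it is missing from \eqref{f dip ramp}. With the phase pinned to $4N^\gamma\mathsf t$, it is a correction of relative size $N^{-\gamma}$. That exceeds $N^{-(\alpha-\gamma)}$ when $\gamma<\alpha/2$.
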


As a consequence of Theorem~\ref{thm mesoscopic nH}, by \eqref{def of DSFF sum of disconn conn}, we have the following.

\begin{cor}[\textbf{Leading order asymptotics of the DSFF at mesoscopic non-Hermitcity}]  \label{Cor_DSFF mesoscopic}  Let $\tau$ be scaled as \eqref{def of tau scaling general} with $\alpha \in (0,1).$
As $N \to \infty$, we have 
\begin{equation}
\mathcal{F}_N(N^\gamma \mathsf{T},\theta) \sim 
\begin{cases}
N^{2-3\gamma} \mathcal{F}_{ \rm m }^{\rm (d)} (\mathsf{T},\theta) &\textup{for } \gamma \in [0, \min\{\frac12, \frac{2+\alpha}{5} \} ), 
\smallskip 
\\
N^{ \max\{ \frac12, \frac{4-\alpha}{5} \}  } \big( \mathcal{F}_{ \rm m }^{\rm (d)} (\mathsf{T},\theta)+  \mathcal{F}_{ \rm m }^{\rm (c)} (\mathsf{T},\theta) \big)  &\textup{for } \gamma = \min\{\frac12, \frac{2+\alpha}{5} \}, 
\smallskip 
\\
N^{ \max\{ 2\gamma-\alpha,\gamma  \} }  \mathcal{F}_{ \rm m }^{\rm (c)} (\mathsf{T},\theta)  &\textup{for } \gamma \in (\min\{\frac12, \frac{2+\alpha}{5} \} , \frac{1+\alpha}{2} ], 
\smallskip 
\\
N &\textup{for } \gamma >\frac{1+\alpha}{2}. 
\end{cases}
\end{equation}
\end{cor}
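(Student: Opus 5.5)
The corollary follows directly from Theorem~\ref{thm mesoscopic nH} and the decomposition \eqref{def of DSFF sum of disconn conn}, $\mathcal{F}_N=\mathcal{F}_N^{(\rm d)}+\mathcal{F}_N^{(\rm c)}$. No new asymptotic analysis is needed. In the dip--ramp regime $\gamma\in[0,\frac{1+\alpha}{2}]$ I only compare the two growth exponents in \eqref{asymp of FN d/c mnH}, namely $a(\gamma):=2-3\gamma$ for the disconnected part and $b(\gamma):=\max\{2\gamma-\alpha,\gamma\}$ for the connected part. The part with the larger exponent dominates. Since the relative errors $O(N^{-\epsilon_3})$ and $O(N^{-\epsilon_4})$ are $o(1)$, the subdominant part is $o$ of the dominant one whenever $a\neq b$.

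\textbf{Locating the crossover.} I split according to whether $\gamma\le\alpha$ or $\gamma\ge\alpha$.
\begin{itemize}
\item If $\gamma\le\alpha$, then $b=\gamma$, and $a>b$ if and only if $\gamma<\tfrac12$.
\item If $\gamma\ge\alpha$, then $b=2\gamma-\alpha$, and $a>b$ if and only if $\gamma<\tfrac{2+\alpha}{5}$.
\end{itemize}
Next I check which crossover is consistent with its case.
\begin{itemize}
\item If $\alpha\ge\frac12$, the crossover $\gamma=\frac12\le\alpha$ lies in the first case, and $\frac12\le\frac{2+\alpha}{5}$.
\item If $\alpha<\frac12$, one has $\alpha<\frac{2+\alpha}{5}<\frac12$, so the crossover $\gamma=\frac{2+\alpha}{5}$ lies in the second case.
\end{itemize}
In both cases the crossover point is $\gamma_*=\min\{\frac12,\frac{2+\alpha}{5}\}$. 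For $\gamma<\gamma_*$ the leading term is $N^{2-3\gamma}\mathcal{F}^{(\rm d)}_{\rm m}$. For $\gamma\in(\gamma_*,\frac{1+\alpha}{2}]$ it is $N^{b(\gamma)}\mathcal{F}^{(\rm c)}_{\rm m}$. At $\gamma=\gamma_*$ the two exponents agree and the sum of the two limits appears.

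\textbf{Common exponent at the crossover.} I evaluate $a(\gamma_*)$ explicitly.
\begin{itemize}
\item For $\alpha\ge\frac12$ it equals $\frac12$.
\item For $\alpha<\frac12$ it equals $2-\frac{3(2+\alpha)}{5}=\frac{4-3\alpha}{5}$, which exceeds $\frac12$ exactly when $\alpha<\frac12$.
\end{itemize}
So the exponent is the maximum of $\frac12$ and this value. I would double-check this against the displayed $\frac{4-\alpha}{5}$: both agree at $\alpha=0$ with Corollary~\ref{Cor_DSFF strong}, but my computation gives $\frac{4-3\alpha}{5}$, which suggests a misprint in the displayed exponent.

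\textbf{Plateau regime and the main obstacle.} For $\gamma>\frac{1+\alpha}{2}$, part (ii) of the theorem gives that $\mathcal{F}_N^{(\rm d)}$ is exponentially small. It also gives $\mathcal{F}_N^{(\rm c)}=N-(\text{the same exponentially small term})$. Hence $\mathcal{F}_N\sim N$; in fact the exponentially small terms cancel exactly in the sum.

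The only delicate point is the meaning of ``$\sim$'' in the disconnected-dominated range. There $\mathcal{F}^{(\rm d)}_{\rm m}$ contains the oscillating factor $1-\sin(4N^\gamma\mathsf{t}+\cdots)$, which can vanish along subsequences. Near such points the comparison $a>b$ does not by itself make the connected part negligible. I would handle this by reading the statement on the level of the envelope, i.e.\ up to the oscillating factor, or by excluding the sparse set of $\mathsf{t}$ where $1-\sin(\cdot)\le N^{-\delta}$. Near these points one simply retains the full sum $\mathcal{F}^{(\rm d)}+\mathcal{F}^{(\rm c)}$, exactly as at $\gamma=\gamma_*$. In the connected-dominated range no such issue arises, since $\mathcal{F}^{(\rm c)}_{\rm m}>0$ for $\mathsf{T}>0$.
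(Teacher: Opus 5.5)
Your argument is correct and follows the paper's route: the corollary is read off from Theorem~\ref{thm mesoscopic nH} and $\mathcal{F}_N=\mathcal{F}_N^{(\rm d)}+\mathcal{F}_N^{(\rm c)}$ by comparing the exponents $2-3\gamma$ and $\max\{2\gamma-\alpha,\gamma\}$, and your value $\frac{4-3\alpha}{5}$ at $\gamma=\frac{2+\alpha}{5}$ is right, so the displayed $\frac{4-\alpha}{5}$ is indeed a misprint (with it, the maximum would never equal $\frac12$ for $\alpha\in(0,1)$, which contradicts the case $\alpha\ge\frac12$, where the common exponent is $\frac12$). One small correction: in the plateau regime the two exponentially small terms do not cancel exactly, since by \eqref{eq_proof of main 1} the sum is $N+e^{-N^{2\gamma-1-\alpha}A}\big(f_N-\Psi_N\big)(N^{2\gamma-1}B)$ and $f_N\neq\Psi_N$ (the $O(\mathsf{e}_2(N))$ terms in the two exponents hide different functions), but both terms are $o(N)$, so $\mathcal{F}_N\sim N$ is unaffected.
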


% \todo[inline]{TODO; leading order asymptotic}

As before, we present the leading-order asymptotics of the DSFF. In the mesoscopic regime, the dominant behaviour exhibits a similar transition: for small values of $\gamma$, the disconnected part is dominant, whereas for larger values of $\gamma$, the connected part provides the leading contribution. Here, the precise transition threshold depends on the mesoscopic scaling parameter $\alpha$.

\begin{rem}[Behaviour in the ramp regime] 
As discussed above, the mesoscopic scaling regime exhibits a critical interpolation of the DSFF, in which both linear and quadratic behaviours arise depending on the value of $\gamma$; see \eqref{mesoscopic nH conn} and Figure~\ref{Fig_DSFF mesocopic}. The quadratic behaviour is characteristic of strongly non-Hermitian ensembles (GinUE-type), whereas the linear behaviour reflects Hermitian (GUE-type) statistics.

More precisely, for $\gamma \in (0,\alpha)$ the statistics are GUE-like, while for $\gamma \in (\alpha, \tfrac{1+\alpha}{2})$ they are GinUE-like; see Figure~\ref{fig:alpha_gamma} (B). As $\alpha \to 0$, corresponding to strong non-Hermiticity, the interval $(0,\alpha)$ shrinks and the linear behaviour becomes negligible. Conversely, as $\alpha \to 1$, approaching the Hermitian limit, the interval $(\alpha, \tfrac{1+\alpha}{2})$ shrinks and the quadratic behaviour is suppressed.

In the ramp regime, more refined distinctions appear. On the one hand, when
\[
\gamma \in \Bigl(\min\{\tfrac{1}{2}, \tfrac{2+\alpha}{5}\}, \tfrac{1+\alpha}{2}\Bigr] \cap (0,\alpha),
\]
the connected contribution exhibits asymptotically linear growth, in agreement with the GUE ramp. This regime is non-empty only for $\alpha > \tfrac{1}{2}$. On the other hand, when
\[
\gamma \in \Bigl(\min\{\tfrac{1}{2}, \tfrac{2+\alpha}{5}\}, \tfrac{1+\alpha}{2}\Bigr] \cap \Bigl(\alpha, \tfrac{1+\alpha}{2}\Bigr),
\]
which is non-empty for all $\alpha \in (0,1)$, the connected part displays quadratic growth, characteristic of the GinUE ramp.

We expect this dichotomy between GUE- and GinUE-type behaviour to be a universal feature of macroscopic observables (such as the DSFF) for the eGinUE. Similar transitions have been observed for other observables, notably the number variance; see \cite{ADM24}.
\end{rem}

See Figure~\ref{Fig_DSFF mesocopic} for numerical illustrations, particularly in the ramp regime, where linear, quadratic, and mixed behaviours are observed.

\begin{figure}[t]
\centering

\begin{subfigure}{0.32\textwidth}
    \centering
    \includegraphics[width=\linewidth]{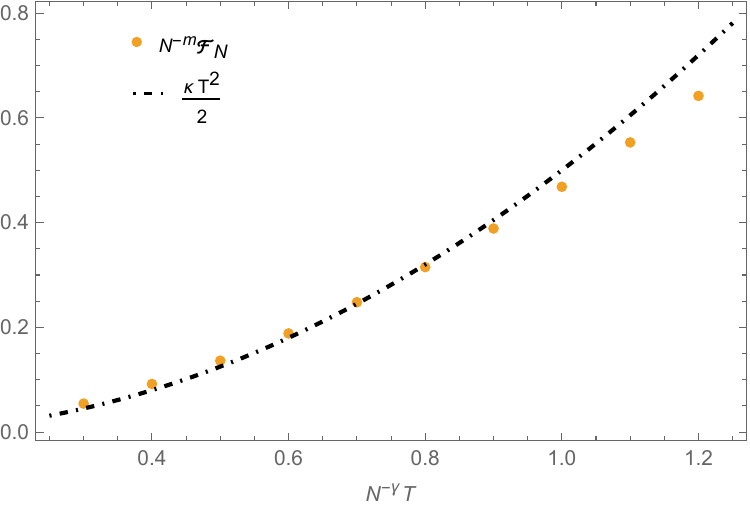}
    \caption{$\alpha=0.3$, $\kappa=1$}
\end{subfigure}
\hfill
\begin{subfigure}{0.32\textwidth}
    \centering
    \includegraphics[width=\linewidth]{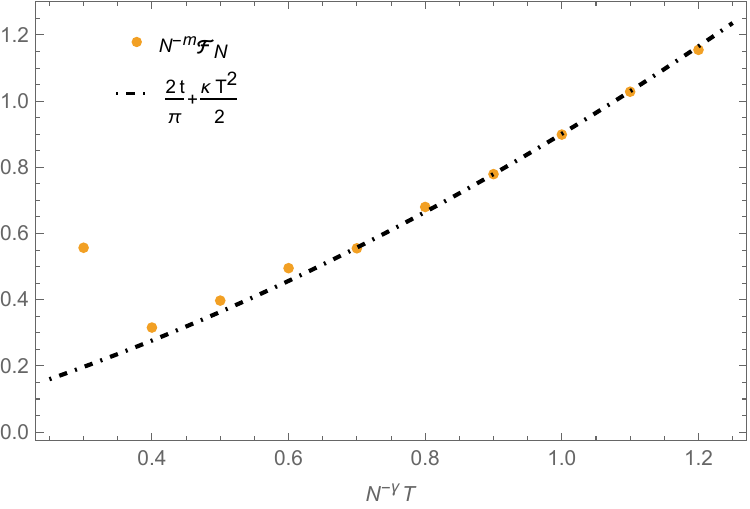}
    \caption{$\alpha=0.6$, $\kappa=0.7$}
\end{subfigure}
\hfill
\begin{subfigure}{0.32\textwidth}
    \centering
    \includegraphics[width=\linewidth]{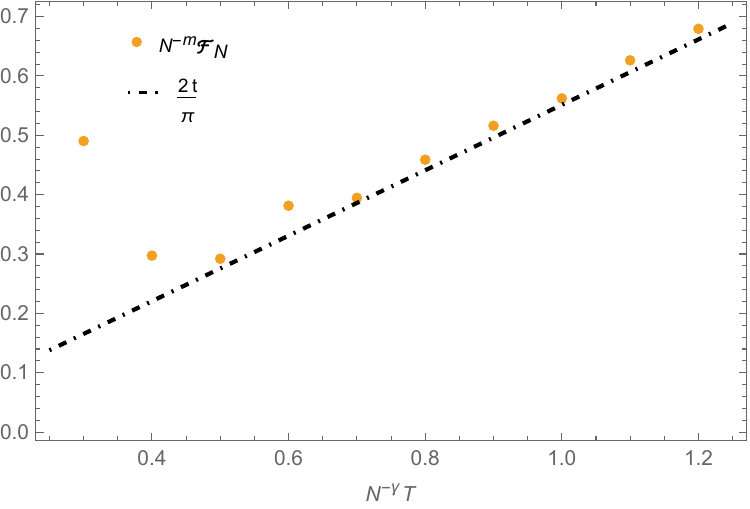}
    \caption{$\alpha=0.9$, $\kappa=0.4$}
\end{subfigure}

\caption{The plots illustrate the DSFF in the mesoscopic non-Hermiticity regime with parameters $N=2^{14}$, $\theta=\pi/6$, and $\gamma=0.6$,
in all 3 figures.
Accordingly, in case (A) with $\alpha < \gamma$, one observes the quadratic ramp $\frac{\kappa \mathsf{T}^2}{2}$, whereas in case (C) with $\alpha > \gamma$, the behaviour is given by the linear ramp $\frac{2\mathsf{t}}{\pi}$. In the critical case (B), where $\alpha = \gamma$, the resulting behaviour is a combination of these two regimes; see \eqref{mesoscopic nH conn} and Figure~\ref{fig:alpha_gamma} (B). 
The $y$-axis is rescaled by $N^m$, where $m = \max\{2\gamma - \alpha, \gamma\}$. The orange dots represent the DSFF evaluated at finite $N$, while the black dot-dashed curves indicate the corresponding analytic limits.} 
\label{Fig_DSFF mesocopic}
\end{figure}

\medskip

As our final main result, we analyse the asymptotic behaviour of the DSFF in the regime of weak non-Hermiticity. As before, we add subscript ``w'' denoting the weak non-Hermiticity.

\begin{thm}[\textbf{Asymptotics of the disconnected/connected DSFF at weak non-Hermitcity}] \label{thm weak nH}
Let $\tau$ be scaled as \eqref{def of tau scaling general} with $\alpha \ge 1.$

\begin{itemize}
    \item[(i)] \textup{\textbf{(Dip-ramp regime)}} Let $\gamma\in[0,1]$. Then as $N\to\infty$, we have
\begin{equation} \label{asymp of FN d/c wnH}
    \mathcal{F}_{N}^{(\textup{d})}(N^{\gamma}\mathsf{T},\theta) = N^{2-3\gamma}    \Big(  \mathcal{F}_{ \textup{w}}^{(\textup{d})}(\mathsf{T},\theta) + O(N^{-\epsilon_5})\Big),
    \qquad \mathcal{F}_{N}^{(\textup{c})}(N^{\gamma}\mathsf{T},\theta) = N^{\gamma} \Big(\mathcal{F}_{\textup{w}}^{(\textup{c})}(\mathsf{T},\theta)  + O(N^{-\epsilon_6})\Big),
\end{equation}
where  
\begin{equation} \label{weak nH dis}
    \mathcal{F}_{\textup{w}}^{(\textup{d})}(\mathsf{T},\theta)  := \begin{cases} \displaystyle
        \frac{J_{1} (2\mathsf{t})^2}{\mathsf{t}^2} & \textup{for } \gamma=0,
        \smallskip 
        \\
        \displaystyle
        \frac{1}{2\pi \mathsf{t}^3} \Big(1-\sin(4N^{\gamma}\mathsf{t})\Big) & \textup{for } \gamma\in(0,1), 
        \smallskip
        \\
        \displaystyle
        \exp\Big(-N^{1-\alpha}\frac{\kappa}{2}\mathsf{T}^2\Big) \frac{1}{2\pi \mathsf{t}^3} \Big(1-\sin\Big[N\Big(\mathsf{t}\sqrt{4-\mathsf{t}^2}+4\,\textup{arcsin}\frac{\mathsf{t}}{2}\Big)+O(N^{1-\alpha})\Big]\Big) & \textup{for } \gamma=1, \, \mathsf{t}<2
    \end{cases}
\end{equation}
and
\begin{equation} \label{weak nH conn}
    \mathcal{F}_{\textup{w}}^{(\textup{c})}(\mathsf{T},\theta)  := \begin{cases} \displaystyle
        \mathsf{t}^2 \Big( 2J_0(2\mathsf{t})^2 + 2 J_1(2\mathsf{t})^2 - \frac{J_0(2\mathsf{t})J_1(2\mathsf{t})}{\mathsf{t}} \Big) & \textup{for } \gamma=0,
        \smallskip 
        \\
        \displaystyle
        \frac{2}{\pi}\mathsf{t} & \textup{for } \gamma\in(0,1),
        \smallskip
        \\
        \displaystyle
        1-\exp\Big(-N^{1-\alpha}\frac{\kappa}{2}\mathsf{T}^2\Big)+\frac{1}{2\pi}\Big(\mathsf{t}\sqrt{4-\mathsf{t}^2}+4\,\textup{arcsin}\frac{\mathsf{t}}{2}\Big) & \textup{for } \gamma=1, \, \mathsf{t}<2.
    \end{cases}
\end{equation}
In \eqref{asymp of FN d/c wnH}, the constants $\epsilon_5 , \epsilon_6  $ are given by 
\begin{equation} \label{def of epsilon 5,6}
    \epsilon_5 = \min\{2-3\gamma,2\gamma,\alpha-\gamma\}, \qquad
    \epsilon_6 = \begin{cases} \displaystyle
        \min\{2,\alpha\} & \textup{for } \gamma=0,
        \smallskip 
        \\ \displaystyle
        \min\{2\gamma,2-2\gamma,\alpha-\gamma\} & \textup{for } \gamma\in(0,1),
        \smallskip 
        \\ \displaystyle
        1 & \textup{for } \gamma=1.  
    \end{cases}
\end{equation} %\smallskip 
\item[(ii)] \textup{\textbf{(Plateau regime)}} Let $\gamma\ge1$. Then as $N \to \infty$, we have
\begin{equation} \label{asymp of FN p wnH}
\begin{split}
    & \mathcal{F}_{N}^{(\textup{d})}(N^{\gamma}\mathsf{T},\theta) = \exp\Big(-N^{2\gamma-1} \Phi_{\textup{w}}(\mathsf{t})+O(\mathsf{e}_3(N))\Big), \\
    & \mathcal{F}_{N}^{(\textup{c})}(N^{\gamma}\mathsf{T},\theta) = N-\exp\Big(-N^{2\gamma-1} \Phi_{\textup{w}}(\mathsf{t})+O(\mathsf{e}_3(N))\Big),
\end{split}
\end{equation}
where
\begin{equation}
\Phi_{\textup{w}}(\mathsf{t}) := \begin{cases} \displaystyle 
    \mathsf{t} \sqrt{\mathsf{t}^2-4}-4\,\textup{arccosh}\frac{\mathsf{t}}{2} & \textup{for } \gamma=1, \, \mathsf{t}>2, \smallskip \\ \displaystyle 
    \mathsf{t}^2 & \textup{for } \gamma>1.
    \end{cases}
\end{equation}
Here, the error term $\mathsf{e}_3(N)$ is given by 
\begin{equation} \label{def of e3}
    \mathsf{e}_3(N) = \begin{cases} \displaystyle
        \log N &\textup{for } \gamma=1, \, \mathsf{t}>2, \smallskip \\ \displaystyle
        N\log N &\textup{for } \gamma\in(1,\frac{2+\alpha}{2}), \smallskip \\ \displaystyle
        N^{2\gamma-1-\alpha} &\textup{for } \gamma>\frac{2+\alpha}{2}.
    \end{cases}
\end{equation}
\end{itemize}
\end{thm}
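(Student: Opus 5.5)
The plan is to reduce both parts of the DSFF to explicit finite-$N$ formulas, using the determinantal structure \eqref{def of RN1 2 KN} together with the planar Hermite orthogonality of the eGinUE. The kernel is
\[
K_N(z,w)=\frac{1}{\sqrt{1-\tau^2}}\,e^{-\frac{N}{2(1-\tau^2)}(|z|^2+|w|^2-\tau\re(z^2+w^2))}\sum_{k=0}^{N-1}\frac{(\tau/2)^k}{k!}H_k\Big(\sqrt{\tfrac{N}{2\tau}}\,z\Big)\overline{H_k\Big(\sqrt{\tfrac{N}{2\tau}}\,w\Big)}.
\]
The key observation is that the Fourier transform of the weighted Hermite products $e^{-\ldots}H_j(z)\overline{H_k(z)}$ against $e^{itx+isy}$ is an explicit Gaussian integral. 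With the complex combination $\eta$ (or $\mathsf{t}$ at $\tau\to1$) and $\xi:=N^{-1/2}Te^{i\theta}$, it equals
\[
e^{-\frac{(1-\tau^2)|\xi|^2}{4}}\times\big(\text{Laguerre-type polynomial in }|\eta\xi|^2\big).
\]
This gives
\[
\langle Z_N\rangle=e^{-\frac{(1-\tau^2)T^2}{4N}}\,L^{(1)}_{N-1}\!\Big(\frac{|\eta T|^2}{N}\Big)\frac1N\cdot(\text{phase}),
\]
and
\[
\mathcal F_N^{(\mathrm c)}=N-e^{-\frac{(1-\tau^2)T^2}{2N}}\sum_{j,k=0}^{N-1}\frac{j!}{k!}\Big(\frac{|\eta T|^2}{N}\Big)^{k-j}\Big[L^{(k-j)}_j\Big(\frac{|\eta T|^2}{N}\Big)\Big]^2 .
\]
This is the GUE SFF structure (as in \cite{Fo21}) multiplied by the Gaussian damping factor, with $\mathsf t$ replacing $|\eta \mathsf T|$. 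In particular, for $\tau=1-\kappa N^{-\alpha}$ the damping exponent is $N^{2\gamma-1-\alpha}\kappa\mathsf T^2/2\,(1+o(1))$, while $|\eta|^2=\cos^2\theta+O(N^{-\alpha})$.

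For the disconnected part, I will use uniform Laguerre asymptotics (Plancherel–Rotach type) of $L^{(1)}_{N-1}(x/N)$ in three ranges. For $x=O(1)$ the Mehler–Heine limit gives $J_1(2\mathsf t)/\mathsf t$. In the oscillatory regime $1\ll x\le 4N^2(1-\delta)$, squaring the trigonometric asymptotics produces the term $\frac{1}{2\pi\mathsf t^3}(1-\sin(\cdot))$, whose phase becomes $N(\mathsf t\sqrt{4-\mathsf t^2}+4\arcsin\frac{\mathsf t}2)$ at $\gamma=1$. In the exponentially decaying regime $x>4N^2$, the arccosh rate $\Phi_{\rm w}$ appears, and for $\gamma>1$ the dominant term is the top coefficient, giving $\mathsf t^{2}$ at rate $N^{2\gamma-1}$ with $N\log N$ corrections from the factorials. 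Multiplying by the damping factor, which is $\exp(-N^{1-\alpha}\kappa\mathsf T^2/2)$ at $\gamma=1$ and negligible with error $O(N^{\gamma-\alpha})$ for $\gamma<1$, gives the stated $\mathcal F^{(\rm d)}_{\rm w}$. The error $\epsilon_5$ collects three sources: the subleading Laguerre term (relative $N^{-(2-3\gamma)}$ at small $\gamma$, i.e. comparison with the oscillation amplitude), the $N^{-2\gamma}$ corrections in the Hankel-type expansion, and the $\tau$-perturbation $N^{\gamma-\alpha}$ coming from $|\eta|$ and the damping.

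For the connected part, I will write
\[
\mathcal F_N^{(\rm c)}=N\big(1-e^{-\frac{(1-\tau^2)T^2}{2N}}\big)+e^{-\frac{(1-\tau^2)T^2}{2N}}S_N\big(|\eta T|\big),
\]
where $S_N(\mathsf t N^\gamma)$ is exactly the connected SFF of the GUE at time $|\eta T|$. I then import or rederive the known GUE asymptotics, either via the Brézin–Hikami/Haagerup–Thorbjørnsen contour-integral formula or via the double Laguerre sum. These give: the Bessel expression at $\gamma=0$; $N^\gamma\frac{2}{\pi}\mathsf t$ for $\gamma\in(0,1)$, which is the sine-kernel linear ramp; and $N\cdot\frac{1}{2\pi}(\mathsf t\sqrt{4-\mathsf t^2}+4\arcsin\frac{\mathsf t}{2})$ at $\gamma=1$, $\mathsf t<2$, saturating to $N$ for $\mathsf t>2$. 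For $\alpha\ge1$ the first summand is $N\cdot O(N^{2\gamma-1-\alpha})=O(N^{2\gamma-\alpha})\le N^{\gamma}\cdot N^{\gamma-\alpha}$, which is why $\alpha-\gamma$ enters $\epsilon_6$. At $\gamma=1$ this summand contributes exactly $N(1-e^{-N^{1-\alpha}\kappa\mathsf T^2/2})$, giving the stated formula. In the plateau regime, $S_N$ equals $N$ minus the squared disconnected-type sum, and the identity $\mathcal F_N^{(\rm c)}=N-\mathcal F_N^{(\rm d)}+(\text{exponentially small})$ combined with the plateau analysis of $L_{N-1}^{(1)}$ yields (ii).

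I expect the main obstacle to be making the GUE ramp asymptotics uniform with a quantitative error, specifically $\min\{2\gamma,2-2\gamma\}$. The double sum has cancellations, since each term is $O(1)$ but the total is $O(N^\gamma)$. My first attempt will be to resum it into a single integral,
\[
S_N=N-\oint\!\oint(\ldots)\,,
\]
using the generating function of Laguerre polynomials (equivalently, the Christoffel–Darboux form of $|K_N|^2$ integrated against the Fourier phase). I will then perform a steepest-descent analysis whose saddle points coalesce at $\mathsf t=2$ when $\gamma=1$. This transition point needs Airy-type uniformity, and I would exclude a neighbourhood of it as the statement does.
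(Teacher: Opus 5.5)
Your plan for the disconnected part follows the paper: uniform Laguerre asymptotics of $f_N(x)=e^{-x}L^{(1)}_{N-1}(x)^2$ at $x=|\eta T|^2/N$. The finite-$N$ input needs fixing first. The correct formula is $\langle Z_N\rangle=e^{-(1-\tau^2)T^2/(8N)}e^{-x/2}L^{(1)}_{N-1}(x)$, with no factor $1/N$, and every $|F_{jk}|^2$ carries $e^{-(1-\tau^2)T^2/(4N)}e^{-x}$. The factor $e^{-x}$ you dropped is exactly what produces the rate $\mathsf t^2$ for $\gamma>1$; the top Laguerre coefficient only contributes the $O(N\log N)$.

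The genuine gap is the connected part. Your reduction $\mathcal F_N^{(\mathrm c)}=N(1-D)+D\,S_N(|\eta T|)$ is correct, with $D=e^{-(1-\tau^2)T^2/(4N)}$ and $S_N(u)=N-\Psi_N(u^2/N)$ the GUE connected SFF. But all the quantitative content then sits in asymptotics of $\Psi_N$ uniform in $x$, and you do not supply them. The GUE results of Forrester you would import are fixed-scaling statements ($\gamma=0$ and $\gamma=1$); the linear ramp appears there only as a large-time limit of the $\gamma=0$ formula. The contour-integral steepest descent is announced, not performed.

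The missing idea is the Br\'ezin--Hikami relation used in the paper: $\Psi_N(x)=\int_x^\infty\rho_N(u)\,du$ and $\rho_N(x)=\int_x^\infty f_N(u)\,du$, with $\rho_N$ the LUE density. It removes the cancellation problem in the double sum, because $\Psi_N$ is obtained by integrating twice the uniform expansions of $f_N$ that you need anyway. The paper does this by integrating by parts against $e^{i\lambda\varphi}$ in the bulk (and the real-exponential analogue beyond $4N$), and matching integration constants across the Bessel, oscillatory, Airy and exponential regions. This yields $N-\Psi_N(x)=\frac{2}{\pi}\sqrt{Nx}+O((Nx)^{-1/2})$ for $1/N\ll x\le\delta$. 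The order-one oscillations cancel there, which is where the $2\gamma$ in $\epsilon_6$ comes from. It also yields $\Psi_N\sim\frac{\mathsf x}{\mathsf x-1}f_N$ for $x=4N\mathsf x$ in the exponential region.

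That last relation shows why your plateau step for the connected part is insufficient. $N-\mathcal F_N^{(\mathrm c)}=D\Psi_N$ is a double sum of squares, not the square of the diagonal sum. The claim ``$\mathcal F_N^{(\mathrm c)}=N-\mathcal F_N^{(\mathrm d)}+$ exponentially small'' holds only because both quantities are exponentially small: the error is of the same size as $\mathcal F_N^{(\mathrm d)}$ itself. Cauchy--Schwarz gives $\Psi_N\ge f_N/N$, and crude coefficient bounds give an upper bound up to $e^{O(N\log N)}$, which is enough for $\gamma>1$. At $\gamma=1$, $\mathsf t>2$, however, the $O(\log N)$ accuracy needs a sharp upper bound on the double sum, and you do not have one.

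Two bookkeeping claims also fail.
\begin{enumerate}
\item At $\gamma=1$ your own decomposition gives $N[1-D+D\,\phi(\mathsf t)]$, where $\phi(\mathsf t)=\frac{1}{2\pi}(\mathsf t\sqrt{4-\mathsf t^2}+4\arcsin\frac{\mathsf t}{2})$. This matches the displayed formula only up to $O(N^{1-\alpha})$. For $\alpha=1$ the two differ at order one: the displayed expression exceeds $1$ as $\mathsf t\to2$, while $\mathcal F_N^{(\mathrm c)}\le N$ always.
\item $J_1(z)^2=\frac{1}{\pi z}(1-\sin 2z)-\frac{3}{4\pi z^2}\cos 2z+O(z^{-3})$. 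So the first Hankel correction to $f_N$ has relative size $z^{-1}\asymp N^{-\gamma}$, not $N^{-2\gamma}$ as you attribute to $\epsilon_5$.
\end{enumerate}
Both are defects of the displayed statement itself, and the paper's proof passes over them as well. You therefore cannot claim to recover the statement verbatim at these points.
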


We mention that in Theorems~\ref{thm strong nH}, \ref{thm mesoscopic nH}, and \ref{thm weak nH}, the stated error bounds are indeed optimal.  

Note that, as before, the resulting formulas are expressed in terms of $\mathsf{T}$ and $\mathsf{t}$. Moreover, for $\alpha > 1$, the dependence on $\mathsf{T}$ is suppressed, so that the formulas depend only on $\mathsf{t}$. In this regime, the asymptotic behaviour of the DSFF is universal, coinciding with that of the GUE. This reflects the fact that the model is already sufficiently close to the Hermitian setting at the level of DSFF asymptotics.

More precisely, the asymptotics in Theorem~\ref{thm weak nH} are consistent with the known results for the spectral form factor of the GUE~\cite{Fo21}. In particular, the regimes $\gamma = 0$ and $\gamma = 1$ correspond to \cite[Propositions~16 and~18]{Fo21}, respectively. Furthermore, the linear ramp behaviour observed in~\eqref{weak nH conn} for $0 < \gamma < 1$ agrees with \cite[Remark~17]{Fo21}, where it arises by taking the limit $t \to \infty$ in the $\gamma = 0$ case. 

As an immediate consequence of Theorem~\ref{thm weak nH}, we have the following. 

\begin{cor}[\textbf{Leading order asymptotics of the DSFF at weak non-Hermitcity}]  \label{Cor_DSFF weak} Let $\tau$ be scaled as \eqref{def of tau scaling general} with $\alpha \ge 1.$
As $N \to \infty$, we have 
\begin{equation}
\mathcal{F}_N(N^{\gamma}\mathsf{T},\theta) \sim 
\begin{cases}
N^{2-3\gamma} \mathcal{F}_{ \rm w }^{\rm (d)} (\mathsf{T},\theta) &\textup{for } \gamma \in [0, \frac12 ), 
\smallskip 
\\
N^{\frac12 } \big( \mathcal{F}_{ \rm w }^{\rm (d)} (\mathsf{T},\theta)+  \mathcal{F}_{ \rm w }^{\rm (c)} (\mathsf{T},\theta) \big)  &\textup{for } \gamma = \frac12,
\smallskip 
\\
N^{ \gamma }  \mathcal{F}_{ \rm w }^{\rm (c)} (\mathsf{T},\theta)  &\textup{for } \gamma \in (\frac12, 1), 
\smallskip 
\\
N \, f(\mathsf{t})
\smallskip &\textup{for } \gamma =1, 
\\
N &\textup{for } \gamma >1. 
\end{cases}
\end{equation}
Here, 
\begin{equation}
f(\mathsf{t}) :=    \frac{2}{\pi}\Big(\frac{\mathsf{t}\sqrt{4-\mathsf{t}^2}}{4}+\textup{arcsin}\frac{\mathsf{t}}{2}\Big) \mathbbm{1}_{ \{ |\mathsf{t}|<2 \} } + \mathbbm{1}_{ \{ \mathsf{t}>2 \} }. 
\end{equation}
\end{cor}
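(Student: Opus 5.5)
The statement is Corollary~\ref{Cor_DSFF weak}. I will derive it directly from the decomposition \eqref{def of DSFF sum of disconn conn}, $\mathcal{F}_N=\mathcal{F}_N^{(\mathrm d)}+\mathcal{F}_N^{(\mathrm c)}$, feeding in the asymptotics of Theorem~\ref{thm weak nH}. The work is a comparison of exponents of $N$ in each range of $\gamma$, done separately for the dip--ramp regime $\gamma\in[0,1]$ and the plateau regime $\gamma\ge 1$.

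\emph{Dip--ramp regime, $\gamma\in[0,1)$.} By \eqref{asymp of FN d/c wnH} the disconnected part is of order $N^{2-3\gamma}$ and the connected part of order $N^{\gamma}$. We have $2-3\gamma>\gamma$ exactly when $\gamma<\frac12$.
\begin{itemize}
\item For $\gamma\in[0,\frac12)$ the disconnected part dominates, giving $N^{2-3\gamma}\mathcal{F}_{\rm w}^{(\rm d)}$. The connected part is relatively $O(N^{4\gamma-2})$ and the error terms are $O(N^{-\epsilon_5})$ relatively.
\item At $\gamma=\frac12$ both parts are of order $N^{1/2}$, so the two leading terms add.
\item For $\gamma\in(\frac12,1)$ the connected part $N^{\gamma}\frac{2}{\pi}\mathsf t$ dominates.
\end{itemize}
One caveat: $\mathcal{F}_{\rm w}^{(\rm d)}$ contains the factor $1-\sin(4N^\gamma\mathsf t)$, which oscillates and may vanish. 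So ``$\sim$'' must be read as equality up to a relative error $o(1)$ measured against the bounded envelope $\frac{1}{\pi\mathsf t^3}$, not as a ratio tending to one. I will state explicitly that this is the sense intended, consistent with how Corollaries~\ref{Cor_DSFF strong} and~\ref{Cor_DSFF mesoscopic} are phrased.

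\emph{The case $\gamma=1$.} Both parts are of order $N$, so I split according to $\mathsf t$.
\begin{itemize}
\item For $\mathsf t<2$, the disconnected part is $N^{-1}$ times a bounded quantity, hence negligible. The connected part is $N\bigl(1-e^{-N^{1-\alpha}\kappa\mathsf T^2/2}+\frac{1}{2\pi}(\mathsf t\sqrt{4-\mathsf t^2}+4\arcsin\frac{\mathsf t}{2})\bigr)$.
\item For $\alpha>1$ the exponential tends to $1$, so the first two terms cancel up to $o(1)$. The remainder equals $f(\mathsf t)=\frac{2}{\pi}\bigl(\frac{\mathsf t\sqrt{4-\mathsf t^2}}{4}+\arcsin\frac{\mathsf t}{2}\bigr)$, after rewriting $\frac{1}{2\pi}\cdot 4=\frac{2}{\pi}$.
\item For $\alpha=1$ an extra term $1-e^{-\kappa\mathsf T^2/2}$ survives. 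I will check whether the intended statement restricts to $\alpha>1$, or whether $f$ should absorb this term. In the latter case I will note that the corollary's formula must be read with that $\alpha=1$ correction, and I will flag this point rather than hide it.
\item For $\mathsf t>2$, part (ii) of Theorem~\ref{thm weak nH} gives $\mathcal{F}_N^{(\rm d)}=\exp(-N\Phi_{\rm w}+O(\log N))$. Since $\Phi_{\rm w}(\mathsf t)>0$ for $\mathsf t>2$, this is exponentially small, and $\mathcal{F}_N^{(\rm c)}=N-o(1)$. Hence $\mathcal{F}_N\sim N=N f(\mathsf t)$.
\end{itemize}

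\emph{Plateau regime, $\gamma>1$.} Here $\Phi_{\rm w}=\mathsf t^2>0$ and the exponent $-N^{2\gamma-1}\mathsf t^2$ dominates the error $\mathsf e_3(N)$. This needs checking in each subcase of \eqref{def of e3}:
\begin{itemize}
\item $N\log N\ll N^{2\gamma-1}$, since $2\gamma-1>1$;
\item $N^{2\gamma-1-\alpha}\ll N^{2\gamma-1}$.
\end{itemize}
So $\mathcal{F}_N^{(\rm d)}\to0$ and $\mathcal{F}_N=\mathcal F_N^{(\rm d)}+\mathcal F_N^{(\rm c)}=N$ exactly up to the cancelling exponentials, which gives $\mathcal{F}_N\sim N$.

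The main obstacle is the boundary case $\gamma=1$: correctly matching the constant term in $\mathcal{F}_{\rm w}^{(\rm c)}$ with $f$, and handling the precise meaning of $\sim$ where the oscillatory dip factor can vanish. Everything else is bookkeeping of exponents.
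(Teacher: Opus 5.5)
Your route is the paper's: the corollary is stated there as an immediate consequence of Theorem~\ref{thm weak nH} via $\mathcal{F}_N=\mathcal{F}_N^{(\rm d)}+\mathcal{F}_N^{(\rm c)}$. Your exponent comparisons are correct for $\gamma\in[0,1)$, for $\gamma=1$ with $\mathsf t>2$, and for $\gamma>1$. Your envelope reading of $\sim$ where $J_1(2\mathsf t)$ or $1-\sin(4N^{\gamma}\mathsf t)$ vanishes is also the only sensible one. The case you leave open is $\gamma=1$, $\alpha=1$, $0<\mathsf t<2$, and there you need to commit. The claimed $Nf(\mathsf t)$ is false in that case, so the $\gamma=1$ line holds only for $\alpha>1$.

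Your suggested additive fix, $N\bigl(1-e^{-\kappa\mathsf T^2/2}+f(\mathsf t)\bigr)$, comes from the printed $\gamma=1$ entry of $\mathcal{F}^{(\rm c)}_{\rm w}$. It is not right either. Go back to \eqref{conn finite}: the Gaussian factor multiplies all of $\Psi_N$. At $x=N\mathsf t^2+O(N^{1-\alpha})$ one has $\mathsf x=\mathsf t^2/4+O(N^{-\alpha})$ and $\frac{2}{\pi}\bigl(\sqrt{\mathsf x-\mathsf x^2}+\arcsin\sqrt{\mathsf x}\bigr)=f(\mathsf t)+O(N^{-\alpha})$. So Proposition~\ref{prop asymp Psi}(ii) gives $\Psi_N=N\bigl(1-f(\mathsf t)\bigr)+O(1)$, and together with $\mathcal{F}_N^{(\rm d)}=O(N^{-1})$,
\[
\mathcal{F}_N=N\Bigl[1-e^{-N^{1-\alpha}\kappa\mathsf T^2/2}\bigl(1-f(\mathsf t)\bigr)\Bigr]+O(1).
\]
The additive version is impossible. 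We have $\mathcal{F}_N^{(\rm c)}=N-\sum_{j,k}|F_{jk}|^2\le N$, yet $1-e^{-\kappa\mathsf T^2/2}+f(\mathsf t)>1$ as soon as $f(\mathsf t)>e^{-\kappa\mathsf T^2/2}$, for instance when $\mathsf t$ is near $2$.

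For $\alpha>1$ the exponential is $1+O(N^{1-\alpha})$, and you recover $\mathcal{F}_N\sim Nf(\mathsf t)$. For $\alpha=1$ the correct leading term is $N\bigl[1-e^{-\kappa\mathsf T^2/2}(1-f(\mathsf t))\bigr]$. This differs from $Nf(\mathsf t)$ for every $\mathsf T>0$ and $\mathsf t\in(0,2)$. So either restrict the $\gamma=1$ case to $\alpha>1$, or replace $f(\mathsf t)$ by this expression when $\alpha=1$.
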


\begin{rem}[Thouless and Heisenberg time] 
Combining Corollaries~\ref{Cor_DSFF strong}, \ref{Cor_DSFF mesoscopic}, and~\ref{Cor_DSFF weak}, we observe that, under the general scaling assumption~\eqref{def of tau scaling general}, the Thouless and Heisenberg times obey the asymptotic relations
\begin{equation}
    T_{\text{Th}} \propto N^{ \min\{ \frac{2+\alpha}{5} , \frac12 \}  } ,     \qquad T_{\text{H}} \propto N^{ \min\{ \frac{1+\alpha}{2} , 1 \}  } .
\end{equation}
In particular, the Heisenberg time corresponds to the reciprocal of the
typical eigenvalue spacing given in~\eqref{def of reciprocal spacing}. 
\end{rem}

\subsection*{Organisation of the paper} The remainder of this paper is organised as follows. In Section~\ref{Section_finite}, we present several finite-$N$ formulas, most of which follow from algebraic computations, and which yield a representation of the DSFF amenable to asymptotic analysis. In Section~\ref{Section_asymptotic}, building on these finite-$N$ formulas, we carry out a precise asymptotic analysis based on orthogonal polynomial techniques.

\subsection*{Acknowledgments} The work of Gernot Akemann was partly funded by the Deutsche Forschungsgemeinschaft (DFG, German Research Foundation) – Project-ID 317210226 – SFB 1283. Sung-Soo Byun was supported by the National Research Foundation of Korea grants (RS-2025-00516909).

\section{Finite-$N$ analysis} \label{Section_finite}

In this section, we prepare several preliminaries for our main results.  
For this purpose, we first consider a general point process 
$\boldsymbol{z}=\{z_j\}_{j=1}^N$ in the complex plane, defined through the joint probability distribution
\begin{equation} \label{def of Gibbs}
    P_N(\boldsymbol{z}) \prod_{j=1}^N\,dA(z_j) = \frac{1}{Z_N} \prod_{j<k} |z_j-z_k|^2 \prod_{j=1}^N \omega(z_j)\,dA(z_j). 
\end{equation}
Here, $\omega(z)$ denotes a weight function specifying the ensemble and $Z_N$ is the normalisation constant.
In particular, the eigenvalues of the complex elliptic Ginibre 
%matrix 
ensemble follow the distribution \eqref{def of Gibbs} with the weight function
\begin{equation} \label{def of elliptic weight}
 \omega^{\rm H}_N(z) : = \exp\Big(-N \frac{ |z|^2-\tau \re z^2 }{1-\tau^2}  \Big). 
\end{equation}
The $k$-point correlation function of the ensemble~\eqref{def of Gibbs}, which can be characterised via~\eqref{def of RNk test}, can also be defined in terms of the joint probability distribution by integrating out $N-k$ variables: 
\begin{equation} \label{def of det structure}
R_{N,k}(z_1,\dots,z_k) = \frac{N!}{(N-k)!} \int_{ \C^{N-k} } P_N(\boldsymbol{z}) \prod_{j=k+1}^N \,dA(z_j). 
\end{equation}

A remarkable property of the ensemble \eqref{def of Gibbs} is that it forms a determinantal point process. Namely, we have the representation \eqref{def of RNk det KN} 
where 
\begin{equation} \label{def of KN gen OP}
    K_N(z,z') := \sqrt{\omega(z)\omega(z')} \sum_{j=0}^{N-1} \phi_j(z) \overline{\phi_j(z')}.
\end{equation}
Here, $\{\phi_n\}_{n\in\mathbb{N}}$ is the set of orthonormal polynomials associated to the weight function $\omega(z)$, i.e. 
\begin{equation} \label{def of orthonormality}
\int_\C \phi_n(z) \overline{\phi_m(z)} \omega(z)\,dA(z) = \delta_{n,m}. 
\end{equation}
In particular, for the elliptic weight \eqref{def of elliptic weight}, the associated planar orthonormal polynomial is given by (see e.g. \cite{EM90,DGIL94} and \cite[Lemma 7]{ACV18}) 
\begin{equation}
\phi^{\rm H}_{N,n}(z) = \frac{1}{\sqrt{1-\tau^2}\sqrt{n!}} \Big(\frac{\tau}{2}\Big)^{n/2} H_n\Big(\sqrt{\frac{N}{2\tau}}z\Big),
\end{equation} 
where 
\begin{equation}
H_n(x)  := (-1)^n e^{x^2} \frac{d^n}{dx^n} e^{-x^2} 
\end{equation}
is the Hermite polynomial. 

For the general ensemble defined in \eqref{def of Gibbs}, the associated DSFF is introduced through \eqref{def of DSFF}. By combining \eqref{def of RNk det KN}, \eqref{def of FN disconn in terms of RN1 int}, \eqref{def of FN conn in terms of RN2 int}, and \eqref{def of KN gen OP}, one can immediately derive exact expressions for both the disconnected and connected parts of the DSFF in terms of the underlying orthonormal polynomials.

\begin{lem} \label{Lem_finite N general}
 We have 
    \begin{equation} \label{dis, conn def}
        \mathcal{F}_N^{\rm (d)} = \bigg\vert \sum_{j=0}^{N-1} F_{jj} \bigg\vert^2, \qquad \mathcal{F}_N^{\rm (c)} = N - \sum_{j,k=0}^{N-1} \big\vert F_{jk} \big\vert^2,
    \end{equation}
    where 
    \begin{equation} \label{Fjk def}
        F_{jk} := \int_{ \C } e^{itx+isy} \phi_j(z) \overline{\phi_k(z)} \omega(z) \, dA(z).
    \end{equation}
\end{lem}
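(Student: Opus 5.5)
The plan is to substitute the kernel representation \eqref{def of KN gen OP} directly into the integral formulas \eqref{def of FN disconn in terms of RN1 int} and \eqref{def of FN conn in terms of RN2 int}, and then use the finiteness of the sum defining $K_N$ to interchange summation and integration.

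For the disconnected part, note first that $\langle Z_N\rangle = \int_{\C} e^{itx+isy} R_{N,1}(z)\,dA(z)$; this is \eqref{def of RNk test} with $k=1$. Since $R_{N,1}(z)=K_N(z,z)=\omega(z)\sum_{j=0}^{N-1}|\phi_j(z)|^2$, linearity gives $\langle Z_N\rangle=\sum_{j=0}^{N-1}F_{jj}$. Taking the squared modulus then yields the first identity in \eqref{dis, conn def}. Each $F_{jk}$ converges absolutely, because $|e^{itx+isy}|=1$ and $\int|\phi_j\overline{\phi_k}|\omega\,dA\le 1$ by Cauchy--Schwarz and \eqref{def of orthonormality}.

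For the connected part, start from the determinantal form
\[
\mathcal{F}_N^{\rm (c)}=N-\int_{\C^2} e^{it(x_1-x_2)+is(y_1-y_2)}|K_N(z_1,z_2)|^2\,dA(z_1)\,dA(z_2),
\]
which follows from \eqref{def of RN1 2 KN}. Expanding the kernel gives
\[
|K_N(z_1,z_2)|^2=\omega(z_1)\omega(z_2)\sum_{j,k=0}^{N-1}\phi_j(z_1)\overline{\phi_k(z_1)}\,\overline{\phi_j(z_2)}\phi_k(z_2).
\]
Because the exponential factorises as $e^{itx_1+isy_1}\cdot\overline{e^{itx_2+isy_2}}$, the double integral splits for each pair $(j,k)$ into $F_{jk}\cdot\overline{F_{jk}}=|F_{jk}|^2$. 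Summing over $j,k$ gives the second identity.

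The only point needing care is justifying the exchange of the finite sum with the integral, and the splitting of the double integral via Fubini. Both are immediate from absolute integrability, which follows from Cauchy--Schwarz together with the orthonormality \eqref{def of orthonormality}, since $\omega$ decays like a Gaussian. There is no real obstacle beyond keeping track of which variable carries the complex conjugate in the factorisation.
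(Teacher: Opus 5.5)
Your proposal is correct and follows essentially the same route as the paper: substitute $K_N(z,w)=\sqrt{\omega(z)\omega(w)}\sum_{j}\phi_j(z)\overline{\phi_j(w)}$ into the one-point integral to get $\langle Z_N\rangle=\sum_j F_{jj}$, and into $N-\int e^{it(x_1-x_2)+is(y_1-y_2)}|K_N(z_1,z_2)|^2\,dA\,dA$, where the factorisation of the exponential turns each $(j,k)$ term into $F_{jk}\overline{F_{jk}}$. Your justification of the sum--integral interchange via Cauchy--Schwarz and orthonormality is something the paper does not spell out; it does not need Gaussian decay, so it holds for the general weight $\omega$ in the lemma.
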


\begin{comment} 
\begin{proof}
Due to the determinantal structure, the $1$- and $2$-point function is given by 
\begin{align}
R_{N,1}(z) &= K_N(z,z) = \omega(z) \sum_{ j=0 }^{N-1} |\phi_j(z)|^2,
\\
R_{N,2}(z,w) &= R_{N,1}(z)R_{N,1}(w) - K_N(z,w)K_N(w,z). 
\end{align}
Note that by definition \eqref{def of DSFF disc, conn}, the averaged disconnected part is givn in terms of the $1$-point function as 
\begin{align*}
\langle Z_N \rangle = \bigg\langle \sum_{j=1}^N e^{i t x_j+is y_j} \bigg\rangle  = \sum_{j=1}^N \int_{ \C } e^{it x_j + is y_j} R_{N,1}(z)\,dA(z) = \sum_{j=0}^{N-1} F_{jj}. 
\end{align*}
This gives the desired formula for $\mathcal{F}_N^{ \rm (d) }.$

On the other hand, note that 
\begin{align*}
  \mathcal{F}_N  & = N + \sum_{\substack{j,k=1 \\ j\ne k}}^N \int_{\mathbb{C}^N} e^{it(x_j-x_k)+is(y_j-y_k)} R_{N}(z_1,\cdots,z_N) \, dA(z_1)\cdots dA(z_N) \\
    & = N + \int_{ \C^2 } e^{it(x_1-x_2)+is(y_1-y_2)} R_{N,2}(z_1,z_2) \, dA(z_1) \, dA(z_2).
\end{align*}
Therefore it follows that  
\begin{align*}
\mathcal{F}_N^{ \rm (c) } = N - \int_{ \C^2 }  e^{it(x_1-x_2)+is(y_1-y_2)} |K_N(z_1,z_2)|^2 \,dA(z_1)\,dA(z_2) =N - \sum_{j,k=0}^{N-1} \big\vert F_{jk} \big\vert^2, 
\end{align*}
which completes the proof.  
\end{proof} 
\end{comment}

Next, we evaluate \(F_{jk}\) in \eqref{Fjk def} for the elliptic Ginibre ensemble, which we denote by
\begin{equation}
    F^{\rm H}_{jk} := \int_{\mathbb{C}} e^{itx+isy} \phi^{\rm H}_{N,j}(z) \overline{\phi^{\rm H}_{N,k}(z)} \omega^{\rm H}_N(z) \, dA(z).
\end{equation}
This quantity can be evaluated explicitly in terms of the generalised Laguerre polynomials, defined by
\begin{equation} \label{def of Laguerre}
L^{(\nu)}_k(x) := \frac{x^{-\nu}e^x}{k!} \frac{d^k}{dx^k} (x^{k+\nu} e^{-x})=\sum_{k=0}^{j}\frac{\Gamma(j+\nu+1)}{(j-k)!\Gamma(\nu+k+1)}\frac{(-z)^k}{k!}.  
\end{equation}
Note that it satisfies the orthogonality relation 
\begin{equation} \label{def of orthogonality of Laguerre}
\int_0^\infty x^\nu e^{-x} L_n^{(\nu)}(x) L_m^{(\nu)}(x)\,dx= \frac{\Gamma(n+\nu+1)}{ n! } \delta_{n,m},
\end{equation}
where $\delta$ is the Kronecker delta. 

\begin{prop}\label{Prop_eval of Fjk Hermite Laguerre}
Let $\tau \in [0,1)$. Then we have  
    \begin{equation} \label{Fjk elliptic}
        F^{\rm H}_{jk} = \exp\Big(-\frac{1-\tau^2}{8N}T^2-\frac{|\eta T|^2}{2N}\Big) \sqrt{\frac{k!}{j!}} \Big(\frac{i\eta T}{\sqrt{N}}\Big)^{j-k} L^{(j-k)}_{k}\Big(\frac{\vert\eta T\vert^2}{N}\Big), 
    \end{equation}
    where $T$ and $\eta$ are defined in \eqref{def of complex time variables} and \eqref{def of eta}, respectively. 
\end{prop}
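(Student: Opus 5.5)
The plan is to compute all the $F^{\rm H}_{jk}$ at once through a bivariate generating function built from the Hermite generating function
\[
\sum_{n\ge0} H_n(x)\frac{s^n}{n!}=e^{2xs-s^2}.
\]
Set $c_n:=\frac{1}{\sqrt{1-\tau^2}\sqrt{n!}}(\tau/2)^{n/2}$ and $\zeta:=\sqrt{N/(2\tau)}\,z$, so that $\phi^{\rm H}_{N,n}(z)=c_nH_n(\zeta)$. We then form
\[
\mathcal{G}(u,v):=\sum_{j,k\ge0}F^{\rm H}_{jk}\,\frac{u^j v^k}{\sqrt{j!\,k!}}\Big(\frac{2}{\tau}\Big)^{(j+k)/2}.
\]
The scaling is chosen so that $\mathcal{G}$ equals $(1-\tau^2)^{-1}$ times
\[
\int_{\C}e^{itx+isy}\exp\big(2\zeta u-u^2+2\bar\zeta v-v^2\big)\,\omega^{\rm H}_N(z)\,dA(z).
\]
Here $u,v$ are formal variables, or small complex numbers, which justifies exchanging sum and integral by absolute convergence. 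The plan is then to evaluate this integral in closed form and read off the coefficients.

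\textbf{Step 1 (Gaussian integral).} Write $itx+isy=\tfrac{i}{2}(\bar w z+w\bar z)$ with $w=t+is=Te^{i\theta}$. Using $x=\re z$ and $y=\operatorname{Im} z$, the weight $\exp(-N(|z|^2-\tau\re z^2)/(1-\tau^2))$ is a nondegenerate real Gaussian on $\R^2$. The exponent is a quadratic form in $(x,y)$ with linear terms depending on $u$, $v$, $w$, so completing the square gives an explicit result of the form
\[
\exp\big(Q_0(w)+\beta u+\beta' v+D\,uv+E u^2+E'v^2\big).
\]
At $w=0$, the orthonormality \eqref{def of orthonormality} forces $\mathcal{G}(u,v)=\sum_j (2/\tau)^j u^jv^j/j!$. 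Hence $D=2/\tau$ and $E=E'=0$; these quantities do not depend on $w$, because $w$ only enters the linear terms. I expect the computation to yield
\[
\beta=\frac{i\eta T}{\sqrt N}\sqrt{\tfrac{2}{\tau}},\qquad \beta'=\frac{i\bar\eta\bar{}\,T}{\sqrt N}\sqrt{\tfrac{2}{\tau}},\qquad Q_0=-\frac{1-\tau^2}{8N}T^2-\frac{|\eta T|^2}{2N}.
\]
Here $\beta'$ involves the conjugate combination $\tfrac{(1+\tau)\cos\theta}{2}-\tfrac{i(1-\tau)\sin\theta}{2}$. As a check, $Q_0$ should reproduce the characteristic function of the one-eigenvalue density when $N=1$.

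\textbf{Step 2 (coefficient extraction).} After absorbing the $(2/\tau)$ factors, we need the coefficient of $u^jv^k$ in $\exp(uv+bu+b'v)$, where $b=i\eta T/\sqrt N$ and $b'$ is its partner with $bb'=-|\eta T|^2/N$. For $j\ge k$, expanding the exponential gives
\[
\sum_{l=0}^{k}\frac{b^{j-l}b'^{\,k-l}}{l!(j-l)!(k-l)!}
=\frac{b^{j-k}}{j!}\sum_{m=0}^k\frac{j!}{(k-m)!\,(j-k+m)!}\frac{(bb')^{m}}{m!}
=\frac{b^{j-k}}{j!}\,k!\,\frac{1}{k!}\,\cdots
\]
Comparing with \eqref{def of Laguerre}, this equals $\frac{k!}{j!}\cdot\frac{b^{j-k}}{k!}L_k^{(j-k)}(-bb')$. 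Multiplying by $\sqrt{j!k!}$ then gives
\[
\sqrt{\tfrac{k!}{j!}}\,\Big(\frac{i\eta T}{\sqrt N}\Big)^{j-k}L_k^{(j-k)}\Big(\frac{|\eta T|^2}{N}\Big)e^{Q_0},
\]
which is \eqref{Fjk elliptic}. The case $j<k$ follows either by the symmetric computation or from the identity $L_k^{(j-k)}(x)=\frac{j!}{k!}(-x)^{k-j}L_j^{(k-j)}(x)$, which interprets the formula for negative $j-k$.

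\textbf{Main obstacle.} The main obstacle is the bookkeeping in Step 1. One has to confirm that the linear coefficients combine exactly into $\eta$ and its partner, with $\eta$ as in \eqref{def of eta}, and that $Q_0$ comes out as stated. To handle this, I would diagonalise the weight in the coordinates $x/\sqrt{1+\tau}$ and $y/\sqrt{1-\tau}$, which turns it into a standard Gaussian. The Fourier variable then becomes $(t\sqrt{1+\tau},s\sqrt{1-\tau})$, and the quantities $\tfrac{1\pm\tau}{2}$ appearing in $\eta$ emerge naturally.
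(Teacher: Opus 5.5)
Your proposal is correct, but it reaches the formula by a different route than the paper. The paper works entry by entry. It completes the square in $e^{itx+isy}\omega^{\rm H}_N$ and shifts the integration variable by the complex amount $i\eta T/N$ (so $\bar z$ is shifted by $i\bar\eta T/N$). It then expands both shifted Hermite polynomials with the addition formula $H_n(a+b)=\sum_l\binom{n}{l}H_l(a)(2b)^{n-l}$. Orthonormality collapses the result to $\frac{1}{\sqrt{j!k!}}\sum_l l!\binom{j}{l}\binom{k}{l}b^{j-l}b'^{\,k-l}$, with $b=i\eta T/\sqrt N$ and $b'=i\bar\eta T/\sqrt N$.

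You instead encode all entries in one generating function. The only analytic input is then a single standard Gaussian integral with complex linear terms, in place of the complex translation and the addition theorem. Orthonormality enters only through $F^{\rm H}_{jk}=\delta_{jk}$ at $T=0$. Your coefficient of $u^jv^k$, multiplied by $\sqrt{j!k!}$, is exactly the paper's double sum. The final identification with $L_k^{(j-k)}$, and the reflection identity for $j<k$, are therefore common to both.

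The values you anticipate in Step~1 are right. Take $a=\sqrt{2N/\tau}$ and the quadratic form $\mathrm{diag}(N/(1+\tau),N/(1-\tau))$.
\begin{itemize}
\item The $(u,v)$-quadratic part is $u^2+v^2+\frac{2}{\tau}uv$, which cancels $-u^2-v^2$.
\item The linear coefficients are $\frac{a}{2N}\big[(1+\tau)it\mp(1-\tau)s\big]$, i.e.\ $\sqrt{2/\tau}\,i\eta T/\sqrt N$ and $\sqrt{2/\tau}\,i\bar\eta T/\sqrt N$.
\item The constant is $-\frac{(1+\tau)t^2+(1-\tau)s^2}{4N}=Q_0$.
\end{itemize}
Fixing the overall constant via $F^{\rm H}_{00}=1$, as you do, is also the safer choice. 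It does not depend on the explicit prefactor in $\phi^{\rm H}_{N,n}$.

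The cost of your route is justifying the interchange of sum and integral and the coefficient comparison, which is harmless. The bounds $\sum_n|H_n(\zeta)|r^n/n!\le e^{2|\zeta|r+r^2}$ and $|F^{\rm H}_{jk}|\le 1$ make both sides entire in $(u,v)$. Like the paper's argument, yours uses $\sqrt{N/(2\tau)}$, so it reaches $\tau=0$ only by continuity.
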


We note that Proposition~\ref{Prop_eval of Fjk Hermite Laguerre} remains valid in the Hermitian limit \(\tau=1\).
In this case, the identity is well known; see e.g. \cite[Proposition~13]{Fo21} and the references therein.
By contrast, for the planar case \(\tau \in [0,1)\), this identity does not appear to be explicitly recorded in the literature, although it was used implicitly in the supplementary material of \cite{SKSK24}. %, where partial reliance on computer algebra is mentioned.

\begin{proof}[Proof of Proposition~\ref{Prop_eval of Fjk Hermite Laguerre}]
The complex exponential \(e^{itx+isy}\) can be absorbed into the weight function, which leads to a convenient translation of variables. Indeed, we have 
\begin{equation*}
\begin{split}
    F^{\rm H}_{jk} & = \int_{\mathbb{C}} \phi^{\rm H}_{N,j}(z)\phi^{\rm H}_{N,k}(\overline{z}) \exp\Big(-\frac{Nx^2}{1+\tau}-\frac{Ny^2}{1-\tau}+itx+isy\Big) \, dA(z) \\
    & = \exp\Big(-\frac{1+\tau}{4N}t^2-\frac{1-\tau}{4N}s^2\Big) \int_{\mathbb{C}} \phi^{\rm H}_{N,j}(z)\phi^{\rm H}_{N,k}(\overline{z}) \exp\Big(-\frac{N(x-\frac{it(1+\tau)}{2N})^2}{1+\tau}-\frac{N(y-\frac{is(1-\tau)}{2N})^2}{1-\tau}\Big) \, dA(z) \\
    & = \exp\Big(-\frac{1-\tau^2}{8N}T^2-\frac{|\eta T|^2}{2N}\Big) \int_{\mathbb{C}} \phi^{\rm H}_{N,j}\Big(z+\frac{i\eta T}{N}\Big)\phi^{\rm H}_{N,k}\Big(\overline{z}+\frac{i\overline{\eta}T}{N}\Big) \exp\Big(-\frac{Nx^2}{1+\tau}-\frac{Ny^2}{1-\tau}\Big) \, dA(z).
\end{split}
\end{equation*}
Using the identity
\begin{equation*}
    H_n(a+b) = \sum_{l=0}^n \binom{n}{l} H_l(a) (2b)^{n-l}
\end{equation*}
together with the orthonormality \eqref{def of orthonormality} of the family \((\phi^{\rm H}_{N,n})_{n\in\mathbb{N}}\), we obtain
\begin{equation*}
    F^{\rm H}_{jk} = \exp\Big(-\frac{1+\tau}{4N}t^2-\frac{1-\tau}{4N}s^2\Big) \frac{1}{\sqrt{j!\,k!}} \sum_{l=0}^{j\wedge k} l!\binom{j}{l}\binom{k}{l} \Big(\frac{i\eta T}{\sqrt{N}}\Big)^{j-l} \Big(\frac{i\overline{\eta}T}{\sqrt{N}}\Big)^{k-l}.
\end{equation*}
If \(j\ge k\), this expression simplifies to
\begin{equation*}
\begin{split}
    F^{\rm H}_{jk} & = \exp\Big(-\frac{1+\tau}{4N}t^2-\frac{1-\tau}{4N}s^2\Big) \sqrt{\frac{k!}{j!}} \Big(\frac{i\eta T}{\sqrt{N}}\Big)^{j-k} \sum_{l=0}^{k} \binom{j}{l}\frac{1}{(k-l)!} \Big(-\frac{|\eta T|^2}{N}\Big)^{k-l} \\
    & = \exp\Big(-\frac{1+\tau}{4N}t^2-\frac{1-\tau}{4N}s^2\Big) \sqrt{\frac{k!}{j!}} \Big(\frac{i\eta T}{\sqrt{N}}\Big)^{j-k} L^{(j-k)}_{k}\Big(\frac{\vert\eta T\vert^2}{N}\Big).
\end{split}
\end{equation*}
The case \(j<k\) follows by symmetry using the identity
\begin{equation*}
    \frac{(-x)^m}{m!}L^{(m-n)}_n(x) = \frac{(-x)^n}{n!}L^{(n-m)}_m(x).
\end{equation*}
This completes the proof. 
\end{proof}

As an immediate consequence of Lemma~\ref{Lem_finite N general} and
Proposition~\ref{Prop_eval of Fjk Hermite Laguerre}, together with the
summation identity for Laguerre polynomials 
\begin{equation}
    L^{(\nu+1)}_n(x) = \sum_{k=0}^{n} L^{(\nu)}_k(x), 
\end{equation}
we obtain the following, which also appears in \cite[Eqs.~(5), (6)]{SKSK24}.

\begin{prop} \label{prop finite N}
For all positive integers $N$ and $\tau \in [0,1)$, the following exact identities hold. 
\begin{itemize}
    \item We have  
    \begin{equation} \label{dis finite}
        \mathcal{F}_{N}^{(\textup{d})} = \exp\Big(-\frac{1-\tau^2}{4N}T^2\Big) f_N\Big(\frac{\vert\eta T\vert^2}{N}\Big),
    \end{equation}
    where 
    \begin{equation} \label{def of fN}
     f_N(x) := e^{-x} L_{N-1}^{(1)}(x)^2. 
    \end{equation}
    \item We have
    \begin{equation} \label{conn finite}
        \mathcal{F}_{N}^{(\textup{c})} = N - \exp\Big(-\frac{1-\tau^2}{4N}T^2\Big) \Psi_N\Big(\frac{\vert\eta T\vert^2}{N}\Big),
    \end{equation}
    where 
\begin{equation} \label{def of PsiN}
 \Psi_N(x) := e^{-x}  \sum_{j,k=0}^{N-1} (-1)^{j-k} L_j^{(k-j)}(x) L_k^{(j-k)}(x) . 
\end{equation}
\end{itemize}
\end{prop}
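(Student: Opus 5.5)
The plan is to substitute the explicit formula \eqref{Fjk elliptic} of Proposition~\ref{Prop_eval of Fjk Hermite Laguerre} into the two expressions of Lemma~\ref{Lem_finite N general}, and then simplify using only algebraic properties of the Laguerre polynomials. Throughout, write $x := |\eta T|^2/N$ and $c := \exp\big(-\frac{1-\tau^2}{8N}T^2-\frac{x}{2}\big)$. Then every entry has the form
\[
F^{\rm H}_{jk} = c\,\sqrt{k!/j!}\,\big(i\eta T/\sqrt{N}\big)^{j-k}L^{(j-k)}_k(x),
\]
and $c^2 = \exp\big(-\frac{1-\tau^2}{4N}T^2\big)e^{-x}$. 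This already accounts for both exponential prefactors in \eqref{dis finite} and \eqref{conn finite}, the second of which is absorbed into $f_N$ and $\Psi_N$.

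\textbf{Disconnected part.} The diagonal entries are $F^{\rm H}_{jj}=c\,L^{(0)}_j(x)$, so $\sum_{j=0}^{N-1}F^{\rm H}_{jj}=c\,L^{(1)}_{N-1}(x)$ by the summation identity $L^{(\nu+1)}_n=\sum_{k\le n}L^{(\nu)}_k$ with $\nu=0$. The quantity $c$ is real and positive, because $\tau\in[0,1)$ and $x\ge 0$, and $L^{(1)}_{N-1}(x)$ is real. Taking the squared modulus therefore gives $c^2L^{(1)}_{N-1}(x)^2$, which is exactly \eqref{dis finite} with $f_N$ as in \eqref{def of fN}.

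\textbf{Connected part.} We need $\sum_{j,k}|F^{\rm H}_{jk}|^2$. For $j\ge k$,
\[
|F^{\rm H}_{jk}|^2=c^2\,\frac{k!}{j!}\,x^{j-k}\,L^{(j-k)}_k(x)^2,
\]
since $|i\eta T/\sqrt N|^{2(j-k)}=x^{j-k}$. To put this in the symmetric form of \eqref{def of PsiN}, apply the reflection identity from the proof of Proposition~\ref{Prop_eval of Fjk Hermite Laguerre}, namely $\frac{(-x)^m}{m!}L^{(m-n)}_n(x)=\frac{(-x)^n}{n!}L^{(n-m)}_m(x)$, with $m=j$, $n=k$. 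This gives
\[
L^{(k-j)}_j(x)=(-x)^{k-j}\frac{j!}{k!}L^{(j-k)}_k(x),
\]
and hence
\[
(-1)^{j-k}L^{(k-j)}_j(x)\,L^{(j-k)}_k(x)=x^{k-j}\frac{j!}{k!}L^{(j-k)}_k(x)^2.
\]
Note that the parenthetical exponent $x^{k-j}$ here is $x^{-(j-k)}$, whereas the term we need carries $x^{j-k}$, so the naive substitution appears mismatched. The main obstacle is this bookkeeping of powers of $x$ and factorials, and I would settle it by checking the relevant convention directly. The summand in \eqref{def of PsiN} is symmetric in $(j,k)$, so it suffices to treat $j<k$ and use the identity in the other direction. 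Alternatively, I would recast $|F^{\rm H}_{jk}|^2$ as $c^2\,\frac{k!}{j!}x^{j-k}L^{(j-k)}_k L^{(j-k)}_k$ and convert exactly one of the two Laguerre factors with the reflection identity; the factor $\frac{k!}{j!}x^{j-k}$ should then cancel against $\frac{j!}{k!}(-x)^{k-j}$, leaving $(-1)^{j-k}L^{(k-j)}_jL^{(j-k)}_k$. The case $j<k$ is identical by symmetry, because $|F^{\rm H}_{jk}|=|F^{\rm H}_{kj}|$. Summing over all $j,k$ and factoring out $c^2$ then gives $\exp\big(-\frac{1-\tau^2}{4N}T^2\big)\Psi_N(x)$, which yields \eqref{conn finite} via Lemma~\ref{Lem_finite N general}.

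The only delicate point is this sign/power verification. Convergence is not an issue since the sums are finite, and everything else is direct substitution.
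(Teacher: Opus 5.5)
Your plan is the paper's own: substitute \eqref{Fjk elliptic} into Lemma~\ref{Lem_finite N general}, sum the diagonal with $\sum_{j<N}L^{(0)}_j=L^{(1)}_{N-1}$, and match $|F^{\rm H}_{jk}|^2$ with the summand of $\Psi_N$. The disconnected part is complete as you wrote it.

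In the connected part, the ``mismatch'' you flag does not come from a convention. It comes from solving the reflection identity the wrong way round. With $m=j$, $n=k$ the identity says $\frac{(-x)^j}{j!}L^{(j-k)}_k(x)=\frac{(-x)^k}{k!}L^{(k-j)}_j(x)$, hence
\[
L^{(k-j)}_j(x)=\frac{k!}{j!}\,(-x)^{j-k}\,L^{(j-k)}_k(x),
\qquad
(-1)^{j-k}L^{(k-j)}_j(x)\,L^{(j-k)}_k(x)=\frac{k!}{j!}\,x^{j-k}\,L^{(j-k)}_k(x)^2 .
\]
This is exactly $|F^{\rm H}_{jk}|^2/c^2$ for $j\ge k$. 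Both of your displayed formulas carry the reciprocal of the correct factor. Your ``convert one factor'' fallback silently uses the correct form $L^{(j-k)}_k=\frac{j!}{k!}(-x)^{k-j}L^{(k-j)}_j$, so it does go through. Still, replace the hedged ``should then cancel'' by the one-line computation above. That version, with the nonnegative power $x^{j-k}$, is a polynomial identity and so also covers $T=0$.

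The extension to $j<k$ uses $|F^{\rm H}_{jk}|=|F^{\rm H}_{kj}|$, which you assert without proof. It follows from $\overline{F^{\rm H}_{jk}}=(-1)^{j-k}F^{\rm H}_{kj}$. To see this, conjugate the defining integral and substitute $z\mapsto -z$, using that $\omega^{\rm H}_N$ is even and $\phi^{\rm H}_{N,n}(-z)=(-1)^n\phi^{\rm H}_{N,n}(z)$.

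This is also how the paper organises the step. It writes $|F^{\rm H}_{jk}|^2=(-1)^{j-k}F^{\rm H}_{jk}F^{\rm H}_{kj}$ and inserts \eqref{Fjk elliptic} for both orderings. The prefactors $\sqrt{k!/j!}\,(i\eta T/\sqrt N)^{j-k}$ and $\sqrt{j!/k!}\,(i\eta T/\sqrt N)^{k-j}$ then cancel outright, and the summand $(-1)^{j-k}L^{(k-j)}_jL^{(j-k)}_k$ appears directly. In the paper, the reflection identity was already used in deriving the $j<k$ case of \eqref{Fjk elliptic}.
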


\begin{comment} 
\begin{proof}
Given \eqref{Fjk elliptic}, the disconnected part becomes
\begin{equation*}
    \mathcal{F}^{\rm H}_{\text{dis}} = \exp\Big(-\frac{1-\tau^2}{4N}T^2-\frac{|\eta T|^2}{N}\Big) \bigg\vert \sum_{j=0}^{N-1} L^{(0)}_{j}\Big(\frac{\vert\eta T\vert^2}{N}\Big) \bigg\vert^2
\end{equation*}
Then, \eqref{dis finite} follows immediately from the summation formula of the Laguerre polynomials:
\begin{equation*}
    L^{(\nu+1)}_n(x) = \sum_{k=0}^{n} L^{(\nu)}_k(x).
\end{equation*}

Next, observe that
\begin{equation*}
    \overline{F^{\rm H}_{jk}} = (-1)^{j-k} F^{\rm H}_{kj} = (-1)^{j-k} \exp\Big(-\frac{1-\tau^2}{8N}T^2-\frac{|\eta T|^2}{2N}\Big) \sqrt{\frac{j!}{k!}} \Big(\frac{i\eta T}{\sqrt{N}}\Big)^{k-j} L^{(k-j)}_{j}\Big(\frac{\vert\eta T\vert^2}{N}\Big),
\end{equation*}
which follows from the definition of $F^{\rm H}_{jk}$ and the property of the Hermite polynomials $H_n(-x)=(-1)^nH_n(x)$. Thus we have
\begin{equation*}
    \vert F^{\rm H}_{jk}\vert^2 = \exp\Big(-\frac{1-\tau^2}{4N}T^2-\frac{\vert\eta T\vert^2}{N}\Big) (-1)^{j-k} L^{k-j}_j\Big(\frac{\vert\eta T\vert^2}{N}\Big) L^{j-k}_k\Big(\frac{\vert\eta T\vert^2}{N}\Big).
\end{equation*} 
\end{proof}

\end{comment}

We discuss an important structural relation underlying the functions $f_N$ and $\Psi_N$ defined in \eqref{def of fN} and \eqref{def of PsiN}. For this purpose, we denote by 
\begin{equation}  \label{def of rho}
    \rho_N(x) = e^{-x} \sum_{k=0}^{N-1} L_k^{(0)}(x)^2
\end{equation}
the one-point function of the 
Laguerre unitary ensemble (LUE) 
with rectangular parameter equal to zero.
By the Christoffel--Darboux formula, \(\rho_N\) can be rewritten as
\begin{equation}
\rho_N(x) = N e^{-x} \Big( L_{N-1}^{(0)}(x)L_{N-1}^{(1)}(x) - L_N^{(0)}(x) L^{(1)}_{N-2}(x)\Big), 
\end{equation}
where we have used the differentiation identity
\begin{equation} \label{derivative of Laguerre}
\frac{d}{dx} L_N^{(\nu)}(x) = -L_{N-1}^{(\nu+1)}(x). 
\end{equation}
A remarkable feature is that $f_N$ and $\Psi_N$ satisfy the following simple differentiation relations, in which $\rho_N$ plays a central intermediary role. 
This identity goes back to Brézin and Hikami~\cite[Eq.~(4.6)]{BH97}; see also  \cite[Eq.~(2.19)]{Oku19} and \cite[Eq.~(3.20)]{Fo21}.

\begin{prop} \label{Prop_BH relation}
    We have
    \begin{align}
     \Psi_N(x) & = \int_x^\infty \rho_N(u) \,du = N - \int_{0}^x \rho_N(u) \,du,
        \\
    \rho_N(x) &= \int_x^\infty f_N(u) \,du = N - \int_{0}^x f_N(u) \,du.
    \end{align} 
\end{prop}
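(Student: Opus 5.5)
The plan is to show that both identities are of the form ``value at $0$ plus derivative'', and to note that the integrals to $+\infty$ converge. Each of $\Psi_N,\rho_N,f_N$ is $e^{-x}$ times a polynomial, so each decays at $+\infty$. Hence it suffices to prove
\[
\Psi_N(0)=\rho_N(0)=N,\qquad \Psi_N'(x)=-\rho_N(x),\qquad \rho_N'(x)=-f_N(x).
\]
The equality of the two forms in each line then follows from $\int_0^\infty \rho_N=\Psi_N(0)-\Psi_N(\infty)=N$, and similarly $\int_0^\infty f_N=N$.

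\emph{Values at $0$.} Since $L_k^{(0)}(0)=1$, we get $\rho_N(0)=N$. For $\Psi_N(0)$, I would use the origin of $\Psi_N$ rather than the Laguerre formula. By Lemma~\ref{Lem_finite N general} and Proposition~\ref{prop finite N}, $\Psi_N(|\eta T|^2/N)$ equals $\exp(\frac{1-\tau^2}{4N}T^2)\sum_{j,k}|F^{\rm H}_{jk}|^2$. At $T=0$ we have $F^{\rm H}_{jk}=\delta_{jk}$ by orthonormality \eqref{def of orthonormality}, which gives $\Psi_N(0)=N$. One can also check this directly: $L_j^{(k-j)}(0)=\binom{k}{j}$, so only the diagonal terms survive.

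\emph{$\rho_N'=-f_N$.} Differentiate $\rho_N=e^{-x}\sum_{k<N}(L_k^{(0)})^2$ and apply \eqref{derivative of Laguerre}. This gives
\[
\rho_N'=-e^{-x}\sum_{k=0}^{N-1}\Big( (L_k^{(0)})^2+2L_k^{(0)}L_{k-1}^{(1)}\Big).
\]
The summation identity gives $L_{k-1}^{(1)}=\sum_{m<k}L_m^{(0)}$. The bracketed sum is therefore $\sum_k (L_k^{(0)})^2+2\sum_{m<k}L_k^{(0)}L_m^{(0)}=\big(\sum_{k<N}L_k^{(0)}\big)^2=(L_{N-1}^{(1)})^2$, so $\rho_N'=-f_N$.

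\emph{$\Psi_N'=-\rho_N$.} This is the main step. I would use an operator formulation: the matrix elements in Proposition~\ref{Prop_eval of Fjk Hermite Laguerre} are those of a harmonic-oscillator displacement operator. Let $a,a^\dagger$ act on $\ell^2(\mathbb N)$ by $a|n\rangle=\sqrt n|n-1\rangle$, and set $D(\beta)=\exp(\beta a^\dagger-\bar\beta a)$. Then $\langle j|D(\beta)|k\rangle=e^{-|\beta|^2/2}\sqrt{k!/j!}\,\beta^{j-k}L_k^{(j-k)}(|\beta|^2)$, the same form as \eqref{Fjk elliptic}. Consequently $\Psi_N(x)=\operatorname{Tr}(PDPD^*)$, where $P$ is the projection onto $\mathrm{span}\{|0\rangle,\dots,|N-1\rangle\}$ and $\beta=re^{i\varphi}$ with $r^2=x$.

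Put $A=e^{i\varphi}a^\dagger-e^{-i\varphi}a$. This operator is anti-Hermitian and commutes with $D$, and $\partial_rD=AD$. Using cyclicity of the trace,
\[
\partial_r\operatorname{Tr}(PDPD^*)=\operatorname{Tr}([P,A]\,DPD^*).
\]
The commutator $[P,A]$ is supported on the transition between $|N-1\rangle$ and $|N\rangle$, with coefficient $\sqrt N$. The derivative therefore reduces to $-2\sqrt N\,\re\big(e^{i\varphi}\sum_{k<N}\langle N|D|k\rangle\overline{\langle N-1|D|k\rangle}\big)$.

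Substituting the Laguerre matrix elements and collapsing the sum over $k$ with the summation and differentiation identities should give $-2r\cdot Ne^{-x}\big(L_{N-1}^{(0)}L_{N-1}^{(1)}-L_N^{(0)}L_{N-2}^{(1)}\big)$. By the Christoffel--Darboux form this is $-2r\rho_N$, and since $\partial_r=2r\,\partial_x$ it yields $\Psi_N'=-\rho_N$.

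I expect the obstacle to be this final matching of the boundary sum with the Christoffel--Darboux expression. If it becomes messy, my fallback is to differentiate the termwise expression $e^{-x}\frac{k!}{j!}x^{j-k}(L_k^{(j-k)})^2$ directly and look for a telescoping over $j$ and $k$.
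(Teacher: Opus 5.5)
Your preliminary steps are correct:
\begin{itemize}
\item the decay of $e^{-x}\times$ polynomial at infinity;
\item the values $\Psi_N(0)=\rho_N(0)=N$, e.g.\ via $L_j^{(k-j)}(0)=\binom{k}{j}$;
\item $\rho_N'=-f_N$, obtained by completing the square $\sum_k (L_k^{(0)})^2+2\sum_{m<k}L_k^{(0)}L_m^{(0)}=(L_{N-1}^{(1)})^2$. The paper instead gets this from Lemma~\ref{Lem_Laguerre sum diff identity} with $\nu=0$, telescoping via $L_k^{(\nu)}=L_k^{(\nu+1)}-L_{k-1}^{(\nu+1)}$.
\end{itemize}
The identification $\Psi_N(x)=\operatorname{Tr}(PDPD^*)$ and the commutator formula are also right.

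The paper does not derive the first line; it quotes Brézin--Hikami. So your argument has to carry that line on its own, and it stops exactly where the content lies: the boundary term is only asserted to ``collapse''. After inserting the matrix elements, what you need is
\[
\sum_{k=0}^{N-1}\frac{k!}{(N-1)!}\,x^{N-1-k}L_k^{(N-k)}(x)\,L_k^{(N-1-k)}(x)=\sum_{k=0}^{N-1}L_k^{(0)}(x)^2 .
\]
Here degree and parameter vary together along the sum. This is not a visible consequence of the summation and differentiation identities you cite, and no step you describe reaches the two-term Christoffel--Darboux form. It is precisely the nontrivial part of the Brézin--Hikami relation.

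There is also a phase slip. The correct expression is $\operatorname{Tr}([P,A]DPD^*)=-2\sqrt N\,\re\big(e^{-i\varphi}\sum_{k<N}\langle N|D|k\rangle\overline{\langle N-1|D|k\rangle}\big)$. Each summand equals $\beta$ times a real number, so with $e^{i\varphi}$, as you wrote, you would get a spurious $\cos 2\varphi$.

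The gap can be closed inside your framework without any Laguerre algebra. Put $B_N:=\sqrt N e^{-i\varphi}\langle N|DP_ND^*|N-1\rangle$ with $P_N=P$, so that $\partial_r\Psi_N=-2\re B_N$. Use $\sqrt N\langle N|=\langle N-1|a$, together with the intertwining relations $aD=D(a+\beta)$, $aD^*=D^*(a-\beta)$ and $aP_N=P_{N-1}a$. One finds
\[
B_N=B_{N-1}+r\big(\langle N-1|DP_ND^*|N-1\rangle-\langle N-1|DP_{N-1}D^*|N-1\rangle\big)=B_{N-1}+r\,|\langle N-1|D|N-1\rangle|^2,
\]
with $B_1=r\,e^{-x}$. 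Since $\langle n|D|n\rangle=e^{-x/2}L_n^{(0)}(x)$, this gives $B_N=r\rho_N(x)$. Hence $\partial_r\Psi_N=-2r\rho_N$, and since $\partial_r=2r\,\partial_x$, $\Psi_N'=-\rho_N$.

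With this recursion in place of the unverified collapse, and the phase corrected, your argument becomes a complete, self-contained proof of the relation that the paper only cites.
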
 

In Proposition~\ref{Prop_Psi alternative}, we present alternative representations of the function $\Psi_N$. Although these alternative representations are not used in the asymptotic analysis in the subsequent section, they involve certain functional identities for Laguerre functions that may be of independent interest and potentially useful in other contexts. 

Indeed, Proposition~\ref{Prop_BH relation} was used in~\cite{Fo21} to analyse the asymptotic behaviour of the SFF for the GUE. More precisely, one begins with the asymptotics of $\rho_N$, obtained from the corresponding asymptotics of the density of the  %Laguerre unitary ensemble (LUE)
LUE, and then integrates once to derive the asymptotic behaviour of $\Psi_N$; see Remark~\ref{Rem_LUE density} for further details. 

We also note that the function $f_N$ can be identified with the generating function of spectral moments of the LUE; see \cite[Eq.~(4.28)]{CMOS19}. 

%However, at the level of precision required in the present setting, this approach leads to substantial technical difficulties. In particular, performing two successive integrations of asymptotic expansions makes it challenging to retain uniform control over the associated error terms. This difficulty is compounded by the fact that Laguerre polynomials exhibit qualitatively different asymptotic behaviours across different regimes of the argument. As a result, one is led to employ uniform asymptotic expansions of the Laguerre polynomials with sharply controlled error bounds, which will be developed in detail in the next section.

\medskip 

\section{Large-$N$ asymptotic analysis}   \label{Section_asymptotic}

In this section, we perform the asymptotic analysis and complete the proofs of the main theorems. 
By Proposition~\ref{prop finite N}, it suffices to establish the precise asymptotic behaviour of the quantities $f_N$ and $\Psi_N$. 

In Subsection~\ref{Subsec_asymptotic of fN PsiN proofs of main}, we first state the asymptotic expansions of $f_N$ and $\Psi_N$ in Propositions~\ref{prop asymp f} and~\ref{prop asymp Psi}, respectively. We then combine these results to prove our main theorems, Theorems~\ref{thm strong nH}, \ref{thm mesoscopic nH}, and~\ref{thm weak nH}. 
The proofs of Propositions~\ref{prop asymp f} and~\ref{prop asymp Psi} are deferred to the subsequent subsections, Subsection~\ref{Subsec_proof of Prop asymp f_N} and Subsection~\ref{Subsectoin_prop asymp rho Psi}.

\subsection{Proof of main theorems} \label{Subsec_asymptotic of fN PsiN proofs of main}

By the exact identities derived in the previous section, our asymptotic analysis reduces to obtaining uniform asymptotics for the Laguerre polynomials. To this end, it is necessary to distinguish several asymptotic regions. More precisely, for fixed constants $c,d>0$, we define
\begin{equation} \label{def of regions I II III IV}
    \RN{1} := [0,c], \qquad \RN{2} := [c,4N-d\sqrt{N}], \qquad \RN{3} := [4N-d\sqrt{N},4N+d\sqrt{N}], \qquad \RN{4} := [4N+d\sqrt{N},\infty) 
\end{equation}
which we refer to as the \emph{Bessel regime}, \emph{oscillatory regime}, \emph{Airy regime}, and \emph{exponential regime}, respectively. See Figure~\ref{Fig_Laguerre regimes}. 

\begin{figure}[h!]
\centering
\begin{tikzpicture}[scale=1.05, every node/.style={font=\small}]
    % key points
    \coordinate (A) at (0,0);
    \coordinate (B) at (2.5,0);
    \coordinate (C) at (7.0,0);
    \coordinate (D) at (9.5,0);
    \coordinate (E) at (13.0,0);

    % shaded regions
    \fill[blue!7]    (A |- 0,0) rectangle (B |- 0,0.85);
    \fill[green!7]   (B |- 0,0) rectangle (C |- 0,0.85);
    \fill[orange!10] (C |- 0,0) rectangle (D |- 0,0.85);
    \fill[red!7]     (D |- 0,0) rectangle (E |- 0,0.85);

    % axis
    \draw[->, thick] (-0.3,0) -- (13.5,0) node[right] {$x$};

    % interval line
    \draw[line width=1.2pt] (A) -- (E);

    % ticks
    \foreach \P/\lab in {
        A/$0$,
        B/$c$,
        C/$4N-d\sqrt N$,
        D/$4N+d\sqrt N$
    }{
        \draw[thick] (\P) ++(0,-0.12) -- ++(0,0.24);
        \node[below=6pt] at (\P) {\lab};
    }

    % boundary guides
    \foreach \P in {B,C,D}{
        \draw[dotted] (\P |- 0,0) -- (\P |- 0,0.85);
    }

    % interval labels
    \node[above=12pt] at ($(A)!0.5!(B)$) {\textbf{I}};
    \node[above=28pt] at ($(A)!0.5!(B)$) {Bessel};

    \node[above=12pt] at ($(B)!0.5!(C)$) {\textbf{II}};
    \node[above=28pt] at ($(B)!0.5!(C)$) {Oscillatory};

    \node[above=12pt] at ($(C)!0.5!(D)$) {\textbf{III}};
    \node[above=28pt] at ($(C)!0.5!(D)$) {Airy};

    \node[above=12pt] at ($(D)!0.5!(E)$) {\textbf{IV}};
    \node[above=28pt] at ($(D)!0.5!(E)$) {Exponential};

    % infinity
    \node[below=6pt] at (E) {$\infty$};

\end{tikzpicture}
\caption{Decomposition of the positive real axis into the four asymptotic regimes for the Laguerre polynomials.}
\label{Fig_Laguerre regimes}
\end{figure} 

Recall that $f_N(x)$ is defined in \eqref{def of fN}. We first derive its uniform asymptotic behaviour in all relevant regimes, with sufficient precision for the subsequent analysis. In particular, these estimates will allow us to obtain the asymptotic behaviours of $\Psi_N$ and $\rho_N$ via Proposition~\ref{Prop_BH relation}. For this purpose, we recall that the Airy function is defined by
\begin{equation}
    \Ai(x) = \frac{1}{\pi}\int_0^\infty\cos\Big(\frac{t^3}{3}+xt\Big)\,dt;
\end{equation}
see e.g. \cite[Chapter 9]{NIST}. Moreover, we introduce the following notations. First, we define 
 \begin{equation} \label{def xi}
        \xi(x):=\frac{1}{2}\big(\sqrt{x-x^2}+\arcsin\sqrt{x}\big).
    \end{equation} 
For $p>0$, we also write 
   \begin{equation} \label{def of envAi}
        \textup{envAi}_p(x) = \begin{cases} \displaystyle
            x^p \exp\Big(-\frac{2}{3}x^{3/2}+\frac{2}{3}\Big) & \text{if }x\ge1, 
            \smallskip 
            \\ \displaystyle
            1 & \text{if }-1\le x<1,
            \smallskip 
            \\ \displaystyle
            (-x)^p & \text{if } x<-1,
        \end{cases}
    \end{equation}
where $\operatorname{env}$ denotes the envelope.

\medskip 

\begin{prop}[\textbf{Asymptotics of }$f_N$] \label{prop asymp f}
Let $\mathsf{x}=x/4N$, $\mathsf{X}=4Nx$ and $y=(2N)^{2/3}(\mathsf{x}-1)$.
\begin{itemize}
    \item[\textup{(i)}] \textup{\textbf{(Bessel regime)}} As $N\to\infty$, we have   
    \begin{align}
    \begin{split} \label{asymp f bessel}
         f_N(x) & =  4N^2\frac{J_1(\sqrt{\mathsf{X}})^2}{\mathsf{X}} + \frac{1}{12}\sqrt{\mathsf{X}}J_1(\sqrt{\mathsf{X}})J_2(\sqrt{\mathsf{X}}) 
        \\
        &\quad  + \frac{1}{11520\,N^2}
\Big[\mathsf X
(24-5\mathsf X)J_1(\sqrt{\mathsf X})^2
-24\sqrt{\mathsf X}(4-\mathsf X)
J_1(\sqrt{\mathsf X})J_2(\sqrt{\mathsf X})
+5\mathsf X^2 J_2(\sqrt{\mathsf X})^2
\Big] + O(N^{-1})
    \end{split}
    \end{align} 
    uniformly for $x\in\RN{1}$. 
    \smallskip 
    \item[\textup{(ii)}] \textup{\textbf{(Oscillatory regime)}} As $N\to\infty$, we have 
    \begin{equation} \label{asymp f cos}
    \begin{split}
        f_N(x)&  =  \frac{1}{4N}\frac{1-\sin\big(8N\xi(\mathsf{x})\big)}{4\pi \mathsf{x}^{3/2}(1-\mathsf{x})^{1/2}} - \frac{1}{(4N)^2}\frac{(9-12\mathsf{x}+8\mathsf{x}^2)\cos\big(8N\xi(\mathsf{x})\big)}{48\pi \mathsf{x}^{2}(1-\mathsf{x})^{2}}  
        \\
        &\quad +\frac{1}{(4N)^3}
\frac{
36(3-8\mathsf x)
-
\bigl(27-72\mathsf x-288\mathsf x^2+192\mathsf x^3-64\mathsf x^4\bigr)
\sin\bigl(8N\xi(\mathsf x)\bigr)
}{
1152\pi \mathsf x^{5/2}(1-\mathsf x)^{7/2}
} + \frac{ O(N^{-4}) }{\mathsf{x}^3(1-\mathsf{x})^{5}} 
    \end{split}
    \end{equation}
    uniformly for $x\in\RN{2}$. Here, $\xi$ is given by \eqref{def xi}. 
    \smallskip 
    \item[\textup{(iii)}] \textup{\textbf{(Airy regime)}} As $N\to\infty$, we have 
    \begin{equation} \label{asymp f airy}
    \begin{aligned}
        f_N(x) &= \frac{1}{(16N)^{2/3}} \Ai(y)^2-\frac{y\,\Ai(y)}{ 10(2N)^{4/3} } \Bigl( 4\Ai(y) + y \Ai'(y) \Bigr) + \frac{1}{2800N^2}
\Big[
(362y^2+7y^5)\Ai(y)^2
\\
&\quad +
(60+146y^3)\Ai(y)\Ai'(y)
+
7y^4\Ai'(y)^2
\Big] + \textup{envAi}_{7/2}(y)^2 O(N^{-8/3})
    \end{aligned}
    \end{equation} 
    uniformly for $x\in\RN{3}$. Here, $\textup{envAi}_p$ is given by \eqref{def of envAi}. 
\smallskip 
    \item[\textup{(iv)}] \textup{\textbf{(Exponential regime)}} As $N\to\infty$, we have 
    \begin{equation} \label{asymp f exp}
        f_N(x) = \frac{1}{32\pi N}\frac{1}{\mathsf{x}^{3/2}(\mathsf{x}-1)^{1/2}} \exp\Big[-4N\Big(\sqrt{\mathsf{x}^2-\mathsf{x}}-\textup{arccosh}\sqrt{\mathsf{x}}\Big)\Big] \Big(1+O(N^{-1/4})\Big)
    \end{equation}
    uniformly for $x\in\RN{4}$.
\end{itemize}
\end{prop}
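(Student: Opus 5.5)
Since $f_N(x)=e^{-x}L^{(1)}_{N-1}(x)^2$, every statement in Proposition~\ref{prop asymp f} reduces to a uniform asymptotic expansion of the single Laguerre function $e^{-x/2}L^{(1)}_{N-1}(x)$, which we then square. The natural large parameter is $\nu:=4(N-1)+2+2=4N$, the turning point of $L^{(1)}_{N-1}$. The plan is to obtain this expansion from the Plancherel--Rotach/Olver-type theory for Laguerre polynomials, with explicit correction terms. Concretely, we treat $u(x)=x^{1}e^{-x/2}L^{(1)}_{N-1}(x)$ (or its Liouville-normalised variant). This function solves a second-order ODE $u''=\big(\nu^2 q(\mathsf{x})+\psi(x)\big)u$ with one turning point at $\mathsf{x}=1$ and a regular singular point at $x=0$. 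Olver's theorems then give Bessel-type expansions near $0$, Airy-type expansions near $\mathsf x=1$, and Liouville--Green (WKB) expansions in between and beyond. In each case the error bounds are explicit in terms of the error-control function, which yields uniformity. The normalisation constants will be fixed by matching with known values. The limit $L^{(1)}_{N-1}(x/4N)\sim N\,J_1(\sqrt{x})\cdot 2/\sqrt{x}$ (Mehler--Heine) fixes the Bessel constant, and the leading coefficient or value at $0$, $L^{(1)}_{N-1}(0)=N$, serves as a check.

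The order of work is as follows. In region \RN{1} we use the Mehler--Heine expansion carried to second order in $1/N$. One route is the series $L^{(1)}_{N-1}(x)=\sum_k\binom{N}{k+1}(-x)^k/k!$ together with $\binom{N}{k+1}=\frac{N^{k+1}}{(k+1)!}\big(1-\frac{k(k+1)}{2N}+\dots\big)$. After resumming, the $1/N$ and $1/N^2$ corrections appear as combinations of $J_1,J_2$ in $\sqrt{\mathsf X}$. Multiplying by $e^{-x}=1-x+\dots$ and squaring produces exactly the displayed structure, with coefficients $1/12$ and $1/11520$. The remainder is controlled uniformly on the compact set $[0,c]$ by tail bounds on the series.

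In region \RN{2} we take the WKB form
\begin{equation*}
e^{-x/2}L^{(1)}_{N-1}(x)=\frac{A_N}{\mathsf x^{3/4}(1-\mathsf x)^{1/4}}\Big[\cos\big(4N\xi(\mathsf x)-\tfrac{3\pi}{4}\big)\big(1+\tfrac{a_2}{N^2}\big)+\tfrac{a_1(\mathsf x)}{N}\sin(\cdot)+\dots\Big].
\end{equation*}
We square it and use $2\cos^2(\phi-3\pi/4)=1-\sin 2\phi$, which explains the $1-\sin(8N\xi)$ term. The correction coefficients $a_k(\mathsf x)$ come from the standard recursion for the Liouville--Green series. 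Their blow-up at $\mathsf x\to0,1$ gives the stated weights $\mathsf x^{-3}(1-\mathsf x)^{-5}$ in the error.

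In region \RN{3} we use Olver's Airy expansion in terms of $\zeta(\mathsf x)$ with $\frac23\zeta^{3/2}=\int$ of the phase. We then re-expand $\zeta$ and the prefactor around $\mathsf x=1$ in the variable $y=(2N)^{2/3}(\mathsf x-1)$ to convert to $\Ai(y)$, $\Ai'(y)$ with polynomial-in-$y$ coefficients. Since $|y|\le d'N^{1/6}$ on \RN{3}, the truncated Taylor remainders are bounded by powers of $|y|$ times the Airy envelope, hence the factor $\mathrm{envAi}_{7/2}(y)^2$. In region \RN{4} the leading Liouville--Green exponential solution (the recessive one, matched to Airy decay) gives the formula directly. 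Here we only need a relative error $O(N^{-1/4})$, coming from the worst point $\mathsf x-1\asymp N^{-1/2}$, where the LG error bound is of order $(N(\mathsf x-1)^{3/2})^{-1}\sim N^{-1/4}$.

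The main obstacle is obtaining the explicit higher-order correction coefficients, that is, the $N^{-2}$ and $N^{-3}$ terms in \RN{2} and the $N^{-4/3}$ and $N^{-2}$ terms in \RN{3}, with \emph{uniform} remainders up to the endpoints of each region, since these regions overlap only marginally. I would first derive the coefficients formally, by plugging an ansatz into the Laguerre ODE or by expanding a contour-integral (steepest-descent) representation $L^{(1)}_{N-1}(x)=\frac{1}{2\pi i}\oint \frac{e^{-xw/(1-w)}}{(1-w)^2w^{N}}dw$. I would then verify consistency by matching the \RN{2} expansion to the large-$|y|$ Airy asymptotics and to the large-$\mathsf X$ Bessel asymptotics. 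Uniformity would come from Olver's explicit error bounds, where the sizes of the variation of the error-control functions determine the stated powers of $\mathsf x$ and $1-\mathsf x$.
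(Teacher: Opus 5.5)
\textbf{The proposal has a genuine gap in region \RN{1}.} The interval $x\in[0,c]$ is compact in $x$, but the relevant variable is $\mathsf X=4Nx\in[0,4cN]$. So the Bessel argument $\sqrt{\mathsf X}$ runs up to $2\sqrt{cN}$, far outside the Mehler--Heine scale, and your series route does not give uniformity there.

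\emph{The $1/N$ corrections are not small.} Write $z=Nx=\mathsf X/4$, $g(z)=\sum_k(-z)^k/(k!(k+1)!)=J_1(2\sqrt z)/\sqrt z$ and $D=z\,\frac{d}{dz}$. The first binomial correction is exactly $-\tfrac12 D(D+1)g=\tfrac z2\,g$, so $L^{(1)}_{N-1}(x)=Ng\,(1+\tfrac x2)+\dots$. More generally, for large $z$ the $N^{-m}$ term has leading part $\frac{(x/2)^m}{m!}\,Ng$. In other words, the leading parts of your $1/N$-expansion are the Taylor series of $e^{x/2}$, not an expansion in a small parameter. Truncating at $N^{-2}$ therefore leaves a relative error that does not tend to zero when $x\asymp1$. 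Since $f_N\asymp N^{1/2}$ there, the absolute error is of order $N^{1/2}$ rather than $O(N^{-1})$. Multiplying by a truncated $e^{-x}$ cancels these terms order by order. That is why the formal coefficients $1/12$ and $1/11520$ come out right: the fixed-$\mathsf X$ expansion is unique. But the smallness of the remainder then rests on a cancellation to all orders that you never control.

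\emph{Tail bounds on the series cannot control it either.} The series for $g$ alternates, and the sum of the absolute values of its terms is $I_1(2\sqrt z)/\sqrt z$. This is of size $e^{\sqrt{\mathsf X}}$ up to powers of $\mathsf X$, i.e.\ up to $e^{2\sqrt{cN}}$, while $g=O(\mathsf X^{-3/4})$. Any termwise remainder estimate is therefore off by an exponential factor. As written, your argument proves (i) only for bounded, or very slowly growing, $\mathsf X$, not uniformly on $[0,c]$.

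\textbf{How to repair it.} Already in region \RN{1} you need a uniform Bessel-type expansion that keeps $e^{x/2}$ exact:
$e^{-x/2}L^{(1)}_{N-1}(x)=\frac{\xi(\mathsf x)^{1/2}}{2\mathsf x^{3/4}(1-\mathsf x)^{1/4}}\big[J_1(4N\xi(\mathsf x))+\frac{E_1(\mathsf x)}{4N}J_2(4N\xi(\mathsf x))+\frac{E_2(\mathsf x)}{16N^2}J_1(4N\xi(\mathsf x))+\dots\big]$, valid for $0\le\mathsf x\le 1-\delta$. This is exactly the Olver-type regular-singularity expansion you list in your first paragraph; the paper takes it from Frenzen--Wong. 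Then Taylor-expand, for $\mathsf x\le c/4N$:
\begin{itemize}
\item[] $\xi(\mathsf x)=\sqrt{\mathsf x}-\mathsf x^{3/2}/6-\dots$, $\quad E_1(\mathsf x)=-\frac{4}{15}\mathsf x^{3/2}+\dots$, $\quad E_2(\mathsf x)=O(\mathsf x^2)$.
\end{itemize}
The resulting phase shift $4N\xi(\mathsf x)-\sqrt{\mathsf X}=-\mathsf X^{3/2}/(96N^2)+\dots=O(N^{-1/2})$ is uniformly small, and this yields (i) with the stated remainder.

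\textbf{Comparison for the other regions.} With that repair, the rest of your plan is a legitimate variant of the paper's route.
\begin{itemize}
\item[] Region \RN{2}: the paper does not set up a separate Liouville--Green expansion. It takes the Hankel asymptotics of the uniform Bessel expansion on $[c,4N(1-\delta)]$, and the large-negative-argument asymptotics of the uniform Airy expansion on $[4N\delta,4N-d\sqrt N]$. This automatically supplies the amplitude and the phase $-3\pi/4$ to the precision needed for the third term in (ii). A standalone WKB ansatz needs these connection constants to that precision from somewhere. Mehler--Heine fixes only the leading amplitude, so in practice you would end up doing the paper's matching anyway.
\item[] Regions \RN{3} and \RN{4}: your approach agrees with the paper, except that you use Liouville--Green directly in \RN{4}, where the paper uses the exponential asymptotics of the Airy expansion. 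Both give the $O(N^{-1/4})$ relative error.
\end{itemize}
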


This proposition will be proved in Subsection~\ref{Subsec_proof of Prop asymp f_N}. See Figure~\ref{Fig_asymp f_N} for numerical illustrations of Proposition~\ref{prop asymp f}, which demonstrates that the asymptotic expansions remain accurate up to the displayed order. 

We also note that the asymptotic formulas in Proposition~\ref{prop asymp f} involve the quantities $\mathsf{x}$ and $\mathsf{X}$, which themselves depend on $N$. As a consequence, some terms that may initially appear to be of order one in fact contribute at lower order. For instance, the third term on the right-hand side of \eqref{asymp f bessel} is of order $O(N^{-1/2})$, due to the asymptotic behaviour $J_\nu(x)=O(x^{-1/2})$ as $x\to\infty$.

\begin{figure}[ht]
\centering

\begin{subfigure}{0.45\textwidth}
    \centering
    \includegraphics[width=\linewidth]{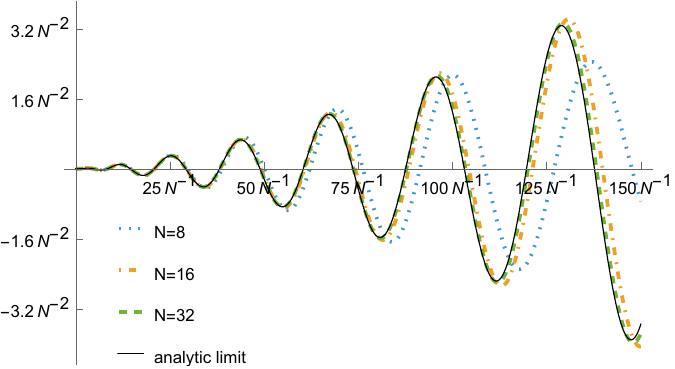}
    \caption{Regime \RN{1}}
\end{subfigure}
\hfill
\begin{subfigure}{0.45\textwidth}
    \centering
    \includegraphics[width=\linewidth]{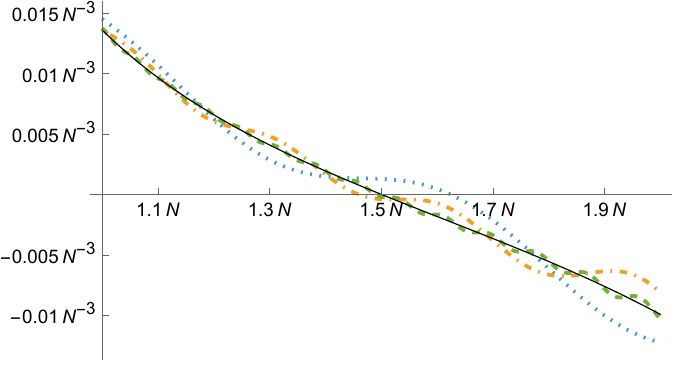}
    \caption{Regime \RN{2}}
\end{subfigure}

\vspace{0.5cm}

\begin{subfigure}{0.45\textwidth}
    \centering
    \includegraphics[width=\linewidth]{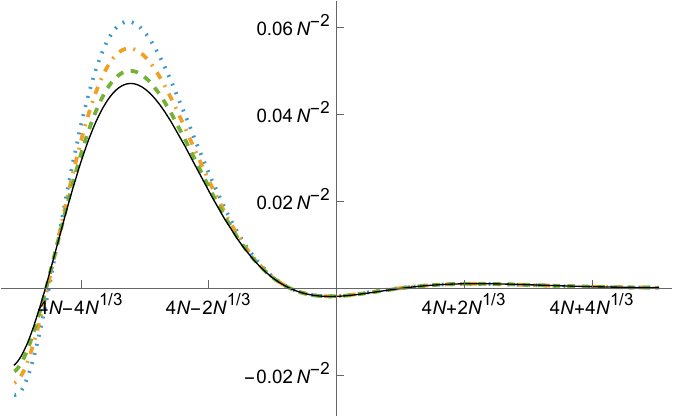}
    \caption{Regime \RN{3}}
\end{subfigure}
\hfill
\begin{subfigure}{0.45\textwidth}
    \centering
    \includegraphics[width=\linewidth]{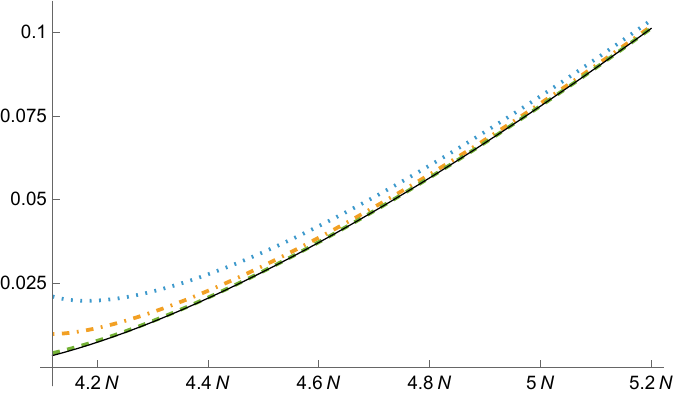}
    \caption{Regime \RN{4}}
\end{subfigure}

\caption{Each plot compares the asymptotic behaviour of $f_N$ in the corresponding regime. In the regimes $\RN{1}$, $\RN{2}$, and $\RN{3}$, the dotted, dot-dashed, and dashed curves represent the remainder obtained after subtracting the expansion up to the second term, 
for $N=8,16$, and $32$ respectively, 
while the black solid curve shows the third-order term. In the regime $\RN{2}$, the oscillatory component of the third-order term is further subtracted due to its explicit dependence on $N$. In the regime $\RN{4}$, where $f_N$ is of the form $f_N=N^{-1}g e^{-4Nh}$, the dotted, dot-dashed, and dashed curves represent $-(4N)^{-1}\log(Nf_N/g)$, whereas the black solid curve represents $h$.} \label{Fig_asymp f_N}
\end{figure}

By Proposition~\ref{Prop_BH relation}, the asymptotic behaviour of $f_N$ obtained in Proposition~\ref{prop asymp f} serves as a building block in deriving the asymptotic behaviour of $\Psi_N$.  

\begin{prop}[\textbf{Asymptotics of }$\Psi_N$] \label{prop asymp Psi}
Let $\mathsf{x}=x/4N$, $\mathsf{X}=4Nx$ and $y=(2N)^{2/3}(\mathsf{x}-1)$.
\begin{itemize}
    \item[\textup{(i)}] \textup{\textbf{(Bessel regime)}} As $N\to\infty$, we have 
    \begin{equation} \label{asymp Psi bessel}
        \Psi_N(x) = N - \frac{1}{2}\Big(\mathsf{X}J_0(\sqrt{\mathsf{X}})^2-\sqrt{\mathsf{X}}J_0(\sqrt{\mathsf{X}})J_1(\sqrt{\mathsf{X}})+\mathsf{X}J_1(\sqrt{\mathsf{X}})^2\Big) + \mathsf{X}^{3/2}O(N^{-2})
    \end{equation}
    uniformly for $x\in\RN{1}$. 
\smallskip 
    \item[\textup{(ii)}] \textup{\textbf{(Oscillatory regime)}} As $N\to\infty$, we have 
    \begin{equation} \label{asymp Psi cos}
        \Psi_N(x) = N - \frac{2N}{\pi} \Big(\sqrt{\mathsf{x}-\mathsf{x}^2}+\arcsin\sqrt{\mathsf{x}}\Big) + O(1).
    \end{equation}
    uniformly for $x\in\RN{2}$.
    \smallskip 
    \item[\textup{(iii)}] \textup{\textbf{(Airy regime)}} As $N\to\infty$, we have 
    \begin{equation} \label{asymp Psi airy}
        \Psi_N(x) = \frac{1}{3} \Big(2y^2\Ai(y)^2-\Ai(y)\Ai'(y)-2y\Ai'(y)^2\Big) + \textup{envAi}_{5/4}(y)^2 O(N^{-2/3})
    \end{equation}
    uniformly for $x\in\RN{3}$. Here, $\textup{envAi}_p$ is given by \eqref{def of envAi}. 
   \smallskip 
    \item[\textup{(iv)}] \textup{\textbf{(Exponential regime)}} As $N\to\infty$, we have 
    \begin{equation} \label{asymp Psi exp}
        \Psi_N(x) = \frac{1}{32\pi N}\frac{1}{\mathsf{x}^{1/2}(\mathsf{x}-1)^{3/2}} \exp\Big[-4N\Big(\sqrt{\mathsf{x}^2-\mathsf{x}}-\textup{arccosh}\sqrt{\mathsf{x}}\Big)\Big] \Big(1+O(N^{-1/4})\Big)
    \end{equation}
    uniformly for $x\in\RN{4}$.
\end{itemize}
\end{prop}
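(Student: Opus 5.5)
The plan is to derive Proposition~\ref{prop asymp Psi} from Proposition~\ref{prop asymp f} through the Brézin--Hikami relations of Proposition~\ref{Prop_BH relation}. The two relations, $\rho_N(x)=\int_x^\infty f_N$ and $\Psi_N(x)=\int_x^\infty\rho_N$, give
\[
\Psi_N(x)=\int_x^\infty (u-x) f_N(u)\,du ,
\]
and equivalently $\Psi_N(x)=N-\int_0^x(x-u)f_N(u)\,du$. Which of the two representations I use will depend on the regime.

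\textbf{Regimes IV and III.} I will integrate from infinity. In regime IV, the expansion \eqref{asymp f exp} has the form $N^{-1}g(\mathsf{x})e^{-4Nh(\mathsf{x})}$ with $h'(\mathsf{x})=\sqrt{(\mathsf{x}-1)/\mathsf{x}}$. A Laplace-type integration by parts, performed twice in the variable $u=4N\mathsf{u}$, produces the factors $1/(4Nh'\cdot 4N^{-1}\cdot\ldots)$, that is $g/(h')^2$. This yields the prefactor $\mathsf{x}^{-1/2}(\mathsf{x}-1)^{-3/2}$ with the same exponent.

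The relative error $O(N^{-1/4})$ is inherited uniformly, because on IV we have $Nh'(\mathsf{x})\cdot(\mathsf{x}-1)\gtrsim N^{1/4}$. Concretely, $\mathsf{x}-1\ge dN^{-1/2}/4$, so $N(\mathsf{x}-1)^{3/2}\gtrsim N^{1/4}$; this also explains where the $N^{-1/4}$ comes from.

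In regime III, I change variables to $y$ and use $du = 4N(2N)^{-2/3}dy$. The leading term of \eqref{asymp f airy} then gives
\[
\Psi_N \approx \int_y^\infty (v-y)\Ai(v)^2\,dv .
\]
The Airy equation $\Ai''=v\Ai$ gives the antiderivative formulas $\int\Ai^2=v\Ai^2-\Ai'^2$ and an analogous one for $\int v\Ai^2$. Combining them produces $\frac13(2y^2\Ai^2-\Ai\Ai'-2y\Ai'^2)$.

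The correction terms in \eqref{asymp f airy} are integrated the same way, and the error $\textup{envAi}_{7/2}^2O(N^{-8/3})$ is integrated twice. For the part of the integral extending beyond regime III, I use regime IV to bound the tail, and match at $y=d'$ via the exponential decay. Some care is needed to reach the envelope $\textup{envAi}_{5/4}(y)^2$ on the negative side, where $y\approx -dN^{1/6}$ and the polynomial growth of the envelopes must be tracked.

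\textbf{Regime I.} Here I use $\Psi_N=N-\int_0^x(x-u)f_N$ with $\mathsf{X}=4Nx$. The leading term $4N^2J_1(\sqrt{\mathsf{U}})^2/\mathsf{U}$, with $du=d\mathsf{U}/(4N)$, gives
\[
\frac{1}{4}\int_0^{\mathsf{X}}(\mathsf{X}-\mathsf{U})\frac{J_1(\sqrt{\mathsf{U}})^2}{\mathsf{U}}\,d\mathsf{U}.
\]
After the substitution $\mathsf{U}=s^2$, this reduces to standard Bessel integrals such as $\int J_1^2/s\,ds$ and $\int sJ_1^2\,ds$, which evaluate via $J_0'=-J_1$ and $(sJ_1)'=sJ_0$. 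The result is the closed form in \eqref{asymp Psi bessel}, which I will check by differentiating twice in $\mathsf{X}$. The second term in \eqref{asymp f bessel} contributes $O(x^2)$ after integration, which is $\mathsf{X}^{2}N^{-2}$; for bounded $x$ this can be absorbed into $\mathsf{X}^{3/2}O(N^{-2})$ using $J_1J_2=O(\mathsf{X}^{-1/2})$ for large argument and its small size near $0$.

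\textbf{Regime II.} This is the main obstacle. I plan to work at the level of $\rho_N$ first: $\rho_N(x)=\int_x^\infty f_N$, split at $4N-d\sqrt N$. The non-oscillatory part of \eqref{asymp f cos} integrates to $\frac{2}{\pi}\sqrt{(1-\mathsf{x})/\mathsf{x}}$ up to normalisation, the LUE semicircle-type density. The oscillating term $\sin(8N\xi)$, with $\xi'(\mathsf{x})=\sqrt{(1-\mathsf{x})/\mathsf{x}}$, gains a factor $N^{-1}$ upon integration by parts, but the boundary terms near $\mathsf{x}\to1$ blow up like $(1-\mathsf{x})^{-1}$. These must be matched with the Airy contribution at $y\sim -dN^{1/6}$, with the regime-III integral contributing $O(N^{-1/3})\cdot$polynomial in $d$ boundary terms.

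Integrating once more gives $\Psi_N=N-\int_0^x\rho_N$. Using $\int_0^{\mathsf{x}}\frac{2}{\pi}\sqrt{(1-\mathsf{u})/\mathsf{u}}\,4N\,d\mathsf{u}=\frac{4N}{\pi}\cdot 2\xi(\mathsf{x})$, this gives $\frac{2N}{\pi}(\sqrt{\mathsf{x}-\mathsf{x}^2}+\arcsin\sqrt{\mathsf{x}})$. The oscillatory remainders and the near-zero portion, where regime I must be used instead, contribute $O(1)$ uniformly.

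The delicate point is uniformity of this $O(1)$ bound as $\mathsf{x}\to0$ and $\mathsf{x}\to1$, where the error weights $\mathsf{x}^{-3}(1-\mathsf{x})^{-5}$ are non-integrable. I would handle this by using the exact total mass $\int_0^\infty f_N=N$ (equivalently $\rho_N(0)=N$) to anchor the constants, and by choosing to integrate from whichever endpoint avoids the singular weight.
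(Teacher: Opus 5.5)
Your strategy is the paper's: integrate the expansions of Proposition~\ref{prop asymp f} twice through Proposition~\ref{Prop_BH relation}, region by region, and match at the regime boundaries. Your regime~\RN{3} antiderivative $\int_y^\infty(v-y)\Ai(v)^2\,dv=\frac13(2y^2\Ai^2-\Ai\Ai'-2y\Ai'^2)$ and your regime~\RN{4} prefactor $g/(h')^2$ are both correct.

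However, the representation you use when integrating from the origin is wrong. From $\Psi_N(x)=N-\int_0^x\rho_N$ and $\rho_N(u)=N-\int_0^u f_N$, one gets
\[
\Psi_N(x)=N-Nx+\int_0^x(x-u)f_N(u)\,du .
\]
The same follows from your $\int_x^\infty(u-x)f_N$ together with $\int_0^\infty f_N=\rho_N(0)=N$ and $\int_0^\infty uf_N(u)\,du=\Psi_N(0)=N$. Your formula misses the linear term and has the wrong sign on the integral.

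As a result, your regime~\RN{1} computation yields $N-\frac14\int_0^{\mathsf X}(\mathsf X-\mathsf U)\mathsf U^{-1}J_1(\sqrt{\mathsf U})^2\,d\mathsf U$. This behaves like $N-\mathsf X^2/32+\dots$ for small $\mathsf X$ and like $N-\mathsf X/4+O(\sqrt{\mathsf X})$ for large $\mathsf X$. The right-hand side of \eqref{asymp Psi bessel} instead behaves like $N-\mathsf X/4+\mathsf X^2/32+\dots$ and $N-\sqrt{\mathsf X}/\pi+O(1)$, respectively. Your check by differentiating twice would expose the sign but not the missing $-Nx=-\mathsf X/4$, which $d^2/d\mathsf X^2$ annihilates; you must also impose $\Psi_N(0)=N$ and $\Psi_N'(0)=-\rho_N(0)=-N$. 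With the corrected identity, the leading term is $N-\frac14\int_0^{\mathsf X}\big(J_0(\sqrt{\mathsf U})^2+J_1(\sqrt{\mathsf U})^2\big)\,d\mathsf U$. Then \eqref{asymp Psi bessel} follows from $\frac{d}{d\mathsf X}\,\frac12\big(\mathsf XJ_0^2-\sqrt{\mathsf X}J_0J_1+\mathsf XJ_1^2\big)=\frac14\big(J_0^2+J_1^2\big)$, with all Bessel arguments equal to $\sqrt{\mathsf X}$.

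Two error estimates are also too weak as stated.

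\emph{Regime~\RN{1}.} The crude bound $\sqrt{\mathsf U}J_1J_2=O(1)$ makes the second term of \eqref{asymp f bessel} contribute $O(\mathsf X^2N^{-2})$. At $x=c$ this is $O(1)$, not $\mathsf X^{3/2}O(N^{-2})=O(N^{-1/2})$. To reach the stated bound you need that the $O(1)$ part of $\sqrt{\mathsf U}J_1(\sqrt{\mathsf U})J_2(\sqrt{\mathsf U})$ oscillates with mean zero, or exact Bessel antiderivatives as in the paper.

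\emph{Regime~\RN{2}.} This matters more. Any error $\epsilon$ in the integration constant of $\rho_N$ gets multiplied by $x\le 4N$ in the second integration, so the boundary terms must cancel to $O(N^{-1})$. A bound of the form $O(N^{-1/3})$ times a polynomial in $d$ would leave an error of order $N^{2/3}$ in $\Psi_N$. If you anchor at the Airy end, as you propose, the leading Airy term alone is off by about $|y|^{3/2}/N\asymp N^{-3/4}$ at $y\asymp -N^{1/6}$. That already produces an error of order $N^{1/4}$ in $\Psi_N$. You would need the full three-term expansion \eqref{asymp f airy}, plus a check that it matches the regime-\RN{2} profile to $O(N^{-1})$.

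The paper sidesteps this by integrating regime~\RN{2} from the Bessel side, anchoring at $x=c$ with the regime-\RN{1} result. It controls the integration-by-parts remainders with Lemma~\ref{integral lemma}, which stays valid on the whole range $c/4N\le\mathsf x\le 1-dN^{-1/2}/4$.

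Finally, the non-oscillatory part of $\rho_N$ is $\frac1{2\pi}\sqrt{(1-\mathsf x)/\mathsf x}$, not $\frac2\pi\sqrt{\cdots}$. With this constant, $4N\int_0^{\mathsf x}\frac1{2\pi}\sqrt{(1-\mathsf u)/\mathsf u}\,d\mathsf u=\frac{4N}{\pi}\xi(\mathsf x)$, which is exactly the term in \eqref{asymp Psi cos}.
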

 
Proposition~\ref{prop asymp Psi}  will be shown in Subsection~\ref{Subsectoin_prop asymp rho Psi}.

With Propositions~\ref{prop asymp f} and~\ref{prop asymp Psi} at hand, we are now in a position to prove our main results, namely Theorems~\ref{thm strong nH}, \ref{thm mesoscopic nH}, and~\ref{thm weak nH}. 
%Although the strong, mesoscopic, and weak non-Hermiticity regimes are formulated separately, their proofs can be presented in a unified manner. For notational convenience, we regard the strong non-Hermiticity regime as corresponding to the case $\alpha=0$.

\begin{proof}[Proof of Theorems \ref{thm strong nH}, \ref{thm mesoscopic nH}, and \ref{thm weak nH}]

By Proposition \ref{prop finite N}, we have 
\begin{align}
\begin{split}
\label{eq_proof of main 1}
   \mathcal{F}_N^{\textup{(d)}} \equiv \mathcal{F}_N^{\textup{(d)}}(N^{\gamma}\mathsf{T},\theta) &= \exp\Big(-N^{2\gamma-1-\alpha}A\Big) f_N(N^{2\gamma-1}B), 
    \\
   \mathcal{F}_N^{\textup{(c)}} \equiv  \mathcal{F}_N^{\textup{(c)}}(N^{\gamma}\mathsf{T},\theta) &= N-\exp\Big(-N^{2\gamma-1-\alpha}A\Big) \Psi_N(N^{2\gamma-1}B),
\end{split}
\end{align} 
where
\begin{equation} \label{def of A B}
    A:=\begin{cases}\displaystyle
        \frac{1-\tau^2}{4}\mathsf{T}^2 &\textup{if } \alpha=0, 
        \smallskip 
        \\ \displaystyle
        \frac{\kappa}{2}\mathsf{T}^2+O(N^{-\alpha}) &\textup{if } \alpha>0,
    \end{cases}
    \qquad
    B:= \begin{cases}\displaystyle
        |\eta\mathsf{T}|^2 &\textup{if } \alpha=0, 
        \smallskip 
        \\ 
        \displaystyle \mathsf{t}^2+O(N^{-\alpha}) &\textup{if } \alpha>0.
    \end{cases}
\end{equation}
 
One observes from the exponential factor $\exp(-N^{2\gamma-1-\alpha}A)$ in \eqref{eq_proof of main 1} that the DSFF undergoes a transition to asymptotically exponential behaviour when $\gamma>\frac{1+\alpha}{2}$. Moreover, by Proposition~\ref{prop asymp f}~(iv) and Proposition~\ref{prop asymp Psi}~(iv), the quantities $f_N(N^{2\gamma-1}B)$ and $\Psi_N(N^{2\gamma-1}B)$ appearing in \eqref{eq_proof of main 1} also exhibit exponential behaviour for $\gamma>1$. Either of these two leads to the exponential decays.  
Therefore, the Heisenberg time, namely the scale at which the transition to the plateau regime occurs, is identified as $T_{\textup{H}} = O(N^{\gamma_{\textup{H}}})$, where
\begin{equation} \label{def of gammaH time}
\gamma_{\textup{H}}
:=
\begin{cases}
\displaystyle \frac{1+\alpha}{2}, & \textup{if } \alpha<1, 
\smallskip 
\\
1, & \textup{if } \alpha \ge 1.
\end{cases}
\end{equation} 
Accordingly, the two regimes $0 \le \gamma \le \gamma_{\textup{H}}$ and $\gamma>\gamma_{\textup{H}}$ are treated separately.

We first establish the asymptotic results \eqref{asymp of FN d/c snH}, \eqref{asymp of FN d/c mnH}, and \eqref{asymp of FN d/c wnH} in the dip--ramp regime corresponding to $0 \le \gamma \le \gamma_{\textup{H}}$.
Note that, by Proposition~\ref{prop asymp Psi} (i) and (ii), we have
\[ 
\Psi_N(N^{2\gamma-1}B)=N+O(N^\gamma)
\]
whenever $\delta/N \le x \ll N$ for some $\delta>0$. By combining this with \eqref{eq_proof of main 1}, we obtain that for $0\le \gamma< \gamma_\textup{H}$,
\begin{align} 
 &\mathcal{F}_N^{\textup{(d)}} = f_N(N^{2\gamma-1}B) \Big[1+O(N^{2\gamma-1-\alpha})\Big], \label{eq_proof of main 2-0} \\
    &\mathcal{F}_N^{\textup{(c)}} = N-\Psi_N(N^{2\gamma-1}B) + N^{2\gamma-\alpha}A + O(N^{\mu}),  \label{eq_proof of main 2}
\end{align} 
where
\begin{equation} \label{def of mu}
    \mu := \begin{cases} \displaystyle
        3\gamma-1-\alpha &\textup{if } \gamma<\alpha, \smallskip \\
        4\gamma-1-2\alpha &\textup{if } \gamma\ge\alpha.
    \end{cases}
\end{equation} 
Now, the case $\gamma=0$ follows directly from the asymptotic formulas \eqref{asymp f bessel} and \eqref{asymp Psi bessel}. 

On the other hand, for the case $0<\gamma<\gamma_{\textup{H}}$, we make use of the following asymptotic formulas in the regime $1/N \ll x \ll N$: 
\begin{align}
 \label{f dip ramp}
    f_N(x) &= \frac{\sqrt{N}}{\pi x^{3/2}} \bigg[\frac{1}{2}-\frac{1}{2}\sin\Big(4\sqrt{Nx}\Big)\bigg] + x^{-5/2}O(N^{-1/2}) + O(1),
    \\
   \label{Psi dip ramp}
    \Psi_N(x) &= N - \sqrt{N}\frac{2\sqrt{x}}{\pi} + \begin{cases}
        x^{-1/2}O(N^{-1/2}) & \textup{if } \frac{1}{N} \ll x \le \delta, 
        \smallskip 
        \\
        x^{3/2}O(N^{-1/2}) & \textup{if } \delta \le x \ll N,
    \end{cases} 
\end{align}
where $\delta>0$ is any positive constant. These formulas follow by combining (i) and~(ii) of Propositions~\ref{prop asymp f} and~\ref{prop asymp Psi}. 
For the regime $1/N \ll x \le \delta$, we apply the asymptotic expansion of the Bessel function \cite[Eq.~(10.17.3)]{NIST} to Propositions~\ref{prop asymp f} and~\ref{prop asymp Psi} (i). 
On the other hand, for $\delta \le x \ll N$, we use the expansion
\begin{equation*}
    \xi(x)=\sqrt{x}+O(x^{3/2}),
    \qquad x\to0,
\end{equation*}
together with Propositions~\ref{prop asymp f} and~\ref{prop asymp Psi} (ii). 

Combining \eqref{def of A B}, \eqref{eq_proof of main 2-0}, and \eqref{f dip ramp}, the desired asymptotic results \eqref{asymp of FN d/c snH}, \eqref{asymp of FN d/c mnH}, and \eqref{asymp of FN d/c wnH} for the disconnected part follow after straightforward computations. 
On the other hand, inserting \eqref{Psi dip ramp} into \eqref{eq_proof of main 2}, we obtain 
\begin{equation}
    \mathcal{F}_N^{\textup{(c)}} = N^{\gamma}\frac{2}{\pi}\sqrt{B} + N^{2\gamma-\alpha}A + O(N^{\mu}),
\end{equation}
where $\mu$ is defined as \eqref{def of mu}. 
Using again \eqref{def of A B}, the desired asymptotic results \eqref{asymp of FN d/c snH}, \eqref{asymp of FN d/c mnH}, and \eqref{asymp of FN d/c wnH} for the connected part follow by straightforward computations. 

%Then, one can see that only if $\alpha>0$, there exists the phase $\alpha>\gamma$ where the leading term is given by $N^{\gamma}\frac{2}{\pi}\sqrt{B}$. This leads to the appearance of the linear ramp under weak and mesoscopic non-Hermiticity.  

For $\gamma=\gamma_{\textup{H}}$, it follows from \eqref{def of gammaH time} that 
\begin{align}
 \mathcal{F}_N^{\textup{(d)}} 
& =\begin{cases}
  \exp(-A) f_N(N^{\alpha}B) &\textup{if }   0\le\alpha<1,
  \smallskip 
  \\
  \exp(-N^{1-\alpha}A) f_N(NB) &\textup{if } \alpha \ge 1, 
 \end{cases}
 \\
  \mathcal{F}_N^{\textup{(c)}} & =
 \begin{cases}
  N-\exp(-A) \Psi_N(N^{\alpha}B)   &\textup{if }   0\le\alpha<1,
  \smallskip 
  \\
  N-\exp(-N^{1-\alpha}A) \Psi_N(NB) &\textup{if } \alpha \ge 1. 
 \end{cases}
\end{align} 
The case $0 \le \alpha <1$ then follows from \eqref{def of A B}, \eqref{f dip ramp} and \eqref{Psi dip ramp}. For $\alpha \ge 1$, note that by the definition \eqref{def of A B}, assertions~(ii) of Propositions~\ref{prop asymp f} and~\ref{prop asymp Psi} apply in the case $\mathsf{t}<2$, yielding the desired asymptotics \eqref{asymp of FN d/c wnH}. 
On the other hand, when $\mathsf{t}>2$, assertions~(iv) of Propositions~\ref{prop asymp f} and~\ref{prop asymp Psi} apply, leading to the plateau asymptotics \eqref{asymp of FN p wnH}.

Finally, we consider the case $\gamma>\gamma_{\textup{H}}$. We establish the following estimates in the regime $x\gg1$:
\begin{equation} \label{f plateau}
    f_N(x) = \begin{cases} \displaystyle
        O(N^{1/2}) &\textup{if } 1\ll x\le \delta N,
         \smallskip 
        \\ \displaystyle
        \exp\Big(-x+O(N\log N)\Big) &\textup{if } x\gg N,
    \end{cases}
\end{equation}
and 
\begin{equation} \label{Psi plateau}
    \Psi_N(x) = \begin{cases} \displaystyle
        O(N) &\textup{if } 1\ll x\le \delta N, 
        \smallskip 
        \\ \displaystyle
        \exp\Big(-x+O(N\log N)\Big) &\textup{if } x\gg N. 
    \end{cases}
\end{equation}
for any fixed constant $\delta>0$. The uniform bound in the regime $1 \ll x \le \delta N$ appearing in \eqref{f plateau} (resp.~\eqref{Psi plateau}) follows by combining assertions~(ii), (iii), and~(iv) of Proposition~\ref{prop asymp f} (resp.,~Proposition~\ref{prop asymp Psi}), which together cover the entire regime. 
For the second regime $x \gg N$, assertion~(iv) of Proposition~\ref{prop asymp f} (resp.,~Proposition~\ref{prop asymp Psi}) yields the exponential asymptotics in \eqref{f plateau} (resp.~\eqref{Psi plateau}). 
Combining \eqref{f plateau}, \eqref{Psi plateau}, and \eqref{eq_proof of main 1}, we obtain the desired asymptotic results \eqref{asymp of FN d/c snH}, \eqref{asymp of FN d/c mnH}, and \eqref{asymp of FN d/c wnH} in the plateau regime. 
\end{proof}

\subsection{Proof of Proposition~\ref{prop asymp f}} \label{Subsec_proof of Prop asymp f_N}

In this subsection, we prove Proposition~\ref{prop asymp f}. 
By the definition \eqref{def of fN}, the key input is the asymptotic behaviour of the Laguerre polynomials. 
The basic asymptotic formulas can be found, for instance, in \cite{FW88} and \cite[Chapter 18]{NIST}. 
However, unlike the standard asymptotic expansions that are valid only in a particular region, our analysis requires uniform asymptotic expansions in each scaling regime. 
Taken together, these regimes cover the entire real axis, which is crucial for integrating the expansions and subsequently matching them in the proof of Proposition~\ref{prop asymp Psi}.  

Furthermore, we derive the expansions up to the third-order term, which is the optimal level of precision required for our method. 
The first two coefficients were obtained in \cite[Sections 4 and 5]{FW88}, whereas the third-order coefficients do not seem to have appeared in the literature. 
Nevertheless, they can be derived by following the method developed in \cite{FW88}, 
to where we refer for details. 
We also note that asymptotic expansions of this type, even at a lower level of precision order, have been used extensively in the literature, including in the recent work \cite{MS26}.

For 
%this 
these asymptotic behaviours, we define 
 \begin{equation} \label{def of zeta(x) for Airy}
        \zeta(x):=\begin{cases} \displaystyle
            -\Big(\frac{3}{4}\big(\arccos\sqrt{x}-\sqrt{x-x^2}\big)\Big)^{2/3} &\textup{if } 0\le x\le 1,
            \smallskip 
            \\ \displaystyle
            \Big(\frac{3}{4}\big(\sqrt{x^2-x}-\textup{arccosh}\sqrt{x}\big)\Big)^{2/3}  &\textup{if } x\ge1
        \end{cases}
    \end{equation}
    and
    \begin{equation} \label{def of J envelop}
        \textup{env}J_\alpha(x) = \begin{cases} \displaystyle
            J_\alpha(x) & \textup{if } 0<x<X_\alpha, 
            \smallskip 
            \\ \displaystyle
            \sqrt{J_\alpha(x)^2+Y_\alpha(x)^2} & \textup{if } x\ge X_\alpha.
        \end{cases}
    \end{equation}
Here, $X_\alpha>0$ is chosen so that $J_\alpha(x)>0$ when $0<x<X_\alpha$, and $Y_\alpha$ denotes the Bessel function of second kind or Weber's function, given by \cite[Chapter 10]{NIST}
\begin{equation}
    Y_{\alpha}(x)= \frac{J_\alpha(x)\cos(\alpha x)-J_{-\alpha}(x)}{\sin(\alpha x)},  
\end{equation}
For integer $\alpha$ it simplifies to a derivative of $J_\alpha$ with respect to its index.

\begin{lem}[\textbf{Asymptotic behaviours of Laguerre polynomials}] \label{Lem_Laguerre asymp}
Fix $0<\delta<1$ and let $\nu=4n+2\alpha+2$.
\begin{itemize}
    \item[\textup{(i)}] \textup{\textbf{(Bessel function expansion)}} 
    As $n\to\infty$,
    \begin{align}
    \begin{split} \label{asymp bessel}
      &\quad  L^{(\alpha)}_{n}(\nu x) = \frac{e^{\nu x/2} \xi(x)^{1/2}}{2^{\alpha}x^{\alpha/2+1/4}(1-x)^{1/4}} 
      \\
        & \times \bigg(J_\alpha\big(\nu\xi(x)\big) + \frac{E_1(x)}{\nu}J_{\alpha+1}\big(\nu\xi(x)\big) + \frac{E_2(x)}{\nu^2}J_{\alpha}\big(\nu\xi(x)\big) +\textup{env}J_{\alpha}\big(\nu\xi(x)\big)O(\nu^{-3}) \bigg), 
    \end{split}
    \end{align} 
    uniformly for $0\le x\le 1-\delta$.  Here, $\xi$ is given by \eqref{def xi}, $ \textup{env}J_\alpha$ is given by \eqref{def of J envelop}, and 
    \begin{equation} \label{def of E1}
        E_1(x) := \frac{4\alpha^2-1}{8\xi(x)}-\sqrt{\frac{1-x}{x}}\bigg(\frac{4\alpha^2-1}{8}+\frac{1}{4}\frac{x}{1-x}+\frac{5}{24}\Big(\frac{x}{1-x}\Big)^{2}\bigg),
    \end{equation}
    \begin{align} \label{def of E2}
    \begin{split} 
     E_2(x) &:=  -\frac{(2\alpha-1)(2\alpha+1)(2\alpha+3)(2\alpha+5)}{128\xi(x)^2} -\frac{(2\alpha-3)(2\alpha-1)(2\alpha+1)(2\alpha+3)}{128}\frac{1-x}{x} 
     \\
     & \quad + \frac{(2\alpha+1)(2\alpha+3)}{8\xi(x)}\sqrt{\frac{1-x}{x}}\bigg(\frac{4\alpha^2-1}{8}+\frac{1}{4}\frac{x}{1-x}+\frac{5}{24}\Big(\frac{x}{1-x}\Big)^{2}\bigg)
     \\
        & \quad  -\frac{(2\alpha-3)(2\alpha-1)(8\alpha+7)}{96}+\Big(\frac{7\alpha^2}{48}-\frac{121}{192}\Big)\frac{x}{1-x}  -\frac{77}{96}\Big(\frac{x}{1-x}\Big)^2-\frac{385}{1152}\Big(\frac{x}{1-x}\Big)^3. 
    \end{split}
    \end{align} 

    \item[\textup{(ii)}] \textup{\textbf{(Airy function expansion)}} 
    As $n\to\infty$,
    \begin{align}
    \begin{split}
     \label{asymp airy}
     &\quad   L^{(\alpha)}_{n}(\nu x)   = \frac{(-1)^n}{\nu^{1/3}} \frac{e^{\nu x/2}}{2^{\alpha-1/2}x^{\alpha/2+1/4}} \Big(\frac{\zeta(x)}{x-1}\Big)^{1/4}
        \\
        & \times \bigg( \Ai\big(\nu^{2/3}\zeta(x)\big) + \frac{F_1(x)}{\nu^{4/3}}\Ai'\big(\nu^{2/3}\zeta(x)\big)  
      + \frac{F_2(x)}{\nu^{2}}\Ai\big(\nu^{2/3}\zeta(x)\big) + \textup{envAi}_{1/4}\big(\nu^{2/3}\zeta(x)\big) \,O(\nu^{-10/3}) \bigg),  
    \end{split}
    \end{align} 
    uniformly for $\delta\le x<\infty$. Here, $\zeta$ is given by \eqref{def of zeta(x) for Airy}, $\textup{envAi}_p$ is given by \eqref{def of envAi}, and 
    \begin{equation}
        F_1(x) := -\frac{5}{48} \zeta(x)^{-2} +\sqrt{\frac{x-1}{x\zeta(x)}} \Big[\frac{4\alpha^2-1}{8} - \frac{1}{4}\Big(\frac{x}{x-1}\Big)+\frac{5}{24}\Big(\frac{x}{x-1}\Big)^{2}\Big],
    \end{equation}
    \begin{align}
    \begin{split}
        F_2(x) & :=  -\frac{455}{4608\zeta(x)^3} + \frac{7}{48\zeta(x)}\sqrt{\frac{x-1}{x\zeta(x)}}\bigg(\frac{4\alpha^2-1}{8}-\frac{1}{4}\frac{x}{x-1}+\frac{5}{24}\Big(\frac{x}{x-1}\Big)^{2}\bigg) 
        \\
        & \quad +\frac{(2\alpha-3)(2\alpha-1)(2\alpha+1)(2\alpha+3)}{128}\frac{x-1}{x} -\frac{(2\alpha-3)(2\alpha-1)(8\alpha+7)}{96}
        \\
        & \quad +\Big(\frac{121}{192}-\frac{7\alpha^2}{48}\Big)\frac{x}{x-1} -\frac{77}{96}\Big(\frac{x}{x-1}\Big)^2+\frac{385}{1152}\Big(\frac{x}{x-1}\Big)^3. 
    \end{split}
    \end{align} 
\end{itemize}
\end{lem}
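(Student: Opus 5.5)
The plan is to derive both expansions from the second-order differential equation satisfied by the Laguerre polynomials, using Olver's theory of uniform asymptotic expansions for equations with a large parameter, as in \cite{FW88}.

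\textbf{Step 1: normal form.} Set $w(x)=e^{-\nu x/2}x^{(\alpha+1)/2}L_n^{(\alpha)}(\nu x)$ with $\nu=4n+2\alpha+2$. A direct computation from the Laguerre equation gives
\[
w''(x)=\Big(\nu^2\frac{1-x}{4x}\cdot(-1)\cdot(-1)\Big)\ldots,
\]
which, written out properly, is
\[
w''=\Big(\frac{\nu^2}{4}\,\frac{x-1}{x}+\frac{\alpha^2-1}{4x^2}\Big)w .
\]
This equation has a large parameter $\nu$, a regular singular point at $x=0$ (index $\alpha$) and a simple turning point at $x=1$.

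\textbf{Step 2: Bessel-type expansion, part (i).} On $0\le x\le1-\delta$, perform the Liouville transformation with $\xi(x)=\int_0^x\tfrac12\sqrt{(1-u)/u}\,du$, which is exactly \eqref{def xi}. This maps the equation to a perturbed Bessel equation of order $\alpha$ in $\nu\xi$. Olver's theorem for a coalescing turning point and pole then gives
\[
W=\Big(\frac{\xi}{\xi'^2}\Big)^{1/4}\Big[J_\alpha(\nu\xi)\sum_s\frac{A_s}{\nu^{2s}}+\frac{\xi}{\nu}\,J_{\alpha+1}(\nu\xi)\sum_s\frac{B_s}{\nu^{2s}}\Big],
\]
where the coefficients $A_s,B_s$ are determined by Olver's recursive integral formulae. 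I would compute $A_0=1$, $B_0$ and $A_1$ explicitly from the recursion. Up to the normalisation fixed in Step 4, these should yield $E_1=\xi B_0$ and $E_2=A_1$. The remainder is controlled by Olver's variational bound, and the error-control function produces the envelope $\textup{env}J_\alpha(\nu\xi)\,O(\nu^{-3})$. The pole/turning-point structure is what allows the bound to remain uniform down to $x=0$.

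\textbf{Step 3: Airy-type expansion, part (ii).} On $x\ge\delta$, repeat the argument with the Airy variable, defined by $\tfrac23(-\zeta)^{3/2}=\int_x^1\tfrac12\sqrt{(1-u)/u}\,du$ for $x<1$ and the analogue for $x>1$. This is \eqref{def of zeta(x) for Airy}. Olver's Airy theorem gives an expansion of the form $\Ai\sum A_s\nu^{-2s}+\Ai'\sum B_s\nu^{-2s-4/3}$. The recessive solution at $+\infty$ is the relevant one, since $L_n$ is a polynomial times $e^{\nu x/2}$ and the weighted function $w$ decays. The $\zeta^{-2}$ and $\zeta^{-3}$ terms in $F_1,F_2$ should come from the standard $\tfrac{5}{48}$ and $\tfrac{455}{4608}$ Airy coefficients, and the error bound becomes $\textup{envAi}_{1/4}(\nu^{2/3}\zeta)\,O(\nu^{-10/3})$.

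\textbf{Step 4: normalisation constants.} These I would determine by matching.
\begin{itemize}
\item For the Bessel form, compare behaviour as $x\to0$ with $L_n^{(\alpha)}(0)=\binom{n+\alpha}{n}$. Expanding this via Stirling to relative order $\nu^{-3}$ produces the prefactor $2^{-\alpha}$ together with constant corrections, which are absorbed into $E_2$.
\item For the Airy form, match in the overlap region $\delta<x<1-\delta$, where both expansions are valid, by using the Hankel/Airy large-argument asymptotics. Alternatively, match the leading coefficient as $x\to\infty$; this gives the factor $(-1)^n\nu^{-1/3}2^{1/2-\alpha}$.
\end{itemize}

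\textbf{Main obstacle.} I expect the main difficulty to be the third-order computation: evaluating Olver's integrals for $A_1$ in closed form, and consistently tracking the $O(\nu^{-2})$ corrections from Stirling's formula in the normalisation so that $E_2$ and $F_2$ come out exactly as stated. I would first do this symbolically in the variable $u=x/(1-x)$, which turns the integrands into rational functions. I would then check the resulting $E_2$ and $F_2$ against numerics, as in Figure~\ref{Fig_asymp f_N}. Verifying uniformity of the envelope error bounds near $x=1$ in part (i) is avoided by restricting to $x\le1-\delta$, where the Airy expansion of part (ii) takes over.
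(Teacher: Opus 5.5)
Your route is sound, but it is genuinely different from the paper's. The paper does not argue from the differential equation. It takes the expansions from Frenzen--Wong \cite{FW88}, who start from an integral representation of $L_n^{(\alpha)}$. They map it onto the Schl\"afli-type Bessel integral and the Airy integral, via $z-t\coth z=u-\xi(t)^2/u$ and $\frac14\log\frac{1+\mathsf z}{1-\mathsf z}-\frac12\mathsf z t=\frac{u^3}{3}-\zeta(t)u$. Higher coefficients come from the Bleistein-type integration-by-parts recursion recorded in Appendix~B, from which one reads off $E_1=-2\beta_1/(\xi\beta_0)$, $E_2=4\beta_2/\beta_0$, $F_1=-\gamma_1/\gamma_0$ and $F_2=\gamma_2/\gamma_0$.

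The integral route has three features:
\begin{itemize}
\item it produces the normalisations $2^{-\alpha}$ and $(-1)^n2^{1/2-\alpha}\nu^{-1/3}$ automatically;
\item it needs only differentiations at the critical points, the labour being the Taylor expansion of the implicitly defined maps;
\item its error bounds are inherited from \cite{FW88}.
\end{itemize}
Your Olver/Liouville--Green route gives the error terms essentially in the envelope form of the statement, since these envelopes are Olver's constructs. The price is the two tasks you identify: fixing the constants by matching, and evaluating Olver's recursive quadratures for $B_0$ and $A_1$ in closed form, with integration constants chosen so that the coefficients stay regular at the pole and the turning point. Once the normalisation is fixed, the coefficient functions of a uniform expansion of this shape are unique, so both routes must arrive at the stated $E_1,E_2,F_1,F_2$.

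Your $x=0$ normalisation works as planned. $\Gamma(n+\alpha+1)/n!$ expands in even powers of $\nu^{-1}$ because $\nu/4=n+\frac{\alpha+1}{2}$. The $E_1$ term is negligible relative to $J_\alpha$ as $x\to0$. So only a constant shift of $E_2$ arises.

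Two small points need fixing.
\begin{itemize}
\item \textbf{Bessel prefactor.} The Bessel-type solution of your normal form is $(d\zeta/dx)^{-1/2}\zeta^{1/2}J_\alpha(\nu\zeta^{1/2})$ with Olver's $\zeta=\xi^2$. That is a constant times $(\xi/\xi')^{1/2}J_\alpha(\nu\xi)$, not $(\xi/\xi'^2)^{1/4}J_\alpha(\nu\xi)$. With $\xi'=\frac12\sqrt{(1-x)/x}$ this reproduces the paper's prefactor $e^{\nu x/2}\xi^{1/2}x^{-\alpha/2-1/4}(1-x)^{-1/4}$.
\item \textbf{Airy constant at infinity.} If you fix the Airy constant by letting $x\to\infty$, you need Olver's bound for the solution recessive at $+\infty$. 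Its error is controlled by the variation of the error-control function over $[\zeta,\infty)$, which tends to $0$. That variation is finite here, because in $w''=(\nu^2f+g)w$ you have $f=\frac{x-1}{4x}\to\frac14$ and $g=\frac{\alpha^2-1}{4x^2}$. The weaker envelope form in the statement is not enough: $\textup{envAi}_{1/4}(y)/\Ai(y)\asymp y^{1/2}$ as $y\to+\infty$, so $\textup{envAi}_{1/4}(\nu^{2/3}\zeta)\,O(\nu^{-10/3})$ is not relatively small there. Your other option, matching to the already normalised Bessel form in the overlap $\delta<x<1-\delta$, avoids this issue.
\end{itemize}
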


As previously mentioned, Lemma~\ref{Lem_Laguerre asymp} follows from~\cite{FW88}, where the explicit coefficients up to \(E_1\) and \(F_1\) are presented. The method developed in~\cite{FW88} also applies to obtain more refined asymptotic expansions. For the reader's convenience, we outline the corresponding computations in Appendix~\ref{Appendix_Laguerre coeff}. 

\begin{rem}
In the Airy function expansion~\eqref{asymp airy}, the error term is expressed in terms of the envelope Airy function $\textup{envAi}_p$ defined in~\eqref{def of envAi}. This definition differs from the standard one appearing in the literature; see e.g. \cite[Eqs.~(2.8.20)--(2.8.21)]{NIST} and \cite[Eq.~(5.17)]{FW88}. Nevertheless, it follows from the asymptotic behaviour of the Airy function~\cite[Eq.~(9.7.5)]{NIST} that the two definitions are equivalent. The formulation~\eqref{def of envAi} is, however, more convenient for our analysis. 
\end{rem}

By using Lemma~\ref{Lem_Laguerre asymp}, we now prove Proposition~\ref{prop asymp f}.  

\begin{proof}[Proof of Proposition~\ref{prop asymp f}] Recall that $f_N$ is defined in \eqref{def of fN}. 
We also recall that the regions $\RN{1}$, $\RN{2}$, $\RN{3}$, and $\RN{4}$ are introduced in \eqref{def of regions I II III IV}. 

We first show \eqref{asymp f bessel}. For $x\in\RN{1}$, we apply the asymptotic expansion \eqref{asymp bessel}. Note that as $\mathsf{x}=x/4N \to 0$, it follows from \eqref{def xi}, \eqref{def of E1} and \eqref{def of E2} that
$$\xi(\mathsf{x})=\sqrt{\mathsf{x}}-\frac{1}{6}x^{\mathsf{x}/2}-\frac{1}{40}x^{\mathsf{x}/2}+O(\mathsf{x}^{7/2}), \qquad E_1(\mathsf{x})=-\frac{4}{15}\mathsf{x}^{3/2}+O(\mathsf{x}^{5/2}), \qquad E_2(\mathsf{x})=O(\mathsf{x}^2).$$ Then, for $x\in\RN{1}$ and $\mathsf{X}=4Nx$, by \eqref{asymp bessel}, we obtain 
\begin{align*}
e^{-x/2} L_{N-1}^{(1)}(x)  
    & = 2N \frac{J_1(\sqrt{\mathsf{X}})}{\sqrt{\mathsf{X}}} + \frac{1}{48N} \mathsf{X} J_2(\sqrt{\mathsf{X}}) + \frac{\mathsf{X}}{46080N^3}\Big[\sqrt{\mathsf{X}}(24-5\mathsf{X})J_1(\sqrt{\mathsf{X}})-(96-24\mathsf{X})J_2(\sqrt{\mathsf{X}})\Big] \\
    &\quad+ \mathsf{X}^{11/4}O(N^{-5}) + O(N^{-2})
\end{align*} 
as $N \to \infty.$ 
Here, we have also used the recurrence relation (see e.g. \cite[Eqs. (10.6.1)--(10.6.2)]{NIST})
\[
J_1(x)-xJ_1'(x)=xJ_2(x),
\]
and the global bound
\begin{equation*}
    x^pJ_{\alpha}(x)=O(x^{p-1/2})
\end{equation*}
for any $p\in\mathbb{R}$ and $\alpha \ge 0$, which follows from the standard asymptotic behaviours of the Bessel function \cite[Eqs. (10.7.3), (10.17.3)]{NIST}. Then, by straightforward computation, we obtain \eqref{asymp f bessel}.

Next, we establish~\eqref{asymp f cos}. To treat the regime~$\RN{2}$, we employ both the Bessel expansion~\eqref{asymp bessel} and the Airy expansion~\eqref{asymp airy}. For $x\in \RN{2}\cap[0,4N(1-\delta)]$, the argument of the Bessel function in~\eqref{asymp bessel} satisfies
\[
4N\xi(\mathsf{x}) \ge C\sqrt{N},
\]
for some constant $C>0$. Hence the Bessel function admits an asymptotic expansion in terms of trigonometric functions; see~\cite[Eq.~(10.17.3)]{NIST}. Then it follows that 
\begin{align*}
  e^{-x/2} L_{N-1}^{(1)}(x) &= \frac{ \mathsf{x}^{-3/4}(1-\mathsf{x})^{-1/4} }{\sqrt{8\pi N}} \bigg[ \cos\Big(4N\xi(\mathsf{x})-\frac{3}{4}\pi\Big) -\frac{1}{4N}\Big(\frac{3}{8\xi(\mathsf{x})}-E_1(\mathsf{x})\Big)\sin\Big(4N\xi(\mathsf{x})-\frac{3}{4}\pi\Big) \\
    &\quad +\frac{1}{(4N)^2}\Big(\frac{15}{128\xi(\mathsf{x})^2}+\frac{15E_1(\mathsf{x})}{8\xi(\mathsf{x})}+E_2(\mathsf{x})\Big)\cos\Big(4N\xi(\mathsf{x})-\frac{3}{4}\pi\Big) +\xi(\mathsf{x})^{-3}O(N^{-3}) \bigg].
\end{align*} 
Therefore, we obtain~\eqref{asymp f cos} for $x\in \RN{2}\cap[0,4N(1-\delta)]$, with error term of order $\mathsf{x}^{-3}O(N^{-4})$. 

It remains to verify that~\eqref{asymp f cos} also holds for $x\in \RN{2}\cap[4N(1-\delta),\infty)$, with error term $(1-\mathsf{x})^{-5}O(N^{-4})$. To this end, we show that the Airy function expansion~\eqref{asymp airy} yields the same asymptotic behaviour for $x\in \RN{2}\cap[4N\delta,\infty)$ together with the desired error estimate. Observe that for $x\in \RN{2}\cap[4N\delta,\infty)$, one has
\[
(4N)^{2/3}\zeta(x)\to -\infty,
\]
so that the trigonometric expansion of the Airy function given in~\cite[Eq.~(9.7.9)]{NIST} is applicable. Using the identity 
\begin{equation*}
    -\frac{2}{3}\big(-\zeta(x)\big)^{3/2} = \xi(x)-\frac{\pi}{4}, \qquad x\in[0,1],
\end{equation*}
relating \eqref{def xi} and \eqref{def of zeta(x) for Airy},  
we have
\begin{align*}
e^{-x/2} L_{N-1}^{(1)}(x) 
    & = \frac{ \mathsf{x}^{-3/4}(1-\mathsf{x})^{-1/4} }{\sqrt{8\pi N}}  \bigg[ \cos\Big(4N\xi(\mathsf{x})-\frac{3}{4}\pi\Big) -\frac{\big(-\zeta(\mathsf{x})\big)^{1/2}}{4N}\Big(\frac{5}{48\zeta(\mathsf{x})^{2}}+F_1(\mathsf{x})\Big)\sin\Big(4N\xi(\mathsf{x})-\frac{3}{4}\pi\Big) \\
    &\quad + \frac{1}{(4N)^2}\Big(\frac{385}{4608\zeta(\mathsf{x})^3}-\frac{7F_1(\mathsf{x})}{48\zeta(\mathsf{x})}+F_2(\mathsf{x})\Big)\cos\Big(4N\xi(\mathsf{x})-\frac{3}{4}\pi\Big) +(-\zeta(\mathsf{x}))^{-9/2}O(N^{-3}) \bigg].
\end{align*} 
Since $\zeta(x)=O(|x-1|)$ for $x\in[0,1]$, this yields~\eqref{asymp f cos} for $x\in \RN{2}\cap[4N\delta,\infty)$, with the required error bound $(1-\mathsf{x})^{-5}O(N^{-4})$. This completes the proof of~\eqref{asymp f cos}.

For $x\in \RN{3}$, the argument of the Airy function in~\eqref{asymp airy} admits the expansion
\begin{equation*}
    (4N)^{2/3}\zeta(\mathsf{x})= y-\frac{y^2}{5(2N)^{2/3}}+\frac{17y^3}{175(2N)^{2/3}}+y^4O(N^{-2}).
\end{equation*}
Substituting this into~\eqref{asymp airy}, we obtain 
\begin{align*}
 e^{-x/2} L_{N-1}^{(1)}(x)  & = \frac{(-1)^{N-1}}{(16N)^{1/3}} \bigg[ \Ai(y) - (2N)^{-2/3}\Big(\frac{4}{5}y\Ai(y)+\frac{1}{5}y^2\Ai'(y)\Big) 
 \\
    &\quad  + (2N)^{-4/3}\bigg(\Big(\frac{5}{7}y^2+\frac{1}{50}y^5\Big)\Ai(y)  +\Big(\frac{9}{35}y^3+\frac{6}{35}\Big)\Ai'(y)\bigg) + \text{envAi}_{15/2}(y) O(N^{-8/3}) \bigg],
\end{align*}
which leads to \eqref{asymp f airy}.

Finally, for $x\in \RN{4}$, we apply the exponential asymptotics of the Airy function; see~\cite[Eqs.~(9.7.5)--(9.7.6)]{NIST}. Together with the expansion
\begin{equation*}
    \zeta(x)=\frac{1}{2^{2/3}}(x-1)+O(|x-1|^2),
\end{equation*}
we obtain 
\begin{equation*}
    e^{-x/2} L_{N-1}^{(1)}(x) = \frac{(-1)^{N-1}}{(32\pi N)^{1/2}} \frac{1}{\mathsf{x}^{3/4}(\mathsf{x}-1)^{1/4}} \exp\Big[-2N\Big(\sqrt{\mathsf{x}^2-\mathsf{x}}-\textup{arccosh}\sqrt{\mathsf{x}}\Big)\Big] \Big(1+O(N^{-\frac{1}{4}})\Big)
\end{equation*}
which leads to \eqref{asymp f exp}. This completes the proof. 
\end{proof}

\subsection{Proof of Proposition \ref{prop asymp Psi}} \label{Subsectoin_prop asymp rho Psi}

In this subsection, we prove Proposition~\ref{prop asymp Psi}. Recall that $\Psi_N$ is given by \eqref{def of PsiN}. 
By Proposition~\ref{Prop_BH relation}, the asymptotic behaviour of $\Psi_N$ is essentially obtained by integrating $f_N$ twice. As an intermediate step, we first derive the asymptotic behaviour of $\rho_N$ in \eqref{def of rho}, which arises from a single integration of $f_N$. Since $\rho_N$ can be identified with the density of the %Laguerre unitary ensemble (LUE), 
LUE, 
these asymptotics are also of independent interest.

\begin{prop}[\textbf{Asymptotics of }$\rho_N$] \label{prop asymp rho}
Let $\mathsf{x}=x/4N$, $\mathsf{X}=4Nx$ and $y=(2N)^{2/3}(\mathsf{x}-1)$.
\begin{itemize}
    \item[\textup{(i)}] \textup{\textbf{(Bessel regime)}} As $N\to\infty$, we have
    \begin{equation} \label{asymp rho bessel}
    \begin{aligned}
        \rho_N(x) &=  N \Big( J_0(\sqrt{\mathsf{X}})^2+J_1(\sqrt{\mathsf{X}})^2 \Big) -\frac{1}{48N}
\Bigl( 2\mathsf XJ_0(\sqrt{\mathsf X})^2 -4\sqrt{\mathsf X}J_0(\sqrt{\mathsf X})J_1(\sqrt{\mathsf X}) +\mathsf XJ_1(\sqrt{\mathsf X})^2
\Bigr) + O(N^{-1})
% \\
% & \quad +\frac{\sqrt{\mathsf X}}{23040\,N^3}
% \Big[
% \sqrt{\mathsf X}(96-26\mathsf X)J_0(\sqrt{\mathsf X})^2+
% \sqrt{\mathsf X}(6\mathsf X-56)J_1(\sqrt{\mathsf X})^2
% \\
% &\quad - (192-104\mathsf X+5\mathsf X^2)
% J_0(\sqrt{\mathsf X})J_1(\sqrt{\mathsf X})
% \Big]+ O(N^{-1})
    \end{aligned}
    \end{equation}
    uniformly for $x\in\RN{1}$. 
\smallskip 
    \item[\textup{(ii)}] \textup{\textbf{(Oscillatory regime)}} As $N\to\infty$, we have 
    \begin{equation} \label{asymp rho cos}
    \begin{aligned}
        \rho_N(x)  =  \frac{1}{2\pi} \sqrt{\frac{1-\mathsf{x}}{\mathsf{x}}} -\frac{1}{16\pi N}\frac{\cos\big(8N\xi(\mathsf{x})\big)}{\mathsf{x}(1-\mathsf{x})} + O(N^{-1})
    \end{aligned}
    \end{equation}
    uniformly for $x\in\RN{2}$. Here, $\xi$ is given by \eqref{def xi}. 
\smallskip 
    \item[\textup{(iii)}] \textup{\textbf{(Airy regime)}} As $N\to\infty$, we have 
    \begin{equation} \label{asymp rho airy}
        \rho_N(x) = \frac{1}{(16N)^{1/3}} \Big(\Ai'(y)^2-y\Ai(y)^2\Big)
         + \textup{envAi}_{3/4}(y)^2 O(N^{-1})
    \end{equation}
    uniformly for $x\in\RN{3}$. Here, $\textup{envAi}_p$ is given by \eqref{def of envAi}. 
\smallskip 
    \item[\textup{(iv)}] \textup{\textbf{(Exponential regime)}} As $N\to\infty$, we have 
    \begin{equation} \label{asymp rho exp}
        \rho_N(x) = \frac{1}{32\pi N}\frac{1}{\mathsf{x}(\mathsf{x}-1)} \exp\Big[-4N\Big(\sqrt{\mathsf{x}^2-\mathsf{x}}-\textup{arccosh}\sqrt{\mathsf{x}}\Big)\Big] \Big(1+O(N^{-1/4})\Big)
    \end{equation}
    uniformly for $x\in\RN{4}$.
\end{itemize}
\end{prop}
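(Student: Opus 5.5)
We will derive Proposition~\ref{prop asymp rho} from Proposition~\ref{prop asymp f} through the Brézin--Hikami relation of Proposition~\ref{Prop_BH relation}, namely $\rho_N(x)=\int_x^\infty f_N(u)\,du$. The relevant normalisations are $\rho_N(0)=N$ and the exponential decay of $f_N$ at infinity. Because the integral runs from $x$ to $\infty$, we will work from right to left: (iv), then (iii), then (ii), then (i).

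\emph{Regime (iv).} For $x\in\RN{4}$ we integrate \eqref{asymp f exp} using Laplace's method at the lower endpoint. The phase $h(\mathsf{x})=\sqrt{\mathsf{x}^2-\mathsf{x}}-\textup{arccosh}\sqrt{\mathsf{x}}$ has derivative $h'(\mathsf{x})=\sqrt{(\mathsf{x}-1)/\mathsf{x}}$. In the variable $u=4N\mathsf{u}$, the endpoint contribution is therefore $f_N(x)/(4N\cdot h'(\mathsf{x})/(4N))=f_N(x)\sqrt{\mathsf{x}/(\mathsf{x}-1)}$. Here $4Nh'(\mathsf{x})\ge cN^{-1/4}$ on $\RN{4}$, since $\mathsf{x}-1\ge dN^{-1/2}/4$. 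Its inverse appears as the relative correction $O(1/(N(\mathsf{x}-1)^{3/2}))=O(N^{-1/4})$, consistent with the stated error. This yields \eqref{asymp rho exp} directly, because the prefactor $\mathsf{x}^{-3/2}(\mathsf{x}-1)^{-1/2}$ becomes $\mathsf{x}^{-1}(\mathsf{x}-1)^{-1}$.

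\emph{Regime (iii).} For $x\in\RN{3}$ we change variables to $y$, with $du=4N(2N)^{-2/3}\,dy$. We then integrate the leading term of \eqref{asymp f airy} using the antiderivative identity $\frac{d}{dy}\big(\Ai'(y)^2-y\Ai(y)^2\big)=-\Ai(y)^2$. The prefactor works out as $4N(2N)^{-2/3}(16N)^{-2/3}=(16N)^{-1/3}$. The correction terms in \eqref{asymp f airy} carry an extra factor $N^{-2/3}$ relative to the leading term. Integrated against the Airy envelope, they give contributions that are $O(N^{-1})$ times envelope factors. The tail beyond $4N+d\sqrt N$ is exponentially small by (iv), and the two expansions match on the overlap $y\asymp N^{-1/6}\cdot$const.

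\emph{Regime (ii).} This is the main obstacle. For $x\in\RN{2}$ we write $\rho_N(x)=\rho_N(x_*)+\int_x^{x_*}f_N$, where $x_*=4N-d\sqrt N$ and $\rho_N(x_*)$ is known from (iii). The smooth part of \eqref{asymp f cos} integrates in closed form: $\int \frac{du}{16\pi N\mathsf{u}^{3/2}(1-\mathsf{u})^{1/2}}$ equals $-\frac{1}{2\pi}\sqrt{(1-\mathsf{u})/\mathsf{u}}$. The oscillatory term $\sin(8N\xi(\mathsf{u}))$ is handled by one integration by parts. Since $\xi'(\mathsf{x})=\tfrac12\sqrt{(1-\mathsf{x})/\mathsf{x}}$, the boundary term is exactly $-\frac{\cos(8N\xi)}{16\pi N\mathsf{x}(1-\mathsf{x})}$. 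The remaining integral, together with the second- and third-order terms of \eqref{asymp f cos}, must be shown to be $O(N^{-1})$ uniformly down to $x=c$ and up to $x_*$. This is where the third-order expansion is needed: the $(1-\mathsf{x})^{-k}$ and $\mathsf{x}^{-k}$ singular weights must be integrated with explicit control, using the non-oscillatory $36(3-8\mathsf x)$ piece and further integrations by parts for the $\cos$ terms. We must also check that the constant produced at $x_*$ cancels against the Airy value to within $O(N^{-1})$. We will verify this by expanding $\Ai'(y)^2-y\Ai(y)^2$ for $y\to-\infty$, which gives $\frac{\sqrt{-y}}{\pi}+\frac{\cos(\frac43(-y)^{3/2})}{4\pi y}+\dots$, and comparing.

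\emph{Regime (i).} For $x\in\RN{1}$, instead of integrating from infinity we use $\rho_N(x)=N-\int_0^x f_N$. We substitute $u=\mathsf{U}/(4N)$ in \eqref{asymp f bessel}. The leading term integrates via $\frac{d}{d\mathsf X}\big(J_0(\sqrt{\mathsf X})^2+J_1(\sqrt{\mathsf X})^2\big)=-J_1(\sqrt{\mathsf X})^2/\mathsf X$. The $1/(12)$ term integrates via standard Bessel product antiderivatives. We will check these by differentiating the stated $\frac{1}{48N}$ combination using $J_0'=-J_1$ and $(zJ_1)'=zJ_0$. The $N^{-2}$ term, integrated over $\mathsf U\le 4Nc$ with $J$-decay, contributes $O(N^{-1})$. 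As a consistency check, the result should match (ii) at $x=c$ through the large-argument Bessel asymptotics.
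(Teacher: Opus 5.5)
\textbf{There is a genuine gap in (ii).} Your treatment of (iv), (iii) and (i) follows the paper: Laplace/integration by parts from $+\infty$ on $\RN{4}$, integration of the Airy expansion from the right end of $\RN{3}$ anchored by (iv), and $\rho_N=N-\int_0^x f_N$ on $\RN{1}$. The departure is in (ii), and there the plan as written does not reach $O(N^{-1})$.

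You fix the constant of integration at $x_*=4N-d\sqrt N$ from the Airy side. At $x_*$ one has $y\asymp -N^{1/6}$ (not $N^{-1/6}$), so $\textup{envAi}_{3/4}(y)^2\asymp N^{1/4}$, and (iii) determines $\rho_N(x_*)$ only up to $O(N^{-3/4})$. This loss is real, not an artefact of the bound. The next term of the Airy antiderivative is $\frac{1}{20N}\big(3y^2\Ai(y)^2+2\Ai(y)\Ai'(y)-2y\Ai'(y)^2\big)\approx\frac{(-y)^{3/2}}{20\pi N}\big(2+\cos^2(\tfrac23(-y)^{3/2}-\tfrac\pi4)\big)$. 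It is positive and of order $N^{-3/4}$ at $x_*$. Its mean $\frac{(-y)^{3/2}}{8\pi N}$ is exactly what appears when $\frac{1}{2\pi}\sqrt{(1-\mathsf x)/\mathsf x}$ is re-expanded at $\mathsf x=1+y(2N)^{-2/3}$.

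So comparing with $\Ai'(y)^2-y\Ai(y)^2$ alone leaves an $N^{-3/4}$ mismatch. Because it enters as a constant of integration, it degrades the error on all of $\RN{2}$, bulk included. Your direction can be rescued only by carrying the full three-term Airy antiderivative and matching to three orders at $|y|\asymp N^{1/6}$. That antiderivative is the paper's $\rho_N^{\RN{3}}$, whose remainder $\textup{envAi}_4(y)^2O(N^{-7/3})$ is $O(N^{-1})$ at $x_*$. The paper instead integrates $\RN{2}$ from the left, writing $\rho_N(x)=\rho_N(c)-\int_c^x f_N$ with $\rho_N(c)=\rho_N^{\RN{1}}(4Nc)+O(N^{-1})$. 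That anchor is already accurate to $O(N^{-1})$, because (i) comes from integrating an $O(N^{-1})$ remainder over the bounded interval $\RN{1}$, with no envelope growth.

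\textbf{The leftover pieces in $\RN{2}$ are not $O(N^{-1})$ at the edges.} Independently of the direction, your claim that they are $O(N^{-1})$ uniformly down to $x=c$ cannot be proved. Since $\frac{d}{d\mathsf x}\big[\mathsf x^{-3/2}(1-\mathsf x)^{-5/2}\big]=\frac{8\mathsf x-3}{2}\,\mathsf x^{-5/2}(1-\mathsf x)^{-7/2}$, the non-oscillatory $36(3-8\mathsf x)$ piece integrates exactly to $\frac{1}{256\pi N^2}\mathsf x^{-3/2}(1-\mathsf x)^{-5/2}$. The second integration by parts of the leading $\sin$ term, and the first one of the $\cos$ term, produce boundary terms of the same size. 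This is $\asymp N^{-1/2}$ at $x=c$ and $\asymp N^{-3/4}$ at $x_*$.

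A direct check against (i) confirms it. The large-argument Bessel expansion gives that $\rho_N(c)$ minus the two explicit terms of (ii) equals $\frac{1-\sin(4\sqrt{Nc})}{32\pi c^{3/2}\sqrt N}+O(N^{-1})$. What is provable uniformly on $\RN{2}$ is $\rho_N(x)=\rho_N^{\RN{2}}(\mathsf x)+O(N^{-1})$ with the paper's third term retained. The paper's closing remark, that the $O(N^{-1})$ error dominates that term, has the same defect near the endpoints of $\RN{2}$. Hence the two-term form of (ii) holds with a uniform $O(N^{-1})$ only away from the edges, or with an extra $O\big(N^{-2}\mathsf x^{-3/2}(1-\mathsf x)^{-5/2}\big)$.

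\textbf{Exact antiderivatives are needed in (i) and (iii).} You need the exact antiderivatives the paper uses (Bessel recurrences, and $\Ai''=y\Ai$), not envelope bounds. Bounded in absolute value, the $N^{-2}$ term of the Bessel expansion of $f_N$ integrates only to $O(N^{-1/2})$ over $\RN{1}$. Likewise $y^2\Ai(y)\Ai'(y)$ in the Airy expansion integrates only to $N^{-1}|y|^3$. Both are too weak for the stated errors.
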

 
\begin{rem}[LUE density] \label{Rem_LUE density}
As previously mentioned, the function $\rho_N$ can be identified with the one-point function of the LUE. Thus Proposition~\ref{prop asymp rho} describes the large-$N$ asymptotic behaviour of the LUE average density across all scaling regimes. In the literature, the regimes $\RN{1}$, $\RN{2}$, $\RN{3}$, and $\RN{4}$ are also commonly referred to as the hard edge, bulk, soft edge, and exterior regimes, respectively, and the corresponding asymptotics have been extensively studied for general rectangular parameter. In particular, the leading term in \eqref{asymp rho cos} is given by the Marchenko--Pastur law, while the correction term coincides with \cite[Eq.~(52)]{GFF05}. At the soft and hard edges, it was previously shown in \cite[Eq.~(73)]{GFF05} and \cite[Eq.~(4.27)]{FT19}, respectively, that the asymptotics \eqref{asymp rho airy} and \eqref{asymp rho bessel} hold up to the second-order term. 

However, these previously known results are not sufficient for our purposes, since they are formulated only for fixed scaling limits, and simply combining them does not fully cover the domains required in the present analysis. For instance, \cite[Eq.~(52)]{GFF05} states that \eqref{asymp rho cos} holds for $x=4N\mathsf{x}$ with fixed $\mathsf{x}\in(0,1)$, whereas we prove that it holds uniformly for $x\in \RN{2}=[c,4N-d\sqrt{N}]$ with fixed $c,d>0$. 
\end{rem}

To prove Proposition~\ref{prop asymp rho}, we distinguish between the different regimes. For $x\in \RN{1}\cup\RN{2}$, we use the representation
\begin{equation}
    \rho_N(x)
    =
    N-\int_0^x f_N(u)\,d u,
\end{equation}
whereas for $x\in \RN{3}\cup\RN{4}$, we employ
\begin{equation}
    \rho_N(x)  =  \int_x^\infty f_N(u)\,d u.
\end{equation}
%For each of the four regimes, the integral involves only elementary calculus and differential relations of the Bessel and Airy functions. 
We first show the following elementary lemma.

\begin{lem} \label{integral lemma}
Suppose that $\alpha,\beta,c,d>0$ are positive constants, and let $p_1(x)$ and $q_1(x)$ be polynomials. Define
\begin{equation} \label{def of varphi}
    \varphi(x) := \begin{cases} \displaystyle
        \sqrt{x-x^2}+\textup{arcsin}\sqrt{x} & \textup{ if } 0<x<1, 
        \smallskip 
        \\ \displaystyle
        \sqrt{x^2-x}-\textup{arccosh}\sqrt{x} & \textup{ if } x\ge1.
    \end{cases}
\end{equation}
Then we have the following. 
\begin{itemize}
    \item[(i)] For sufficiently large $\lambda>0$, suppose that $c/\lambda\le a\le b\le 1-d\lambda^{-1/2}$, and let $M$ be an arbitrary positive integer. Then we have 
    \begin{equation}
        \int^b_a \frac{p_1(x)}{x^{\alpha}(1-x)^{\beta}} e^{i\lambda\varphi(x)} \,dx = F(b)-F(a),
    \end{equation}
    where 
    \begin{equation*}
        F(x) = -\sum_{j=1}^{M} \Big(\frac{2i}{\lambda}\Big)^{j} \frac{p_j(x)}{x^{\alpha+(j-2)/2}(1-x)^{\beta+(3j-2)/2}} e^{i\lambda\varphi(x)} + O\Big(\frac{1}{\lambda^{M+1}}\Big) \frac{1}{x^{\alpha+(M-1)/2}(1-x)^{\beta+(3M+1)/2}}.
    \end{equation*}
   Here, the sequence of polynomials $(p_j)_{j\ge1}$ is defined recursively by 
    \begin{equation} \label{def of pj}
        p_{j+1}(x) := x^{\alpha+j/2}(1-x)^{\beta+3j/2} \frac{d}{dx}\Big(\frac{p_{j}(x)}{x^{\alpha+(j-2)/2}(1-x)^{\beta+(3j-2)/2}}\Big), \qquad j\ge1.
    \end{equation}

    \item[(ii)] For sufficiently large $\lambda>0$, suppose that $1+d\lambda^{-1/2}\le a\le b$, and let $M$ be an arbitrary positive integer. Then we have
    \begin{equation}
        \int^b_a \frac{q_1(x)}{x^{\alpha}(x-1)^{\beta}} e^{i\lambda\varphi(x)} \,dx = G(b)-G(a),
    \end{equation}
    where 
    \begin{equation*}
        G(x) = \sum_{j=1}^{M} \frac{(-1)^{j-1}}{\lambda^j} \frac{q_j(x)}{x^{\alpha+(j-2)/2}(x-1)^{\beta+(3j-2)/2}} e^{\lambda\varphi(x)} + O\Big(\frac{1}{\lambda^{M+1}}\Big) \frac{1}{x^{\alpha+(M-1)/2}(x-1)^{\beta+(3M+1)/2}}.
    \end{equation*}
    Here, the sequence of polynomials $(q_j)_{j\ge1}$ is defined recursively by 
    \begin{equation} \label{def of qj}
        q_{j+1}(x) := x^{\alpha+j/2}(x-1)^{\beta+3j/2} \frac{d}{dx}\Big(\frac{q_{j}(x)}{x^{\alpha+(j-2)/2}(x-1)^{\beta+(3j-2)/2}}\Big), \qquad j\ge1.
    \end{equation}
\end{itemize}
\end{lem}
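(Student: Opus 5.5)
The plan is to integrate by parts repeatedly, using that the phase $\varphi$ in \eqref{def of varphi} has an explicit elementary derivative. A direct computation gives
\begin{equation*}
\varphi'(x)=\sqrt{\frac{1-x}{x}} \quad (0<x<1), \qquad \varphi'(x)=\sqrt{\frac{x-1}{x}} \quad (x>1).
\end{equation*}
For (i) I write $e^{i\lambda\varphi}=\frac{1}{i\lambda\varphi'}\frac{d}{dx}e^{i\lambda\varphi}$. With $g_1(x):=p_1(x)x^{-\alpha}(1-x)^{-\beta}$, one integration by parts gives
\begin{equation*}
\int_a^b g_1e^{i\lambda\varphi}\,dx=\Big[\frac{g_1}{i\lambda\varphi'}e^{i\lambda\varphi}\Big]_a^b-\frac{1}{i\lambda}\int_a^b\Big(\frac{g_1}{\varphi'}\Big)'e^{i\lambda\varphi}\,dx .
\end{equation*}
Here $g_1/\varphi'=p_1x^{-\alpha+1/2}(1-x)^{-\beta-1/2}$. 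Its derivative has the form $p_2x^{-\alpha-1/2}(1-x)^{-\beta-3/2}$, which is exactly the recursion \eqref{def of pj} at $j=1$. So the new integrand is of the same type, with the exponent of $x$ raised by $1/2$ and that of $(1-x)$ by $3/2$.

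Iterating $M$ times gives the boundary terms in $F$, with alternating signs and powers of $\lambda^{-1}$ (I will fix the normalising constant in front of each $p_j$ so that it matches the form of $F$ as stated). It also leaves a remainder $\lambda^{-M}\int_a^b p_{M+1}x^{-\alpha-M/2}(1-x)^{-\beta-3M/2}e^{i\lambda\varphi}\,dx$. Part (ii) is identical, with $e^{\lambda\varphi}=\frac{1}{\lambda\varphi'}\frac{d}{dx}e^{\lambda\varphi}$ and $(x-1)$ in place of $(1-x)$. There is no factor $i$, and the signs $(-1)^{j-1}$ come directly from the repeated integration by parts.

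For the remainder, I will do one more integration by parts. This turns it into a boundary term of the displayed size $\lambda^{-M-1}x^{-\alpha-(M-1)/2}(1-x)^{-\beta-(3M+1)/2}$, plus an integral which I bound crudely in absolute value. For (ii) the modulus $|e^{\lambda\varphi}|$ must be kept inside, but $\varphi$ is monotone, so the same bound works. The crude bound uses
\begin{equation*}
\int_a^b x^{-a'}(1-x)^{-b'}\,dx\lesssim a^{1-a'}+(1-b)^{1-b'}
\end{equation*}
for large exponents. This shows the bound is dominated by endpoint values of the same shape as the next term, so the remainder can be absorbed into the $O$-terms of $F(b)$ and $F(a)$.

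The step I expect to be the main obstacle is checking that the expansion is genuinely asymptotic uniformly up to the endpoints. The ratio of consecutive terms is of order $\big(\lambda x^{1/2}(1-x)^{3/2}\big)^{-1}$. The constraints on $a$ and $b$ make this small: $x\ge c/\lambda$ gives at most $O(\lambda^{-1/2})$ from the $x$-factor, and $1-x\ge d\lambda^{-1/2}$ gives at most $O(\lambda^{-1/4})$ from the $(1-x)$-factor. In the middle of the interval both factors are bounded below, so the ratio is $O(\lambda^{-1})$. I will also need to make sure the degrees of the $p_j$ stay bounded for each fixed $M$, which is clear from the recursion, and that all constants depend only on $\alpha,\beta,c,d,M$.
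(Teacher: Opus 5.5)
Your argument for (i) is correct. It departs from the paper only in how the last remainder is controlled, but that departure is a real simplification.

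The paper integrates by parts $M+m$ times with $m$ large. It bounds the final integral by $\lambda^{-M-m}\sup_{[a,b]}x^{-\alpha-(M+m)/2}(1-x)^{-\beta-3(M+m)/2}$. It then needs $a\ge c/\lambda$ and $b\le 1-d\lambda^{-1/2}$ to turn this into $O\big(\lambda^{\max\{\alpha-(M+m)/2,\,\beta/2-(M+m)/4\}}\big)=O(\lambda^{-M-1})$, and likewise to absorb the extra boundary terms $j=M+2,\dots,M+m$.

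You stop after one extra integration by parts and integrate the weight exactly. Since $\alpha+(M+1)/2>1$ and $\beta+3(M+1)/2>1$ for every $M\ge1$, the remaining integral is $O(\lambda^{-M-1})\big(a^{-\alpha-(M-1)/2}+(1-b)^{-\beta-(3M+1)/2}\big)$. This is precisely the stated weight at $a$ and $b$, because the missing factors $(1-a)^{-\beta-(3M+1)/2}$ and $b^{-\alpha-(M-1)/2}$ are $\ge1$.

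Your route proves the identity for all $0<a\le b<1$ and all $\lambda>0$. The hypotheses on $a,b$ then only ensure that the expansion is genuinely asymptotic, which is exactly your ratio check. The gains $\lambda^{-1/2}$ near $0$ and $\lambda^{-1/4}$ near $1$ that you find there are what drives the paper's choice of large $m$. In return, the paper's route gives an absolute $O(\lambda^{-M-1})$ bound on the final integral, at the cost of using the constraints.

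Two points need fixing.

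First, you cannot ``fix the normalising constant in front of each $p_j$''. The recursion determines every $p_j$ from $p_1$, and with $\varphi'(x)=\sqrt{(1-x)/x}$ the boundary coefficients come out as $-(i/\lambda)^j$. This is also what the paper's own proof obtains. The $(2i/\lambda)^j$ in the displayed $F$ is a misprint, and you should say so rather than aim to reproduce it.

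Second, in (ii) the claim that monotonicity of $\varphi$ lets the same bound go through is valid only when the exponential decays, i.e.\ for $e^{-\lambda\varphi}$. That is the case the paper actually uses for $\rho_N$ in the exponential regime. There $\sup_{[a,b]}e^{-\lambda\varphi}=e^{-\lambda\varphi(a)}$ sits at the same endpoint where $\int_a^b x^{-A}(x-1)^{-B}\,dx$ concentrates.

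For the growing $e^{\lambda\varphi}$ you took from $G$, the supremum $e^{\lambda\varphi(b)}$ multiplies an $a$-endpoint quantity, and the crude bound does not close. Moreover, every boundary term then carries a factor $e^{\lambda\varphi(x)}\ge1$, so the error term in $G$ cannot hold in the printed form without that factor. Part (ii) is inconsistent as printed: the integral has $e^{i\lambda\varphi}$ while $G$ has $e^{\lambda\varphi}$. The paper's one-line ``same argument'' does not resolve this either.
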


\begin{proof}
We prove only~(i), since the proof of~(ii) follows by the same argument. Note that by \eqref{def of varphi}, we have
\begin{equation*}
    \frac{d}{dx}\varphi(x) = \begin{cases} \displaystyle
        x^{-1/2}(1-x)^{1/2} & \textup{ if } 0<x<1, 
        \smallskip 
        \\ 
        \displaystyle
        x^{-1/2}(x-1)^{1/2} & \textup{ if } x\ge1.
    \end{cases}
\end{equation*}
Then, repeated integration by parts yields 
\begin{equation*}
\begin{split}
    \int^b_a \frac{p_1(x)}{x^{\alpha}(1-x)^{\beta}} e^{i\lambda\varphi(x)} \,dx
    &=  -\sum_{j=1}^{M} \Big(\frac{i}{\lambda}\Big)^{j} \frac{p_j(x)}{x^{\alpha+(j-2)/2}(1-x)^{\beta+(3j-2)/2}} e^{i\lambda\varphi(x)}\bigg\vert^b_a 
    \\
    &\quad + \Big(\frac{i}{\lambda}\Big)^{M} \int_a^b \frac{p_{M+1}(x)}{x^{\alpha+M/2}(1-x)^{\beta+3M/2}} e^{i\lambda\varphi(x)} \,dx, 
\end{split}
\end{equation*}
where $(p_j)_{j\ge1}$ is given by \eqref{def of pj}. Thus, it suffices to show the error bound
\begin{equation*}
    E(x) := \Big(\frac{i}{\lambda}\Big)^{M} \int_a^b \frac{p_{M+1}(x)}{x^{\alpha+M/2}(1-x)^{\beta+3M/2}} e^{i\lambda\varphi(x)} \,dx = \frac{1}{x^{\alpha+(M-1)/2}(1-x)^{\beta+(3M+1)/2}}O(\lambda^{-M-1}).
\end{equation*}
Applying integration by parts to $E(x)$ once more, for a positive integer $m$, we obtain 
\begin{align*}
    E(x) = &-\sum_{j=M+1}^{M+m} \Big(\frac{i}{\lambda}\Big)^{j} \frac{p_j(x)}{x^{\alpha+(j-2)/2}(1-x)^{\beta+(3j-2)/2}} e^{i\lambda\varphi(x)}\bigg\vert^b_a \\
    &+ \Big(\frac{i}{\lambda}\Big)^{M+m} \int_a^b \frac{p_{M+m}(x)}{x^{\alpha+(M+m)/2}(1-x)^{\beta+3(M+m)/2}} e^{i\lambda\varphi(x)} \,dx. 
\end{align*}
It is immediate that the first term on the right-hand side satisfies the desired bound. For the second term, since $x\in [a,b]\subset [c/\lambda,1-d\lambda^{-1/2}]$, we obtain 
\begin{equation*}
\begin{split}
    &\quad \bigg\vert \Big(\frac{2i}{\lambda}\Big)^{M+m} \int_a^b \frac{p_{M+m}(x)}{x^{\alpha+(M+m)/2}(1-x)^{\beta+3(M+m)/2}} e^{i\lambda\varphi(x)} \,dx \bigg\vert \\
    & = O(\lambda^{-M-m})\sup_{x\in[a,b]} \frac{1}{x^{\alpha+(M+m)/2}(1-x)^{\beta+3(M+m)/2}} = O(\lambda^{\max\{\alpha-(M+m)/2,\beta/2-(M+m)/4\}}).
\end{split}
\end{equation*}
Choosing $m$ sufficiently large, the above term is also of order $O(\lambda^{-M-1})$. This completes the proof. 
\end{proof}

\begin{proof}[Proof of Propositions~\ref{prop asymp Psi} and ~\ref{prop asymp rho}] 
Note that by Proposition~\ref{Prop_BH relation}, Propositions~\ref{prop asymp rho} and~\ref{prop asymp Psi} are obtained by integrating the asymptotic expansion in Proposition~\ref{prop asymp f} once and twice, respectively. The main issue is to control the error terms appropriately, and the necessary estimates are already established in Proposition~\ref{prop asymp f}. Since the proof of Proposition~\ref{prop asymp Psi} follows along the same lines, differing only in the explicit form of the functions involved, we present the details only for Proposition~\ref{prop asymp rho}. 

Throughout the proof, we write $f_N^{\RN{1}}$, $f_N^{\RN{2}}$, $f_N^{\RN{3}}$, and $f_N^{\RN{4}}$ for the limiting profile functions appearing on the right-hand sides of  \eqref{asymp f bessel}--\eqref{asymp f exp}. For example, for $x\in\RN{1}$, we have 
\begin{equation*}
    f_N(x) = f_N^{\RN{1}}(\mathsf{X})+O(N^{-1}), 
\end{equation*}
where
\begin{align*}
 f_N^{\RN{1}}(u) & := 4N^2\frac{J_1(\sqrt{u})^2}{u} + \frac{1}{12}\sqrt{u}J_1(\sqrt{u})J_2(\sqrt{u})
    \\
    &\quad  + \frac{1}{11520N^2}\Big[u(24-5u)J_1(\sqrt{u})^2 -24\sqrt{u}(4-u)J_1(\sqrt{u})J_2(\sqrt{u})+5u^2J_2(\sqrt{u})^2\bigg].
\end{align*} 

We first establish \eqref{asymp rho bessel}. For $x\in\RN{1}$, direct integration of \eqref{asymp f bessel}, together with the differential identities for Bessel functions (see e.g. \cite[Eqs.~(10.6.1)--(10.6.2)]{NIST}), yields 
\begin{align} \label{eq rho 1}
    \rho_N(x) = N - \frac{1}{4N}\int_0^{\mathsf{X}} f_N^{\RN{1}}(u) \,du + O(N^{-1}) = \rho_N^{\RN{1}}(\mathsf{X}) + O(N^{-1}),
\end{align}
where 
\begin{align*}
    \rho_N^{\RN{1}}(u) &:= N \Big( J_0(\sqrt{u})^2+J_1(\sqrt{u})^2 \Big) - \frac{1}{48N}\Big(2uJ_0(\sqrt{u})^2-4\sqrt{u}J_0(\sqrt{u})J_1(\sqrt{u})+uJ_1(\sqrt{u})^2\Big) 
    \\
    &\quad + \frac{1}{23040N^3}\Big(2u(48-13u)J_0(\sqrt{u})^2 -\sqrt{u}(192-104u+5u^2)J_0(\sqrt{u})J_1(\sqrt{u}) -2u(28-3u)J_1(\sqrt{u})^2\Big). 
\end{align*}
This gives rise to \eqref{asymp rho bessel}.

Next, we establish \eqref{asymp rho cos}. For $x\in\RN{2}$, integrating \eqref{asymp f cos} and applying Lemma~\ref{integral lemma}~(i), we obtain 
\begin{align*}
  \rho_N(x) & = \rho_N(c) - 4N \int_{c/4N}^{\mathsf{x}} f_N^{\RN{2}}(u) \,du + O(N^{-3}) \int_{c/4N}^{\mathsf{x}} \frac{1}{u^{3}(1-u)^{5}} \,du \\
    & = \rho_N^{\RN{2}}(\mathsf{x}) -\rho_N^{\RN{2}}(c/4N) + \rho_N(c) + O(N^{-1}),
\end{align*} 
where 
\begin{align*}
 \rho_N^{\RN{2}}(u) &:= \frac{1}{2\pi} \sqrt{\frac{1-u}{u}} -\frac{1}{16\pi N}\frac{\cos\big(8N\xi(u)\big)}{u(1-u)} + \frac{1}{256\pi N^2} \bigg(\frac{1}{u^{3/2}(1-u)^{5/2}} -\frac{(3-12u-8u^2)\sin\big(8N\xi(u)\big)}{3u^{3/2}(1-u)^{5/2}}\bigg).
\end{align*}
Using \eqref{eq rho 1}, we observe that the boundary terms cancel: 
\begin{equation*}
    -\rho_N^{\RN{2}}(c/4N) + \rho_N(c) = -\rho_N^{\RN{2}}(c/4N) + \rho_N^{\RN{1}}(4Nc) + O(N^{-1}) = O(N^{-1}).
\end{equation*}
Moreover, the $O(N^{-1})$ error dominates the third expansion term of $\rho_N^{\RN{2}}$, which yields \eqref{asymp rho cos}. 

We now turn to the proof of \eqref{asymp rho exp}. For $x\in\RN{4}$, integrating \eqref{asymp f exp} and applying Lemma~\ref{integral lemma}~(ii), we obtain 
\begin{equation} \label{eq rho 4}
    \rho_N(x) = \frac{1}{4N} \int_{\mathsf{x}}^{\infty} f_N^{\RN{4}}(u) \,du \Big(1+O(N^{-1/4})\Big) = \rho_N^{\RN{4}}(\mathsf{x}) \Big(1+O(N^{-1/4})\Big),
\end{equation}
where 
\begin{align*}
    \rho_N^{\RN{4}}(u) &:= \frac{1}{32\pi N}\frac{1}{u(u-1)} \exp\Big[-4N\Big(\sqrt{u^2-u}-\textup{arccosh}\sqrt{u}\Big)\Big].
\end{align*}
This gives rise to the desired asymptotic formula \eqref{asymp rho exp}. 

Finally, we establish \eqref{asymp rho airy}. For $x\in\RN{3}$, using \eqref{asymp f airy}, we obtain
\begin{align}\label{eq rho 3}
\begin{split}
    \rho_N(x) &= \rho_N\big(4N+d\sqrt{N}\big) + (16N)^{1/3} \int_{y}^{dN^{1/6}/2^{4/3}} f_N^{\RN{3}}(u) \,du + O(N^{-7/3}) \int_{y}^{dN^{1/6}/2^{4/3}} \textup{envAi}_{7/2}(u)^2 \,du \\
    &= \rho_N^{\RN{3}}(y)-\rho_N^{\RN{3}}\Big(\frac{dN^{1/6}}{2^{4/3}}\Big) + \rho_N\big(4N+d\sqrt{N}\big) + \textup{envAi}_{4}(y)^2 O(N^{-7/3}),
\end{split}
\end{align} 
where
\begin{equation*}
\begin{split}
    \rho_N^{\RN{3}}(u) &:= \frac{1}{(16N)^{1/3}} \Big(\Ai'(u)^2-u\Ai(u)^2\Big) +\frac{1}{20N} \Big(3u^2\Ai(u)^2+2\Ai(u)\Ai'(u)-2u\Ai'(u)^2\Big) \\
    &\qquad +\frac{1}{350(2N)^{5/3}} \Big((7-96u^3)\Ai(u)^2-u(74+7u^3)\Ai(u)\Ai'(u)+37u^2\Ai'(u)^2\Big).
\end{split}
\end{equation*}
Here, the integral of $f_N^{\RN{3}}$ follows from the differential equation $\Ai''(u)=u\Ai(u)$, while the error term is estimated using 
\begin{equation*}
    \int\textup{envAi}_p(x)^2\,dx=\textup{envAi}_{p+1/2}(x)^2,
\end{equation*}
which follows directly from the definition \eqref{def of envAi}. Then, one sees from the asymptotics of the Airy function \cite[Eqs. (9.7.5)--(9.7.6), (9.7.9)--(9.7.10)]{NIST} that
\begin{equation*}
    \rho_N^{\RN{3}}(y) + \textup{envAi}_{4}(y)^2 O(N^{-7/3}) = \frac{1}{(16N)^{1/3}} \Big(\Ai'(y)^2-u\Ai(y)^2\Big) + \textup{envAi}_{3/4}(y)^2O(N^{-1}).
\end{equation*}
Then, using \eqref{eq rho 4}, we obtain the cancellation of the boundary terms:
\begin{align*}
    \rho_N(4N+d\sqrt{N})-\rho_N^{\RN{3}}\Big(\frac{dN^{1/6}}{2^{4/3}}\Big)
    = O(N^{-3/4})\exp\Big[-\frac{1}{3}d^{3/2}N^{1/4}\Big].
\end{align*}
Combining this with \eqref{eq rho 3}, we obtain \eqref{asymp rho airy}, which completes the proof.  
\end{proof}

\appendix

\section{Alternative representations of 
$\Psi_N$}

In this appendix, we present alternative representations of the function $\Psi_N$ in \eqref{def of PsiN}. For this, we begin with the following elementary lemma. 

\begin{lem} \label{Lem_Laguerre sum diff identity}
We have 
\begin{equation}
\frac{d}{dx} \bigg[ e^{-x} \sum_{k=0}^{N-1} L_{k}^{(\nu)}(x)^2  \bigg] = -e^{-x} L_{N-1}^{(\nu+1)}(x)^2. 
\end{equation}
\end{lem}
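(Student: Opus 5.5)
The plan is to differentiate the bracket directly and then reduce the resulting sum to a single square through a telescoping argument built on the standard Laguerre identities. By the product rule and the differentiation formula \eqref{derivative of Laguerre}, $\frac{d}{dx}L_k^{(\nu)}=-L_{k-1}^{(\nu+1)}$, the derivative of the bracket is
\[
-e^{-x}\sum_{k=0}^{N-1}\Big(L_k^{(\nu)}(x)^2+2L_k^{(\nu)}(x)L_{k-1}^{(\nu+1)}(x)\Big),
\]
with the convention $L_{-1}^{(\nu+1)}\equiv 0$. It therefore suffices to prove the purely algebraic identity
\[
\sum_{k=0}^{N-1}\Big(L_k^{(\nu)}(x)^2+2L_k^{(\nu)}(x)L_{k-1}^{(\nu+1)}(x)\Big)=L_{N-1}^{(\nu+1)}(x)^2 .
\]

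I will prove this by induction on $N$, which is equivalent to telescoping. The key input is the contiguous relation
\[
L_k^{(\nu)}=L_k^{(\nu+1)}-L_{k-1}^{(\nu+1)},
\]
which is the summation identity $L_n^{(\nu+1)}=\sum_{k\le n}L_k^{(\nu)}$ stated before Proposition~\ref{prop finite N}, written in differenced form. Set $a=L_k^{(\nu+1)}$ and $b=L_{k-1}^{(\nu+1)}$, so that $L_k^{(\nu)}=a-b$. Then the $k$-th summand becomes
\[
(a-b)^2+2(a-b)b=a^2-b^2 .
\]
This is exactly $L_k^{(\nu+1)}(x)^2-L_{k-1}^{(\nu+1)}(x)^2$.

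Summing over $k=0,\dots,N-1$, the sum telescopes to $L_{N-1}^{(\nu+1)}(x)^2-L_{-1}^{(\nu+1)}(x)^2=L_{N-1}^{(\nu+1)}(x)^2$, which gives the claim. The base case $N=1$ is immediate, since $L_0^{(\nu)}=L_0^{(\nu+1)}=1$ and $\frac{d}{dx}e^{-x}=-e^{-x}$.

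There is no real obstacle here. The one point needing care is the boundary convention $L_{-1}\equiv0$, together with confirming that the contiguous relation holds for all $\nu$ (including negative, non-integer values) as a polynomial identity. Both follow directly from the explicit sum in \eqref{def of Laguerre} via Pascal's rule, $\binom{k+\nu+1}{k-j}=\binom{k+\nu}{k-j}+\binom{k+\nu}{k-j-1}$.
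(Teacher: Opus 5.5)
Your proposal is correct and follows essentially the same route as the paper: differentiate using $\frac{d}{dx}L_k^{(\nu)}=-L_{k-1}^{(\nu+1)}$, rewrite $L_k^{(\nu)}=L_k^{(\nu+1)}-L_{k-1}^{(\nu+1)}$ so that each summand becomes $L_k^{(\nu+1)}(x)^2-L_{k-1}^{(\nu+1)}(x)^2$, and telescope using the convention $L_{-1}^{(\nu+1)}\equiv 0$. Your extra remark that the shift identity holds for all $\nu$ by Pascal's rule is a correct justification that the paper simply takes as elementary.
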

\begin{proof} 

By using \eqref{derivative of Laguerre} and  the elementary parameter-shift identity
\begin{equation}\label{shift of Laguerre}
L_k^{(\nu)}(x)=L_k^{(\nu+1)}(x)-L_{k-1}^{(\nu+1)}(x)\qquad (k\ge 0), 
\end{equation} 
with the convention $L_{-1}^{(\nu+1)}\equiv 0$, we have  
\begin{align*}
\frac{d}{dx} \bigg[ e^{-x} \sum_{k=0}^{N-1} L_{k}^{(\nu)}(x)^2  \bigg] & = -e^{-x}\sum_{k=0}^{N-1}L_k^{(\nu)}(x)\Bigl[ L_k^{(\nu)}(x) +2\,L_{k-1}^{(\nu+1)}(x) \Bigr]
\\
&= -e^{-x}\sum_{k=0}^{N-1} \Big[ L_k^{(\nu+1)}(x)^2-L_{k-1}^{(\nu+1)}(x)^2 \Big] =  -e^{-x} L_{N-1}^{(\nu+1)}(x)^2. 
\end{align*} 
This completes the proof. 
\end{proof}

\begin{prop} \label{Prop_Psi alternative}
Let $\Psi_N$ be defined as in \eqref{def of PsiN}. Then it admits the
following alternative representation:
\begin{equation} \label{def of PsiN v2}
 \Psi_N(x) = e^{-x}\sum_{k=0}^{N-1}(N-k)  L_{k}^{(-1)}(x)^2. 
\end{equation}
Moreover, $\Psi_N$ can also be written in the integral form
\begin{equation}  \label{def of PsiN v3}
\Psi_N(x) =  N\int_x^\infty e^{-u}L_{N-1}^{(0)}(u)^2 \,du -N(N-1)e^{-x}\Bigl( L_{N-1}^{(0)}(x)^2-  L_{N-2}^{(0)}(x)\,L_{N}^{(0)}(x) 
\Bigr). 
\end{equation}
\end{prop}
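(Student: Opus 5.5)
The plan is to prove both representations by the same device. Each right-hand side is a polynomial times $e^{-x}$, so it tends to $0$ as $x\to\infty$, and so does $\Psi_N(x)=\int_x^\infty\rho_N(u)\,du$ by Proposition~\ref{Prop_BH relation}. It therefore suffices to show that the derivative of each right-hand side equals $-\rho_N(x)$. Uniqueness of antiderivatives then gives equality, with the constant fixed by the common limit $0$ at infinity.

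For \eqref{def of PsiN v2}, I first reorganise the weighted sum as a double sum,
\[
e^{-x}\sum_{k=0}^{N-1}(N-k)L_k^{(-1)}(x)^2=\sum_{m=0}^{N-1}\bigg[e^{-x}\sum_{k=0}^{m}L_k^{(-1)}(x)^2\bigg],
\]
since each index $k$ appears for exactly $m=k,\dots,N-1$, that is, $N-k$ times. Next I apply Lemma~\ref{Lem_Laguerre sum diff identity} with $\nu=-1$ and $N$ replaced by $m+1$ to each bracket. Its proof uses only \eqref{derivative of Laguerre} and the shift identity \eqref{shift of Laguerre}, both valid for $\nu=-1$ with $L_0^{(-1)}\equiv1$ and $L_{-1}^{(0)}\equiv 0$. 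This gives derivative $-e^{-x}L_m^{(0)}(x)^2$ for each bracket. Summing over $m$ yields exactly $-\rho_N(x)$ by \eqref{def of rho}, which proves \eqref{def of PsiN v2}. This step is essentially immediate.

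For \eqref{def of PsiN v3}, I would differentiate the right-hand side. Writing $L_n:=L_n^{(0)}(x)$, the integral term contributes $-Ne^{-x}L_{N-1}^2$. The second term contributes $N(N-1)e^{-x}\big(L_{N-1}^2-L_{N-2}L_N\big)$ together with $-N(N-1)e^{-x}\frac{d}{dx}\big(L_{N-1}^2-L_{N-2}L_N\big)$. The goal is to match this with the Christoffel--Darboux form $\rho_N=Ne^{-x}\big(L_{N-1}^{(0)}L_{N-1}^{(1)}-L_N^{(0)}L_{N-2}^{(1)}\big)$. I would convert all derivatives and all $L^{(1)}$'s into $L^{(0)}$'s using
\[
xL_n'(x)=nL_n(x)-nL_{n-1}(x),\qquad L_n^{(1)}=\sum_{j\le n}L_j^{(0)},\quad\text{equivalently}\quad xL^{(1)}_{n-1}=n\big(L_{n-1}-L_n\big).
\]
After multiplying through by $x$, both sides become quadratic forms in $L_{N-2},L_{N-1},L_N$. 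I would then eliminate $L_{N-2}$ (or $L_N$) via the three-term recurrence $(n+1)L_{n+1}=(2n+1-x)L_n-nL_{n-1}$ with $n=N-1$, and check that the coefficients agree.

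The main obstacle is this last bookkeeping step: several recurrences must be combined, and it is easy to lose a factor of $N$ or $N-1$. If the direct reduction becomes unwieldy, I would first verify the polynomial identity for small $N$ (for example $N=1,2$) to fix signs. Alternatively, I would establish the auxiliary identity
\[
\frac{d}{dx}\Big[x\,e^{-x}\big(L_{N-1}^2-L_{N-2}L_N\big)\Big]=\text{(combination of }e^{-x}L_{N-1}^2\text{ and }\rho_N),
\]
which is the Laguerre analogue of the standard GUE density derivative identity, and then deduce \eqref{def of PsiN v3} from it.
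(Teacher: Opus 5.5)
Your argument for \eqref{def of PsiN v2} is the paper's: regroup the weighted sum as a triangular double sum and apply Lemma~\ref{Lem_Laguerre sum diff identity} with $\nu=-1$. You are in fact more careful than the paper in fixing the constant of integration by the decay at infinity.

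For \eqref{def of PsiN v3} your route is genuinely different. The paper derives it from \eqref{def of PsiN v2} in three steps:
\begin{itemize}
\item it writes $\Psi_N=Ne^{-x}\sum_{k<N}L_k^{(-1)}(x)^2-e^{-x}\sum_{k<N}kL_k^{(-1)}(x)^2$;
\item it turns the first piece into $N\int_x^\infty e^{-u}L_{N-1}^{(0)}(u)^2\,du$ by the same lemma;
\item it evaluates the second piece in closed form, $\sum_{k=0}^{N-1}kL_k^{(-1)}(x)^2=N(N-1)\bigl(L_{N-1}^2-L_{N-2}L_N\bigr)$.
\end{itemize}
The closed form is justified only by a brief appeal to orthogonality and Christoffel--Darboux. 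For parameter $-1$ the weight is not integrable, so one really has to pass through $L_k^{(-1)}=-\tfrac{x}{k}L_{k-1}^{(1)}$ and the $\alpha=1$ kernel. The paper's route explains where the formula comes from. Yours only verifies it, but it needs nothing beyond the derivative formulas and the three-term recurrence, and it is independent of \eqref{def of PsiN v2}.

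Your bookkeeping does close, with one correction. The formula $xL_{N-2}'=(N-2)(L_{N-2}-L_{N-3})$ introduces $L_{N-3}$, so the expressions are not yet quadratic forms in $L_{N-2},L_{N-1},L_N$ alone. For that term, use instead $xL_n'=(n+1)L_{n+1}-(n+1-x)L_n$, which is equivalent via the recurrence. Writing $R$ for the right-hand side of \eqref{def of PsiN v3}, one then finds
\[
xe^{x}\bigl(R'(x)+\rho_N(x)\bigr)=N(N-2)\,L_{N-1}\bigl[NL_N-(2N-1-x)L_{N-1}+(N-1)L_{N-2}\bigr].
\]
This vanishes by the three-term recurrence at $n=N-1$, so the coefficient comparison you anticipate succeeds.
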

 
\begin{proof} We first establish \eqref{def of PsiN v2}. 
By Proposition~\ref{Prop_BH relation} together with \eqref{def of rho}, 
it suffices to verify the differential identity
\begin{align} \label{diff identity Laguerre in prop}
\frac{d}{dx} \Big[ e^{-x}\sum_{k=0}^{N-1}(N-k)  L_{k}^{(-1)}(x)^2 \Big ]= - e^{-x} \sum_{k=0}^{N-1} L_k^{(0)}(x)^2. 
\end{align}
For this, we rewrite the weighted sum by exchanging the order of summation,  which allows us to express it as a double sum over a triangular index set:  
\begin{align*}
\sum_{k=0}^{N-1}(N-k)  L_{k}^{(-1)}(x)^2 &=  \sum_{k=0}^{N-1} \sum_{m=k+1}^N L_{k}^{(-1)}(x)^2 = \sum_{m=1}^N \sum_{k=0}^{m-1} L_{k}^{(-1)}(x)^2. 
\end{align*}
Combining this observation with Lemma~\ref{Lem_Laguerre sum diff identity},  it follows that 
\begin{align*}
 \frac{d}{dx} \Big[ e^{-x}\sum_{k=0}^{N-1}(N-k)  L_{k}^{(-1)}(x)^2 \Big ] &=    \sum_{m=1}^N  \frac{d}{dx}  \Big[  \sum_{k=0}^{m-1}  e^{-x} L_{k}^{(-1)} (x)^2 \Big ] = - \sum_{m=1}^N e^{-x} L_{m-1}^{ (0) }(x)^2, 
\end{align*}
which leads to \eqref{diff identity Laguerre in prop}. 

We next prove \eqref{def of PsiN v3}. 
Starting from the representation \eqref{def of PsiN v2}, we decompose $\Psi_N$ into two contributions: 
\begin{align} \label{PsiN in the proof}
\Psi_N(x) = N e^{-x}\sum_{k=0}^{N-1} L_{k}^{(-1)}(x)^2- e^{-x}\sum_{k=0}^{N-1} k \, L_{k}^{(-1)}(x)^2.  
\end{align}
Notice here that by using the orthogonality relation \eqref{def of orthogonality of Laguerre} together with the Christoffel–Darboux formula, we have   
\begin{equation}
\sum_{k=0}^{N-1} k \, L_k^{(-1)}(x)^2 = N(N-1)\Bigl( L_{N-1}^{(0)}(x)^2-  L_{N-2}^{(0)}(x)\,L_{N}^{(0)}(x) \Bigr).
\end{equation}
On the other hand, by Lemma~\ref{Lem_Laguerre sum diff identity}, the first contribution admits an integral representation, which coincides with the remaining term in \eqref{def of PsiN v3}. 
Combining these two evaluations completes the proof.
\end{proof}

\section{Expansion coefficients of the Laguerre polynomials} \label{Appendix_Laguerre coeff}

In this appendix, we recall the recursive formulas developed in~\cite{FW88}, which allow one to compute the expansion coefficients appearing in Lemma~\ref{Lem_Laguerre asymp}.

Let \(\nu=4n+2\alpha+2\). Recall that \(\xi\) is defined in~\eqref{def xi}, and that \(\zeta\) is defined in~\eqref{def of zeta(x) for Airy}. It follows from~\cite[Eqs. (4.7), (5.13)]{FW88} that the Bessel-type expansion~\eqref{asymp bessel} extends to arbitrary order \(p\ge0\) in the form
\begin{equation} \label{FW laguerre bessel}
    2^\alpha e^{\nu t/2}L_n^{(\alpha)}(\nu t) = \frac{J_\alpha(\nu\xi(t))}{\xi(t)^\alpha}\sum_{k=0}^{\lfloor (p-1)/2 \rfloor}\beta_{2k}\Big(\frac{2}{\nu}\Big)^{2k} - \frac{J_{\alpha+1}(\nu\xi(t))}{\xi(t)^{\alpha+1}}\sum_{k=0}^{\lfloor p/2-1 \rfloor}\beta_{2k+1}\Big(\frac{2}{\nu}\Big)^{2k+1} + \frac{\textup{env}J_\alpha(\nu t)}{\xi(t)^{\alpha}} O(\nu^{-p}),
\end{equation}
while the Airy-type expansion~\eqref{asymp airy} extends as
\begin{equation} \label{FW laguerre airy}
\begin{split}
    (-1)^n2^\alpha e^{\nu t/2}L_n^{(\alpha)}(\nu t) &= \Ai(\nu^{2/3}\zeta(t))\sum_{k=0}^{\lfloor (p-1)/2 \rfloor}\frac{\gamma_{2k}}{\nu^{2k+1/3}} - \Ai'(\nu^{2/3}\zeta(t))\sum_{k=0}^{\lfloor p/2-1 \rfloor}\frac{\gamma_{2k+1}}{\nu^{2k+5/3}} \\
    &\quad+ \begin{cases} \displaystyle
        \textup{envAi}_{-1/4}(\nu^{2/3}\zeta(t)) O(\nu^{-p-1/3}) &\textup{if }p\textup{ is even}, \smallskip \\
        \textup{envAi}_{1/4}(\nu^{2/3}\zeta(t)) O(\nu^{-p-2/3}) &\textup{if }p\textup{ is odd}.
    \end{cases} 
\end{split}
\end{equation}
Here, $\textup{env}J_\alpha$ and $\textup{envAi}_{p}$ are given by \eqref{def of J envelop} and \eqref{def of envAi}, respectively. 

The coefficients \(\beta_k\) and \(\gamma_k\) can then be computed systematically as follows.  

Fix \(t\in(0,1)\). We first describe the coefficients \(\beta_k\) in \eqref{FW laguerre bessel}. Define
\begin{equation}
    h(u) := \Big(\frac{\sinh z(u)}{u}\Big)^{-\alpha-1}\sqrt{\frac{2u}{z(u)}}\frac{dz}{du},
\end{equation}
where \(z=z(u)\) is determined implicitly by the transformation
\begin{equation}
    z-t\coth z = u-\frac{\xi(t)^2}{u}.
\end{equation}
Next, define recursively
\begin{equation}
    h_0(u) := h(u), \qquad
    h_{n+1}(u) := g_n'(u)-\frac{\alpha+1}{u}g_n(u),
\end{equation}
where
\begin{equation}
\begin{split}
    g_{2n}(u) := u^2\frac{h_{2n}(u)-h_{2n}(i\xi(t))}{u^2-(i\xi(t))^2}, \qquad 
    g_{2n+1}(u) := u\frac{uh_{2n+1}(u)-i\xi(t)h_{2n+1}(i\xi(t))}{u^2-(i\xi(t))^2}.
\end{split}
\end{equation}
Then the coefficients \(\beta_k\) are given by
\begin{equation}
    \beta_{2n} = h_{2n}(i\xi(t)), \qquad \beta_{2n+1} = i\xi(t)h_{2n+1}(i\xi(t)).
\end{equation}
Comparing~\eqref{FW laguerre bessel} with~\eqref{asymp bessel}, we obtain
\begin{equation}
    E_1=-\frac{2\beta_1}{\xi\beta_0}, \qquad E_2=\frac{4\beta_2}{\beta_0}.
\end{equation}
After a straightforward computation, this yields the expressions stated in Lemma~\ref{Lem_Laguerre asymp} (i).

Next, we describe the coefficients \(\gamma_k\) in~\eqref{FW laguerre airy}. Define
\begin{equation}
    \mathsf{h}(u) := \Big(1- \mathsf{z} (u)^2\Big)^{(\alpha-1)/2}\frac{d \mathsf{z} }{du},
\end{equation}
where \(\mathsf{z}=\mathsf{z}(u)\) is determined by
\begin{equation}
    \frac{1}{4}\log\Big(\frac{1+\mathsf{z} }{1-\mathsf{z} }\Big)-\frac{1}{2}\mathsf{z} t = \frac{u^3}{3}-\zeta(t)u.
\end{equation}
We then define recursively
\begin{equation}
     \mathsf{h}_0(u) :=  \mathsf{h}(u), \qquad
     \mathsf{h}_{n+1}(u) := \mathsf{g}_n'(u),
\end{equation}
where
\begin{equation}
\begin{split}
    \mathsf{g}_{2n}(u) := \frac{ \mathsf{h}_{2n}(u)- \mathsf{h}_{2n}(\sqrt{\zeta(t)})}{u^2-\zeta(t)}, \qquad
    \mathsf{g}_{2n+1}(u) := \frac{ u }{u^2-\zeta(t)} \Big( \frac{\mathsf{h}_{2n+1}(u)}{u}- \frac{ \mathsf{h}_{2n+1}(\sqrt{\zeta(t)}) }{ \sqrt{\zeta(t)} } \Big) .
\end{split}
\end{equation}
The coefficients \(\gamma_k\) are then given by
\begin{equation}
    \gamma_{2n} = \mathsf{h}_{2n}(\zeta(t)^{1/2}), \qquad \gamma_{2n+1} = \mathsf{h}_{2n+1}(\zeta(t)^{1/2})/\zeta(t)^{1/2}.
\end{equation}
Comparing \eqref{FW laguerre airy} with \eqref{asymp airy}, one sees that
\begin{equation}
    F_1=-\frac{\gamma_1}{\gamma_0}, \qquad F_2=\frac{\gamma_2}{\gamma_0}.
\end{equation}
A straightforward computation then gives the expressions in Lemma~\ref{Lem_Laguerre asymp} (ii).

\bibliographystyle{abbrv}

\end{document}